\documentclass[11pt,letterpaper]{article}
\usepackage{fullpage}
\usepackage{url}
\usepackage{algorithm}
\usepackage{amsmath}
\usepackage{amssymb}
\usepackage{amsthm}
\usepackage{hyperref}
\usepackage{cmap}
\usepackage{cite}
\usepackage{etex,etoolbox}
\usepackage{amsthm}

\makeatletter
\providecommand{\@fourthoffour}[4]{#4}
\def\fixstatement#1{%
  \AtEndEnvironment{#1}{%
    \xdef\pat@label{\expandafter\expandafter\expandafter
      \@fourthoffour\csname#1\endcsname\space\@currentlabel}}}

\globtoksblk\prooftoks{1000}
\newcounter{proofcount}

\long\def\proofatend#1\endproofatend{%
  \begin{proof}
    See Appendix \ref{proofsappendix}.
  \end{proof}
  \edef\next{\noexpand\begin{proof}[Proof of \pat@label]}%
  \toks\numexpr\prooftoks+\value{proofcount}\relax=\expandafter{\next#1\end{proof}}
  \stepcounter{proofcount}}

\def\printproofs{%
  \count@=\z@
  \loop
    \the\toks\numexpr\prooftoks+\count@\relax
    \ifnum\count@<\value{proofcount}%
    \advance\count@\@ne
  \repeat}
\makeatother

\newcommand{\Qopt}{Q^{\text{opt}}}
\newcommand{\qopt}{q^{\text{opt}}}

\newcommand{\poly}{\operatorname{poly}}

\newcommand{\E}{\mathbb{E}}

\newcommand{\PP}{\mathcal{P}}
\newcommand{\edit}{\Delta_{\text{ed}}}
\newcommand{\ed}{\Delta_{\text{ed}}}
\newcommand{\ham}{\operatorname{Ham}}
\newcommand{\polylog}{\operatorname{polylog}}

\newtheorem{thm}{Theorem}[section]
\newtheorem{defn}[thm]{Definition}
\newtheorem{lem}[thm]{Lemma}
\newtheorem{prop}[thm]{Proposition}

\newtheorem{cor}[thm]{Corollary}

\newtheorem{observation}[thm]{Observation}

\fixstatement{thm}
\fixstatement{lem}
\fixstatement{prop}
\fixstatement{cor}

\theoremstyle{definition}

\theoremstyle{remark}
\newtheorem{rem}[thm]{Remark}
\newtheorem{ex}[thm]{Example}

\theoremstyle{remark}
\numberwithin{equation}{section}

\begin{document}
\title{On Estimating Edit Distance: Alignment, Dimension Reduction, and Embeddings}
\author{Moses Charikar\thanks{Supported by NSF grant CCF-1617577 and a Simons Investigator Award.} \\
Stanford University\\
moses@cs.stanford.edu
\and
Ofir Geri\thanks{Supported by NSF grant CCF-1617577, a Simons Investigator Award, and the Google Graduate Fellowship in Computer Science in the School of Engineering at Stanford University.} \\
Stanford University\\
ofirgeri@cs.stanford.edu
\and
Michael P. Kim\thanks{Supported by NSF grant CCF-1763299.} \\
Stanford University\\
mpk@cs.stanford.edu
\and
William Kuszmaul\\
Stanford University\\
kuszmaul@cs.stanford.edu
}
\date{}
\maketitle

\begin{abstract}
Edit distance is a fundamental measure of distance between strings and has been widely studied in computer science. While the problem of estimating edit distance has been studied extensively, the equally important question of actually producing an alignment (i.e., the sequence of edits) has received far less attention. Somewhat surprisingly, we show that any algorithm to estimate edit distance can be used in a black-box fashion to produce an approximate alignment of strings, with modest loss in approximation factor and small loss in run time. Plugging in the result of Andoni, Krauthgamer, and Onak, we obtain an alignment that is a $(\log n)^{O(1/\varepsilon^2)}$ approximation in time $\tilde{O}(n^{1 + \varepsilon})$.

Closely related to the study of approximation algorithms is the study
of metric embeddings for edit distance. We show that min-hash techniques can be
useful in designing edit distance embeddings through three results:
(1) An embedding from Ulam distance (edit
distance over permutations) to Hamming space that matches the best known distortion of
$O(\log n)$ and also implicitly encodes a sequence of edits between
the strings; (2) In the case where the edit distance between the input strings is known to have an upper bound $K$, we show that embeddings of edit distance into Hamming space with distortion $f(n)$ can be modified in a black-box fashion to give distortion $O(f(\operatorname{poly}(K)))$ for a class of periodic-free strings; (3) A randomized dimension-reduction map with
contraction $c$ and asymptotically optimal expected distortion $O(c)$,
improving on the previous $\tilde{O}(c^{1 + 2 / \log \log \log n})$
distortion result of Batu, Ergun, and Sahinalp.
\end{abstract}

\section{Introduction}

The \emph{edit distance} $\ed(x, y)$ between two strings $x$ and $y$ is the
minimum number of character insertions, deletions, and substitutions
needed to transform $x$ into $y$.  This is a fundamental distance
measure on strings, extensively studied in computer science
\cite{WagnerF74,Embedding,ApproxSubPolyDistortion,ApproxPolyLog,UlamEmbedding,LowerBoundfirst,LowerBoundsecond,chakraborty2016streaming,backurs2015edit}.
Edit distance has applications in areas including computational
biology, signal processing, handwriting recognition, and image
compression \cite{navarro2001guided}. One of its oldest and most
important uses is as a tool for comparing differences between genetic
sequences \cite{navarro2001guided, blast, needleman1970general}.

The textbook dynamic-programming algorithm for edit distance runs in
time $O(n^2)$ \cite{WagnerF74, needleman1970general,
  vintsyuk1968speech}, and can be leveraged to recover a sequence of
edits, also known as an \emph{alignment}. The quadratic run time is
prohibitively large for massive datasets (e.g., genomic data), and
conditional lower bounds suggest that no strongly subquadratic time
algorithm exists \cite{backurs2015edit}.

The difficulty of computing edit distance has motivated the
development of fast heuristics \cite{gusfield1997algorithms,
  navarro2001guided, heuristicoriginal, blast}. On the theoretical
side, the tradeoff between run time and approximation factor (or
distortion) is an important question (see
\cite[Section~6]{indyk2001algorithmic}, and
\cite[Section~8.3.2]{indyk20048}). Andoni and Onak
\cite{ApproxSubPolyDistortion} (building on beautiful work of
Ostrovsky and Rabani \cite{Embedding}) gave an algorithm that
estimates edit distance within a factor of $2^{\tilde{O}(\sqrt{\log
    n})}$ in time $n^{1 + o(1)}$.  The current best known tradeoff was
obtained by Andoni, Krauthgamer and Onak \cite{ApproxPolyLog}, who
gave an algorithm that estimates edit distance to within factor $(\log
n)^{O(1/\varepsilon)}$ with run time $O(n^{1 + \varepsilon})$.

\paragraph*{Alignment Recovery}
While these algorithms produce estimates of edit distance, they do
not produce an alignment between strings (i.e., a sequence of
edits). By decoupling the problem of numerical estimation from the
problem of alignment recovery, the authors of
\cite{ApproxSubPolyDistortion} and \cite{ApproxPolyLog} are able to
exploit techniques such as metric space embeddings\footnote{The
  algorithm of \cite{ApproxSubPolyDistortion} has a recursive
  structure in which at each level of recursion, every substring
  $\alpha$ of some length $l$ is assigned a vector $v_\alpha$ such
  that the $\ell_1$ distance between vectors closely approximates edit
  distance between substrings. The vectors at each level of recursion
  are constructed from the vectors in lower levels through a series of
  procedures culminating in an application of Bourgain's embedding to
  a sparse graph metric. As a result, although the distances between
  vectors in the top level of recursion allow for a numerical
  estimation of edit distance, it is not immediately clear how one might attempt
  to extract additional information from the vectors in order to
  recover an alignment.
} and random sampling in order to obtain better approximations.
The algorithm of \cite{ApproxPolyLog} runs in phases, with the $i$-th phase
distinguishing between whether $\ed(x, y)$ is greater than or
significantly smaller than $\frac{n}{2^i}$. At the beginning of each
phase, a nuanced random process is used to select a small fraction of
the positions in $x$, and then the entire phase is performed while
examining only those positions. In total, the full algorithm samples
an $\tilde{O}\left(\frac{n^{\varepsilon}}{\ed(x, y)}\right)$-fraction of the letters in
$x$. Note that this is a polynomially small fraction as we are interested in the case where $\ed(x,y) > n^{1/2}$ (if the edit distance is small, we can run the algorithm of Landau et al.~\cite{landau1998incremental} in linear time).
Given that the algorithm only views a small portion of the
positions in $x$, it is not clear how to recover a global alignment
between the two strings.

We show, somewhat surprisingly, that any edit distance estimator can
be turned into an approximate aligner in a black box fashion with
modest loss in approximation factor and small loss in run time.  For
example, plugging the result of \cite{ApproxPolyLog} into our
framework, we get an algorithm with distortion $(\log
n)^{O(1/\varepsilon^2)}$ and run time $\tilde{O}(n^{1 + \varepsilon})$.  To the best
of our knowledge, the best previous result that gave an approximate
alignment was the work of Batu, Ergun,
and Sahinalp \cite{DimensionReduction}, which has distortion that is
polynomial in $n$.

\paragraph*{Embeddings of Edit Distance Using Min-Hash Techniques}
The study of approximation algorithms for edit distance closely
relates to the study of embeddings \cite{Embedding,
  ApproxSubPolyDistortion, DimensionReduction}. An \emph{embedding}
from a metric space $M_1$ to a metric space $M_2$ is a map
of points in $M_1$ to $M_2$ such that distances are preserved up to some
factor $D$, known as the \emph{distortion}. Loosely speaking,
low-distortion embeddings from a complex metric space $M_1$ to a
simpler metric space $M_2$ allow algorithm designers to focus on
the simpler metric space, rather than directly handling the more
complex one. Embeddings from edit distance to Hamming space have been
widely studied \cite{UlamEmbedding, Embedding,
  chakraborty2016streaming, belazzougui2016edit} and have played
pivotal roles in the development of approximation algorithms
\cite{ApproxSubPolyDistortion} and streaming algorithms
\cite{chakraborty2016streaming, belazzougui2016edit}.

The second contribution of this paper is to introduce new algorithms
for three problems related to embeddings for edit distance. The
algorithms are unified by the use of min-hash techniques to select
pivots in strings. We find this technique to be particularly useful
for edit distance because hashing the content of strings allows us to
split strings in places where their content is aligned, thereby
getting around the problem of insertions and deletions misaligning the
strings. In several of our results, this allows us to obtain
algorithms which are either more intuitive or simpler than their
predecessors. The three results are summarized below.

\textbf{Efficiently Embedding the Ulam Metric into Hamming Space: }For the
special case of the Ulam metric (edit distance on permutations), we
present a randomized embedding $\phi$ of permutations of size $n$ to
$\poly(n)$-dimensional Hamming space with distortion $O(\log n)$.  Given strings $x$ and $y$,
the Hamming differences between $\phi(x)$ and $\phi(y)$ not only
approximate the edit distance between $x$ and $y$, but also implicitly
encode a sequence of edits from $x$ to $y$. If the output string of
our embedding is stored using a sparse vector representation, then the
embedding can be computed in linear time, and its output can be stored
in linear space. The logarithmic distortion of our embedding matches
that of Charikar and Krauthgamer's embedding into
$\ell_1$-space \cite{UlamEmbedding}, which did not encode the actual edits and needed
quadratic time and number of dimensions. Our embedding also supports
efficient updates, and can be modified to reflect an edit in expected
time $O(\log n)$ (as opposed to the deterministic linear time required
by \cite{UlamEmbedding}).

\textbf{Embedding Edit Distance in the Low-Distance Regime: }Recently,
there has been considerable attention devoted to edit distance in the
low-distance regime
\cite{chakraborty2016streaming,belazzougui2016edit}.  In this regime, we are interested in finding algorithms that run faster or perform better given the promise that the edit distance between the input strings is small. This regime is
of considerable interest from the practical point of view.  Landau,
Myers and Schmidt \cite{landau1998incremental} gave an exact algorithm
for strings with edit distance $K$ that runs in time $O(n+K^2)$.  Recently,
Chakraborty, Goldenberg and Kouck{\`y} \cite{chakraborty2016streaming}
gave a randomized embedding of edit distance into Hamming space that has distortion linear in the edit distance with probability at least
$2/3$.

Given an embedding with distortion $\gamma(n)$ (a function of the
input size), could one obtain an embedding whose distortion is a
function of $K$, the edit distance, instead of $n$?  We answer this
question in the affirmative for the class of $(D,R)$-periodic free
strings. We say that a string is \emph{$(D, R)$}-periodic free if none
of its substrings of length $D$ are periodic with period of at most
$R$.  For $D \in \poly(K)$ and $R=O(K^3)$, we show that the embedding
of Ostrovksy and Rabani \cite{Embedding} can be used in a black-box
fashion to obtain an embedding with distortion $2^{O\left(\sqrt{\log K
    \log \log K}\right)}$ for $(D, R)$-periodic free strings with edit
distance of at most $K$. Our result can be seen as building on the
min-hash techniques of \cite[Section~3.5]{UlamEmbedding} (which in
turn extends ideas from \cite{bar2004approximating}). The authors of
\cite{UlamEmbedding} give an embedding for $(t, 180tK)$-non-repetitive
strings with distortion $O(t\log (tK))$ \cite{UlamEmbedding}.  The key
difference is that our notion of $(D,R)$-periodic free is much less
restrictive than the notion of non-repetitive strings studied in
\cite{UlamEmbedding}.

\textbf{Optimal Dimension Reduction for Edit Distance: }The
aforementioned work of Batu et al.~\cite{DimensionReduction} introduced and studied an interesting notion
of dimension reduction for edit distance: An embedding of edit distance
on length-$n$ strings to edit distance on length-$n/c$ strings (with
larger alphabet size) is called a \emph{dimension-reduction map} with
\emph{contraction} $c$. By first performing dimension reduction, one
can then apply inefficient algorithms to the contracted strings at a
relatively small overall cost. This idea was used in
\cite{DimensionReduction} to design an approximation algorithm with
approximation factor $O(n^{1/3 + o(1)})$. We provide a dimension-reduction map with
contraction $c$ and asymptotically optimal expected distortion
$O(c)$, improving on the distortion of $\tilde{O}(c^{1+2/\log \log
  \log n})$ obtained by the deterministic map of
\cite{DimensionReduction}.\footnote{When comparing these
  distortions, one should note that $2/\log \log \log n$ goes to zero
  very slowly; in particular, $c^{1+2/\log \log \log n} \ge c^{1.66}$
  for all $n \le 10^{82}$, the number of atoms in the universe.}

\section{Preliminaries}\label{secpreliminaries}

Throughout the paper, we will use $\Sigma$ to denote an
alphabet,\footnote{We assume that characters in $\Sigma$ can be
  represented in $\Theta(\log n)$ bits, where $n$ is the size of input
  strings.} and $\Sigma^n$ to denote the set of words of length $n$
over that alphabet. Additionally, we use $\PP_n$ to denote the set of
permutations of length $n$ over $\Sigma$, or equivalently, the subset
of $\Sigma^n$ containing words whose letters are distinct. When
referring to strings and permutations of length at most $n$, we use
$\Sigma^{\le n}$ and $\PP_{\le n}$ respectively. Given a string $w$ of length $n$, we denote its letters by $w_1, w_2,
\ldots, w_n$, and we use $w[i:j]$ to denote the substring $w_iw_{i +
  1} \cdots w_j$ (which is empty if $j < i$).

An \emph{edit operation} is either an insertion of a letter into a
word, a deletion of a letter from a word, or a substitution of a
letter with another letter. Given words $x$ and $y$, an
\emph{alignment} from $x$ to $y$ is a sequence of edits transforming
$x$ to $y$. The \emph{edit distance} $\ed(x, y)$ is the minimum number
number of edits needed to transform $x$ to $y$. Alternatively, it is
the length of an optimal alignment.

The following definition and lemma concerning periodic strings will
often be useful.
\begin{defn}
A string $w$ is \emph{periodic} with \emph{period} $p$ (or
\emph{$p$-periodic}) if $w_i = w_{i + p}$ for all $i \in \{1, \ldots,
|w| - p\}$.
\end{defn}

\begin{lem}
  Suppose $w \in \Sigma^n$ is $p$-periodic and $q$-periodic,
  with $n \ge p + q$. Then $w$ is $\gcd(p,
  q)$-periodic.
  \label{lemgcdperiodic}
\end{lem}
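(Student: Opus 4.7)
The plan is to prove this by strong induction on $p + q$, essentially mirroring the Euclidean algorithm on the two periods. This is the classical Fine--Wilf theorem.

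For the base case, suppose $p = q$. Then $\gcd(p,q) = p$ and there is nothing to prove. For the inductive step, assume without loss of generality that $p > q$. I would first establish that $w$ is $(p-q)$-periodic, and then invoke the inductive hypothesis applied to the pair $(q, p-q)$, noting that $\gcd(q, p-q) = \gcd(p,q)$ and that $|w| \ge p + q \ge q + (p-q)$, so the hypothesis applies.

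The heart of the argument is therefore to show $w_i = w_{i + (p-q)}$ for every $1 \le i \le |w| - (p-q)$. I would split into two cases depending on whether $i + p \le |w|$. If $i + p \le |w|$, then by $p$-periodicity $w_i = w_{i+p}$, and by $q$-periodicity $w_{i+p} = w_{(i+p)-q} = w_{i+(p-q)}$, where the index $i + p - q$ lies in $\{1, \dots, |w|\}$ because $p > q$. Otherwise $i + p > |w|$, which together with $|w| \ge p + q$ gives $i > q$, so $i - q \ge 1$; then by $q$-periodicity $w_i = w_{i-q}$, and by $p$-periodicity $w_{i-q} = w_{i-q+p} = w_{i+(p-q)}$, which is valid since $i + (p-q) \le |w|$ by assumption.

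I expect the main subtlety, rather than an obstacle, is the bookkeeping in the second case: one needs the hypothesis $|w| \ge p + q$ precisely here, to guarantee that $i > q$ and hence that the $q$-periodicity step can be taken \emph{backwards} from position $i$. This is exactly the place where the Fine--Wilf bound is tight, which is reassuring that the case analysis is unavoidable. Everything else reduces to mechanical application of the two periodicity relations and the Euclidean-style induction.
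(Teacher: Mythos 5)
Your proof is correct, but it takes a different route from the paper's. You run a subtractive-Euclidean induction on $p+q$: assuming WLOG $p>q$, you show directly that the whole word is $(p-q)$-periodic (with the two-case index bookkeeping, which checks out — in Case 1 the application of $q$-periodicity at position $i+p-q$ is legitimate precisely because $i+p\le |w|$, and in Case 2 the hypothesis $|w|\ge p+q$ gives $i>q$ so you can step backwards by $q$), and then recurse on the pair $(q,p-q)$, whose sum is strictly smaller and whose length requirement $|w|\ge p$ is implied by $|w|\ge p+q$. The paper instead avoids induction entirely: taking WLOG $p\le q$, it uses $q$-periodicity to show that within the first $p$ letters a shift by $q \bmod p$ preserves values, invokes B\'ezout's identity $\gcd(p,q)=ap+bq$ to iterate this cyclic shift into a shift by $\gcd(p,q) \bmod p$, concludes that the length-$p$ prefix is $\gcd(p,q)$-periodic, and finally propagates this to all of $w$ using $p$-periodicity. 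Your argument is more elementary (no B\'ezout, no modular arithmetic) and makes transparent exactly where the hypothesis $|w|\ge p+q$ is consumed, at the cost of an induction and a case split; the paper's argument is a single-shot algebraic step but works only on the prefix and needs the (slightly glossed) final propagation step from the prefix to the full word. Both establish the same non-tight Fine--Wilf-type bound.
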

\proofatend
  Without loss of generality, assume $p \leq q$. For convenience, denote the letters of $w$ by $w_0, \ldots, w_{n - 1}$. Because $w$ is $q$-periodic and is of length at least $p + q$, we
  have that $w_i = w_{i + q}$ for all $i \in \{0, \ldots, p - 1\}$. It
  follows from the fact that $w$ is $p$-periodic that $w_i = w_{i + q}
  = w_{{(i + q)} \bmod p}$ for all $i \in \{0, \ldots, p - 1\}$. By
  the extended Euclidean algorithm, there exist integers $a$ and $b$ such that
  $\gcd(p, q) = ap + bq$, and thus $\gcd(p, q) \equiv bq \pmod
  p$. Repeatedly using the fact that $w_{i \bmod p} = w_i = w_{(i + q) \bmod p}$ for
  all $i \in \{0, \ldots, p - 1\}$, it follows that $w_i = w_{(i + bq)
    \bmod p} = w_{(i + \gcd(p, q)) \bmod p}$, and thus that the first $p$ letters of $w$ are $\gcd(p,
  q)$-periodic. Since $w$ is $p$-periodic, and the first $p$ letters
  of $w$ are $\gcd(p, q)$-periodic, we get that all of $w$ is $\gcd(p,
  w)$-periodic as well.
\endproofatend

A family $\mathcal{H}$ of hash functions from $S$ to $T$ is said to be $n$-wise independent for every distinct $x_1,\ldots,x_n \in S$, and every $y_1,\ldots,y_n \in T$ (not necessarily distinct), $\Pr[x_1 = y_1, \ldots, x_n = y_n] = \frac{1}{|T|^n}$.

Several of our algorithms assume access to a family of hash function
$\mathcal{H}$ from $\Theta(\log n)$ bits to $\Theta(\log n)$ bits such
that $h \in \mathcal{H}$ is $n$-wise independent and can be evaluated
in constant time. One can simulate this using the randomly generated hash family
$\mathcal{G}$ by Pagh and Pagh \cite{amazinghash} which is with high probability independent on any given set of $n$
elements $S$; although it
is a slight abuse of notation, we often refer to $\mathcal{G}$ as
being \emph{$n$-wise independent with high probability}. The family
$\mathcal{G}$ requires $O(n)$ preprocessing time and $O(n \log n)$ random bits,
but allows for constant-time evaluation.\footnote{Note that we use the
  model in which $O(\log n)$ random bits can be generated in constant
  time.}

Finally, when we have a pairwise independent hash function $h$, it will often
be useful to treat it as though whenever $x \neq y$, $\Pr[h(x) = h(y)]
= 0$. This is called the \emph{no-collisions assumption}, and the
following lemma shows that for a polynomial-time algorithm we can
assume it with high probability.
\begin{lem}
Let $S$ be a set of objects with $|S| \in \poly(n)$ and pick any $p
\in \poly(n)$. Then there exists a sufficiently large $b \in
\Theta(\log n)$ such that the following holds. If we select $h$ at
random from a pairwise independent family of hash functions mapping
$S$ to $b$ bits, then $h$ has probability at least $1 -
\frac{1}{p(n)}$ of being injective on $S$. That is, the no-collisions
assumption holds with probability at least $1 - \frac{1}{p(n)}$.
\label{lemnocollisions}
\end{lem}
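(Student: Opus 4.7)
The plan is a standard union bound over all pairs of elements in $S$, using pairwise independence to control the collision probability of any single pair.

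First I would fix $x, y \in S$ with $x \neq y$. By pairwise independence of the hash family mapping $S$ to $b$ bits, we have $\Pr[h(x) = h(y)] = 2^{-b}$. The number of unordered pairs in $S$ is at most $|S|^2 / 2$, so a union bound gives
\[
\Pr[h \text{ is not injective on } S] \;\le\; \binom{|S|}{2} \cdot 2^{-b} \;\le\; \frac{|S|^2}{2 \cdot 2^b}.
\]
Since $|S| \le n^{c_1}$ and $p(n) \le n^{c_2}$ for some constants $c_1, c_2$ (as both are polynomially bounded), it suffices to choose $b$ so that $2^b \ge |S|^2 \cdot p(n) / 2$. Taking $b = \lceil (2c_1 + c_2 + 1) \log_2 n \rceil$, which is $\Theta(\log n)$, makes the right-hand side at most $1/p(n)$, and the claim follows.

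There is no real obstacle here; the only thing to note is that the constant inside the $\Theta(\log n)$ depends on the polynomial degrees of $|S|$ and $p(n)$, so ``sufficiently large'' in the statement absorbs those degrees. This lets us henceforth treat a $b$-bit pairwise-independent hash as collision-free on any polynomial-sized set with high probability, justifying the no-collisions assumption used later.
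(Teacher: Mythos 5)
Your proposal is correct and follows essentially the same argument as the paper: bound the collision probability of a single pair by $2^{-b}$ using pairwise independence, union bound over the at most $|S|^2$ pairs, and choose $b \ge \log(|S|^2 p(n)) \in \Theta(\log n)$. The only cosmetic difference is that you make the polynomial degrees explicit in the choice of $b$, which the paper leaves implicit.
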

\proofatend
Select $b \ge \log (|S|^2p) \in \Theta(\log n)$. Then for distinct $x,
y \in S$, $\Pr[h(x) = h(y)] \le \frac{1}{|S|^2p(n)}$. Applying the union
bound over all pairs $x, y \in S$, it follows that there is a
collision with probability at most $\frac{1}{p(n)}$.
\endproofatend

\section{Alignment Recovery Using a Black-Box Approximation Algorithm}\label{secblackbox}
In this section, we show how to transform a black-box algorithm
$\mathcal{A}$ for approximating edit distance into an algorithm
$\mathcal{B}$ (with different approximation ratio and run time) for
finding an approximately optimal alignment between strings.  The
algorithm $\mathcal{B}$ appears here as
Algorithm~\ref{alg:blackboxnew}. In the description of the algorithm, we
rely on the following definition of a partition.

\begin{defn}
  A \emph{partition} of a string $u$ into $m$ parts is a tuple $P =
  (p_0, p_1, p_2, \ldots, p_m)$ such that $p_0 = 0$, $p_m = |u|$, and
  $p_0 \le p_1 \le \cdots \le p_m$. For $i \in \{1, \ldots, m\}$, the
  \emph{$i$-th part of the partition $P$} is the subword $P_i :=
  u[p_{i - 1} + 1: p_i]$, which is considered to be empty if $p_i =
  p_{i - 1}$. A partition of a string $u$ into $m$ parts is an
  \emph{equipartition} if each of the parts is of size either $\lfloor
  |u|/m \rfloor$ or $\lceil |u|/m \rceil$.
\end{defn}

\begin{algorithm}
\caption{Black-Box Approximate Alignment Algorithm \label{alg:blackboxnew}}
Input: Strings $u, v$ with $|u| + |v| \le n$.

Parameters: $m \in \mathbb{N}$ satisfying $m \ge 2$ and an
approximation algorithm $\mathcal{A}$ for edit distance.

\begin{enumerate}
\item If $|u| \le 1$, then find an optimal alignment in time $O(|v|)$ naively.
\item Let $P = (p_0, p_1, \ldots, p_m)$ be an equipartition of $u$.
\item Let $S$ consist of the positions in $v$ which can be reached by
  adding or subtracting a power of $(1 + \frac{1}{m})$ to some
  $p_i$. Formally, define
  $$S = \left(\{p_0, \ldots, p_m\} \cup \{|v|\} \cup \left\{\left\lceil p_i \pm \left(1 +
  \frac{1}{m}\right)^j \right\rceil \mid i, j \ge 0\right\} \right)
  \cap \{0, \ldots, |v|\}.$$
\item Using dynamic programming, find a partition $Q=(q_0,\ldots,q_m)$ of
  $v$ such that each $q_i$ is in $S$, and such that the cost
  $\sum_{i=1}^{m}{\mathcal{A}(P_i,Q_i)}$ is minimized:
\begin{enumerate}
\item For $l \in S$, let $f(l,j)$ be the subproblem of returning a
  choice of $q_0, q_1, \ldots, q_{j}$ with $q_j = l$ which minimizes
  $\sum_{i = 1}^j \mathcal{A}(P_i, Q_i).$
\item Solve each $f(l, j)$ by examining precomputed answers for each
  of the subproblems of the form $f(l', j - 1)$ with $l' \le l \in S$:
  if $f(l', j - 1)$ gives a choice of $q_0, q_1, \ldots, q_{j - 1}$
  with $\sum_{i = 1}^{j - 1} \mathcal{A}(P_i, Q_i) = t$, then we can
  set $q_j = l$ to get $\sum_{i = 1}^j \mathcal{A}(P_i, Q_i) = t +
  \mathcal{A}(P_j, v[l' + 1: l])$. \\ (Here, $ \mathcal{A}(P_j, v[l' + 1:
    l])$ is computed using $\mathcal{A}$.)
\end{enumerate}
\item Recurse on each pair $(P_i,Q_i)$. Combine the resulting
  alignments between each $P_i$ and $Q_i$ to obtain an alignment
  between $u$ and $v$.
\end{enumerate}
\end{algorithm}

Formally, we assume that the approximation algorithm $\mathcal{A}$ has
the following properties:
\begin{enumerate}
\item There is some non-decreasing function $\gamma$ such
  that for all $n > 0$, and for any two strings $u, v$ with $|u| + |v|
  \le n$,
$$\edit(u, v) \le \mathcal{A}(u, v) \le \gamma(n) \cdot \edit(u, v)$$
\item $\mathcal{A}(u, v)$ runs in time at most $T(n)$ for some
  non-decreasing function $T$ which is super-additive in the sense that
  $T(j) + T(k) \le T(j + k)$ for $j, k \ge 0$.
\end{enumerate}

We are now ready to state the main theorem of this section.
\begin{thm}\label{thm:blackbox-main}
For all $u, v$ with $|u| + |v| \le n$ and $m \geq 2$, Algorithm~\ref{alg:blackboxnew}
outputs an alignment from $u$ to $v$ such that the number of edits is
at most $(3 \gamma(n))^{O(\log_m n)} \cdot \ed(u,v)$. Moreover, the
algorithm runs in time $\tilde{O}(m^5 \cdot T(n))$.
\end{thm}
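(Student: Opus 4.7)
The plan is to induct on $|u|$, which shrinks by a factor of at least $m$ at each recursive call (since $|P_i| \le \lceil |u|/m \rceil$), giving recursion depth $O(\log_m n)$ before the base case $|u| \le 1$ is reached. Throughout the recursion, the parameter $n$ — an upper bound on $|u|+|v|$ — never increases, so $\gamma(n)$ is a uniform bound at all levels.

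For the run-time bound, observe that at any fixed level of recursion the total lengths of all $u$-subproblems (and separately of all $v$-subproblems) sum to at most $n$. Within a single subproblem of total size $n'$, Step~4 performs one evaluation of $\mathcal{A}$ for each triple $(j,l',l)$ with $j \in \{1,\ldots,m\}$ and $l',l \in S$. Since $|S| = O(m^2 \log n)$ — each of the $m+1$ pivots contributes $O(m \log n)$ powers of $(1+1/m)$ that are at most $n$ — each subproblem makes $\tilde O(m^5)$ calls to $\mathcal{A}$, each of cost at most $T(n')$. Summing across all subproblems at one level and invoking super-additivity of $T$ bounds the per-level work by $\tilde O(m^5) \cdot \sum_i T(n'_i) \le \tilde O(m^5 \cdot T(n))$. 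Multiplying by the $O(\log_m n)$ recursion levels and absorbing the logarithmic factor into $\tilde O$ yields the stated $\tilde O(m^5 \cdot T(n))$.

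The heart of the approximation analysis is a rounding lemma. Let $p^*_0 = 0,\, p^*_1,\ldots,p^*_{m-1},\, p^*_m = |v|$ denote the positions in $v$ to which the partition points $p_0,\ldots,p_m$ of $u$ are matched by some fixed optimal alignment from $u$ to $v$, so that $\sum_i \ed(P_i,\,v[p^*_{i-1}+1:p^*_i]) = \ed(u,v)$. I claim that for each $i$ there is some $\hat q_i \in S$ with $|\hat q_i - p^*_i| \le |p_i - p^*_i|/m$: consecutive powers of $(1+1/m)$ approximate any target value $\delta \ge 1$ multiplicatively within a factor $1+1/m$, hence additively within $\delta/m$; for $|p_i - p^*_i| < 1$ one simply takes $\hat q_i = p_i \in S$, and the boundary choices $\hat q_0 = 0$, $\hat q_m = |v|$ are already in $S$ by construction. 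Since $|p_i - p^*_i| \le \ed(u,v)$ for every $i$, the triangle inequality for edit distance gives
\[
\sum_{i=1}^{m} \ed\bigl(P_i,\,v[\hat q_{i-1}+1:\hat q_i]\bigr) \;\le\; \ed(u,v) + \sum_{i=1}^m \frac{|p_{i-1} - p^*_{i-1}| + |p_i - p^*_i|}{m} \;\le\; 3\,\ed(u,v).
\]

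This closes the recursion. The dynamic program in Step~4 chooses a partition $Q$ minimizing $\sum_i \mathcal{A}(P_i,Q_i)$, so
\[
\sum_i \mathcal{A}(P_i,Q_i) \;\le\; \sum_i \mathcal{A}\bigl(P_i,\,v[\hat q_{i-1}+1:\hat q_i]\bigr) \;\le\; \gamma(n) \cdot \sum_i \ed\bigl(P_i,\,v[\hat q_{i-1}+1:\hat q_i]\bigr) \;\le\; 3\gamma(n)\,\ed(u,v).
\]
Applying the inductive hypothesis to each $(P_i,Q_i)$, whose $u$-side has length at most $\lceil |u|/m \rceil$, bounds the produced alignment by $(3\gamma(n))^{O(\log_m(|u|/m))} \cdot \ed(P_i,Q_i) \le (3\gamma(n))^{O(\log_m(|u|/m))} \cdot \mathcal{A}(P_i,Q_i)$; summing gives an overall approximation factor of $(3\gamma(n))^{O(\log_m |u|)} \le (3\gamma(n))^{O(\log_m n)}$, as required. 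The main obstacle is a careful statement of the rounding lemma — in particular, handling the ceilings in the definition of $S$ and verifying that the $\hat q_i$ can be chosen monotonically so as to define a valid partition of $v$.
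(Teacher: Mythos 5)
Your proposal is correct and takes essentially the same route as the paper's proof: the same set $S$, the same per-level loss of $3\gamma(n)$ obtained by combining the rounding bound $|\hat q_i - p^*_i| \le |p_i - p^*_i|/m$ with the triangle inequality and the length-difference bound $|p_i - p^*_i| \le \ed(u,v)$, the same induction across $O(\log_m n)$ recursion levels, and the same $\tilde{O}(m^5 \cdot T(n))$ counting via $|S| = O(m^2 \log n)$ and super-additivity of $T$. The monotonicity issue you flag is resolved in the paper in one line: define $\hat q_i$ to be the largest element of $S$ not exceeding $p^*_i$, which is automatically non-decreasing (and equals $|v|$ at $i=m$, $0$ at $i=0$) and still satisfies the rounding bound because the relevant rounded value $\lceil p_i \pm (1+\frac{1}{m})^j \rceil$ lies in $S$ and does not exceed $p^*_i$.
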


Before continuing, we provide a brief discussion of Algorithm
\ref{alg:blackboxnew}. The algorithm first breaks $u$ into a partition
$P$ of $m$ equal parts. It then uses the black-box algorithm
$\mathcal{A}$ to search for a partition $Q$ of $v$ such that $\sum_i
\ed(P_i, Q_i)$ is near minimal; after finding such a $Q$, the
algorithm recurses to find approximate alignments between $P_i$ and
$Q_i$ for each $i$. Rather than considering every option for the
partition $Q = (q_0, \ldots, q_m)$, the algorithm limits itself to
those for which each $q_i$ comes from a relatively small set $S$.

The set $S$ is carefully designed so that although it is small (which
enables good run time), any optimal partition $\Qopt$ of $v$ can be in
some sense well approximated by some partition $Q$ using only $q_i$
values from $S$. The result is that the multiplicative error
introduced at each level of recursion will be bounded by $3\gamma(n)$;
across the $O(\log_m n)$ level of recursion, the total multiplicative
error will then be $3\gamma(n)^{O(\log_m n)}$. The fact that the
recursion depth appears in the exponent of the multiplicative error is
why we partition $u$ and $v$ into many parts at each level.

Next we discuss several implications of Theorem
\ref{thm:blackbox-main}. The parameter $m$ allows us to trade off the
approximation factor and the run time of the algorithm. When taken to
the extreme, this gives two particularly interesting results.

\begin{cor}
Let $0 < \varepsilon < 1$ (not necessarily constant). Then $m$ can be
chosen so that Algorithm \ref{alg:blackboxnew} has approximation ratio
$(3\gamma(n))^{O(\frac{1}{\varepsilon})}$ and run time
$\tilde{O}\left(T(n) \cdot n^{\varepsilon}\right)$.
\label{corsmalldistortion}
\end{cor}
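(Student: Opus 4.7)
The plan is to choose the partition parameter $m$ as a small polynomial in $n$, tuned by $\varepsilon$, and then substitute into the bounds of Theorem~\ref{thm:blackbox-main}. Concretely, I would set
\[ m = \max\!\left(2,\ \lceil n^{\varepsilon/5} \rceil\right). \]
The factor $5$ in the exponent is chosen to cancel against the $m^5$ appearing in the run time bound. Since $m \ge 2$, Theorem~\ref{thm:blackbox-main} applies directly, so the only thing to verify is that each of the two bounds in the theorem collapses to what the corollary claims under this choice of $m$.

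For the approximation ratio, I would bound $\log_m n$ from above. When $\lceil n^{\varepsilon/5}\rceil \ge 2$, we have $\log m \ge (\varepsilon/5)\log n$, so $\log_m n = \log n/\log m \le 5/\varepsilon$. In the (small-$n$) corner case where $\lceil n^{\varepsilon/5}\rceil < 2$, we instead use $m=2$; but then $n^{\varepsilon/5} < 2$ implies $\log n < (5\log 2)/\varepsilon$, so $\log_m n = \log_2 n = O(1/\varepsilon)$ still holds. Either way, $\log_m n = O(1/\varepsilon)$, and plugging this into the theorem's distortion bound gives $(3\gamma(n))^{O(\log_m n)} = (3\gamma(n))^{O(1/\varepsilon)}$.

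For the run time, I would just bound $m^5$ directly: $m^5 \le (1+n^{\varepsilon/5})^5 = O(n^\varepsilon)$ when $n^{\varepsilon/5} \ge 1$, and in the degenerate case $m=2$ we have $m^5 = 32 = O(n^\varepsilon)$ trivially. Thus the theorem's $\tilde{O}(m^5 \cdot T(n))$ bound becomes $\tilde{O}(T(n) \cdot n^\varepsilon)$, as required.

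There is no real obstacle here; the main thing to be careful about is that $\varepsilon$ is allowed to depend on $n$, so the bound $\log_m n = O(1/\varepsilon)$ must be checked uniformly (not just for constant $\varepsilon$), and the case $m=2$ must be handled so the choice of $m$ is always a valid parameter of Algorithm~\ref{alg:blackboxnew}. Both are handled by the $\max(2,\cdot)$ above, and the argument goes through without any hidden dependence on $\varepsilon$ in the $O(\cdot)$ constants.
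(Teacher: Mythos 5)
Your proposal is correct and follows the same route as the paper: the paper simply sets $m = n^{\varepsilon/5}$ and substitutes into Theorem~\ref{thm:blackbox-main}, obtaining $(3\gamma(n))^{O(\log_m n)} = (3\gamma(n))^{O(1/\varepsilon)}$ and $\tilde{O}(m^5 \cdot T(n)) = \tilde{O}(T(n) \cdot n^{\varepsilon})$. Your extra care with the $\max(2,\lceil \cdot \rceil)$ rounding and the small-$n$ corner case is a minor tightening the paper glosses over, not a different argument.
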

\begin{proof}
Set $m = n^{\varepsilon/5}$. Then the approximation ratio of Algorithm~\ref{alg:blackboxnew} is
$(3 \gamma(n))^{O(\log_m n)} = (3 \gamma(n))^{O(\frac{1}{\varepsilon})}$
and the run time on $u, v$ with $|u| + |v| \le n$ is
$\tilde{O}(m^5 \cdot T(n)) = \tilde{O}\left(T(n) \cdot n^{\varepsilon}\right).$
\end{proof}

\begin{cor}
Let $0 < \varepsilon < 1$ (not necessarily constant). Then $m$ can be chosen so that Algorithm \ref{alg:blackboxnew}
has approximation ratio $n^{O(\varepsilon)}$ and run time
$\tilde{O}(T(n)) \cdot (3\gamma(n))^{O(1/\varepsilon)}$.
\label{corsmallruntime}
\end{cor}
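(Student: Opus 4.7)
The plan is to instantiate Theorem~\ref{thm:blackbox-main} with a value of $m$ that grows with the distortion factor $3\gamma(n)$ rather than with $n$, so that the exponent $\log_m n$ appearing in the approximation bound collapses to something of order $\varepsilon \log n / \log(3\gamma(n))$. Concretely, I would set $m = \lceil (3\gamma(n))^{1/\varepsilon} \rceil$. Since $\gamma(n) \ge 1$, this choice satisfies $m \ge 3 \ge 2$, so the hypothesis of Theorem~\ref{thm:blackbox-main} is met. The role of the $5$th-power term $m^5$ in the run-time bound of that theorem now shows up as the factor $(3\gamma(n))^{O(1/\varepsilon)}$ in the desired statement, while the approximation bound $(3\gamma(n))^{O(\log_m n)}$ becomes $n^{O(\varepsilon)}$.

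First I would verify the approximation bound. With the chosen $m$, we have $\log m = \Theta\!\left( \tfrac{1}{\varepsilon} \log (3\gamma(n)) \right)$, so $\log_m n = \Theta\!\left( \tfrac{\varepsilon \log n}{\log (3\gamma(n))} \right)$. Plugging into the bound from Theorem~\ref{thm:blackbox-main}, the approximation ratio becomes $(3\gamma(n))^{O(\log_m n)} = 2^{O(\varepsilon \log n)} = n^{O(\varepsilon)}$, as required. Next I would bound the run time by the same substitution: $\tilde{O}(m^5 \cdot T(n)) = \tilde{O}(T(n)) \cdot (3\gamma(n))^{5/\varepsilon} = \tilde{O}(T(n)) \cdot (3\gamma(n))^{O(1/\varepsilon)}$.

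The only point requiring mild care is that $m$ must be a positive integer at least $2$; I would take $m = \lceil (3\gamma(n))^{1/\varepsilon} \rceil$ and observe that the ceiling changes $\log m$ by at most an additive constant and $m^5$ by at most a constant factor, so none of the asymptotic bounds above are affected. There is no genuine obstacle here: the corollary is a direct optimization of the two-parameter tradeoff in Theorem~\ref{thm:blackbox-main}, dual to the one carried out in Corollary~\ref{corsmalldistortion}, where one instead balances the exponents by choosing $m$ polynomial in $n$.
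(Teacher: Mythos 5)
Your proposal is correct and follows essentially the same route as the paper: set $m \approx (3\gamma(n))^{1/\varepsilon}$ and plug into the two bounds of Theorem~\ref{thm:blackbox-main}, so that $(3\gamma(n))^{O(\log_m n)} = n^{O(\varepsilon)}$ and $\tilde{O}(m^5 T(n)) = \tilde{O}(T(n))\cdot(3\gamma(n))^{O(1/\varepsilon)}$. Your added remarks about rounding $m$ to an integer and checking $m \ge 2$ are harmless refinements of the same argument.
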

\begin{proof}
  Set $m = (3 \gamma(n))^{\frac{1}{\varepsilon}}$. The approximation ratio is
  $(3 \gamma(n))^{O(\log_m n)} = (3 \gamma(n))^{O(\varepsilon \log_{3\gamma(n)} n)} = n^{O(\varepsilon)},$
  and the run time on $u, v$ with $|u| + |v| \le n$ is
$\tilde{O}(m^5 \cdot  T(n)) = \tilde{O}(T(n)) \cdot (3\gamma(n))^{O(1/\varepsilon)}.$
\end{proof}

Note that one can also apply our results to approximation algorithms
$\mathcal{A}$ which are randomized in the sense that they succeed with
probability $2/3$, meaning that $\Pr[\ed(x, y) \le \mathcal{A}(x, y)
  \le \gamma(n)\ed(x, y)] \ge \frac{2}{3}.$ In particular, given such
an algorithm $\mathcal{A}$, we can amplify the probability of success
to $1 - \frac{1}{\poly(n)}$ by performing $\Theta(\log n)$ independent
computations of $\mathcal{A}(x, y)$ and keeping the median return
value. Moreover, by the union bound, the amplified $\mathcal{A}$ will
succeed with probability $1 - \frac{1}{\poly(n)}$ for every pair $(x,
y)$ we run it on, as long as we invoke it only polynomially many
times. Hence we can apply our results to such algorithms $\mathcal{A}$
without modification, at the cost of an additional logarithmic factor
in run time and failure probability $\frac{1}{\poly(n)}$.

With this in mind, we can apply Corollary~\ref{corsmalldistortion} to
the randomized algorithm of Andoni et al.~\cite{ApproxPolyLog} with
approximation ratio $(\log n)^{O(1/\varepsilon)}$ and run time $O(n^{1 +
  \varepsilon})$, in order to obtain the following concrete result. (Note that $\varepsilon$ may be $o(1)$.)
\begin{cor}
  There exists an approximate-alignment algorithm which
  runs in time $\tilde{O}(n^{1 + \varepsilon})$, and has approximation
  factor $(\log n)^{O(1/\varepsilon^2)}$ with probability $1 -
  \frac{1}{\poly(n)}$.
  \label{corAKO}
\end{cor}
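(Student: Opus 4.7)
The plan is to chain together three pieces that are already on the table: the AKO edit-distance estimator, the standard median-of-independent-trials amplification, and Corollary~\ref{corsmalldistortion}.

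First, I would start from the randomized algorithm of Andoni, Krauthgamer and Onak~\cite{ApproxPolyLog}, which for any parameter $\varepsilon' > 0$ estimates edit distance within factor $\gamma(n) = (\log n)^{O(1/\varepsilon')}$ in time $O(n^{1+\varepsilon'})$ with success probability $2/3$. Using the median trick described in the discussion preceding this corollary, I would amplify its per-call success probability to $1 - 1/\poly(n)$ at the cost of a single $\Theta(\log n)$ factor in run time; call the resulting algorithm $\tilde{\mathcal{A}}$, so that the run-time bound becomes $T(n) = \tilde{O}(n^{1+\varepsilon'})$ while the approximation factor $\gamma(n)$ is unchanged.

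Next, I would plug $\tilde{\mathcal{A}}$ into Corollary~\ref{corsmalldistortion} as the black-box approximation algorithm, choosing the framework parameter equal to $\varepsilon'$. The corollary then delivers an approximate-alignment algorithm with distortion $(3\gamma(n))^{O(1/\varepsilon')} = (\log n)^{O(1/(\varepsilon')^2)}$ and run time $\tilde{O}(T(n)\cdot n^{\varepsilon'}) = \tilde{O}(n^{1+2\varepsilon'})$. Setting $\varepsilon' = \varepsilon/2$ rescales these to the $\tilde{O}(n^{1+\varepsilon})$ run time and $(\log n)^{O(1/\varepsilon^2)}$ distortion promised in the statement.

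Finally, I would upgrade the per-invocation guarantee to an overall high-probability guarantee by a union bound. Algorithm~\ref{alg:blackboxnew} calls its black-box only polynomially many times (a crude bound is its own run time), and each call to $\tilde{\mathcal{A}}$ fails with probability at most $1/\poly(n)$, where the polynomial can be made arbitrarily large by enlarging the constant in the $\Theta(\log n)$ amplification. A union bound then ensures that every invocation satisfies its approximation guarantee simultaneously with probability $1 - 1/\poly(n)$, so on that event Theorem~\ref{thm:blackbox-main} yields the desired alignment. There is no substantive obstacle here; the whole proof is composition of off-the-shelf pieces, and the only item requiring attention is the bookkeeping that splits the total $n^{\varepsilon}$ slack between the copy inside AKO and the copy introduced by Corollary~\ref{corsmalldistortion}.
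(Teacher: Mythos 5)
Your proposal is correct and follows essentially the same route as the paper: amplify the AKO estimator by the median trick, take a union bound over the polynomially many black-box invocations, and plug the amplified algorithm into Corollary~\ref{corsmalldistortion}, with the constant-factor rescaling of $\varepsilon$ absorbed into the $O(\cdot)$ and $\tilde{O}(\cdot)$ notation exactly as the paper implicitly does.
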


The remainder of the section is devoted to proving Theorem \ref{thm:blackbox-main}.

\subsection{Proof of Theorem~\ref{thm:blackbox-main}}

The proof of the theorem will follow from
Proposition~\ref{propblackboxruntime}, which bounds the run time of
Algorithm \ref{alg:blackboxnew}, and
Proposition~\ref{proplowapproxratio}, which bounds the approximation
ratio.

Throughout this section, let $u, v$ and $m$ be the values given to Algorithm
\ref{alg:blackboxnew}. Let $P = (p_0, \ldots, p_m)$ be the
equipartition of $u$ into $m$ parts, and let $S$ be the set defined by
Algorithm \ref{alg:blackboxnew}. We begin by bounding the run time of
Algorithm \ref{alg:blackboxnew}.

\begin{prop}
  Algorithm \ref{alg:blackboxnew} runs in time $\tilde{O}(T(|u| + |v|)
  \cdot m^5)$.
  \label{propblackboxruntime}
\end{prop}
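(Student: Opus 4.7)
The argument follows the standard per-level-times-depth template. First I would bound $|S|$. The only nontrivial contribution comes from the points $\lceil p_i \pm (1+1/m)^j \rceil$ lying in $\{0,\dots,|v|\}$; since $(1+1/m)^j \le |v| \le n$ forces $j \le \log n / \log(1+1/m) = O(m \log n)$, and there are $m+1$ choices of $p_i$, we get $|S| = O(m^2 \log n)$. Next I would bound the work done by step~4 of the algorithm. The DP table $f(\ell,j)$ has $|S| \cdot (m+1) = O(m^3 \log n)$ entries, and each entry is computed by scanning $O(|S|)$ predecessors $\ell' \le \ell$, with each transition requiring exactly one call to $\mathcal{A}$ on an input of total length $\le |u| + |v|$. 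So step~4 makes $O(m \cdot |S|^2) = O(m^5 \log^2 n)$ calls to $\mathcal{A}$, and contributes $\tilde{O}(m^5) \cdot T(|u|+|v|)$ time plus $\tilde{O}(m^4)$ bookkeeping (sorting $S$, storing backpointers, comparing running sums).

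Writing $W(u,v)$ for the running time of the full algorithm, this local analysis yields the recurrence
\[
W(u,v) \;\le\; c \, m^5 \log^2(|u|+|v|) \cdot T(|u|+|v|) \;+\; \sum_{i=1}^{m} W(P_i, Q_i)
\]
for some constant $c$, with the base case $|u| \le 1$ contributing only $O(|v|)$. To collapse the sum at a single level, I would invoke super-additivity of $T$: the subproblems at any fixed level of recursion partition the original input, so $\sum_i \bigl(|P_i|+|Q_i|\bigr) \le |u|+|v| \le n$, and therefore $\sum_i T(|P_i|+|Q_i|) \le T(n)$ by iterating $T(a)+T(b) \le T(a+b)$. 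Hence the work performed across \emph{all} subproblems at a single level of recursion is $\tilde{O}(m^5) \cdot T(n)$.

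Finally I would bound the recursion depth. Because $P$ is an equipartition of $u$ into $m \ge 2$ parts, each $|P_i| = \lceil |u|/m\rceil$, which is strictly less than $|u|$ whenever $|u| \ge 2$, and in fact shrinks $|u|$ by a factor of $\Omega(m)$ once $|u|$ is not tiny. Thus the recursion reaches the base case $|u|\le 1$ after $O(\log_m n)$ levels, and summing the per-level bound gives the claimed $\tilde{O}(m^5 \cdot T(|u|+|v|))$, since the $O(\log_m n)$ factor as well as the $\log^2 n$ from $|S|$ are absorbed by the $\tilde{O}$.

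\textbf{Main obstacle.} The only subtle point is the per-level collapse: an individual subproblem may have $|Q_i|$ almost as large as $|v|$, so naively summing $T(|P_i|+|Q_i|)$ looks like it could blow up by a factor of $m$ per level and an $n^{\Omega(1)}$ factor overall. The super-additivity of $T$ is exactly what prevents this, and it is worth stating this step explicitly, together with the observation that $|u|+|v|$ is preserved across recursion so the argument of $T$ never exceeds $n$.
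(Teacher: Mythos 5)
Your proof is correct and follows essentially the same route as the paper: the same $O(m^2\log n)$ bound on $|S|$, the same $\tilde{O}(m^5)\cdot T(|u|+|v|)$ accounting for the dynamic program, the same use of super-additivity of $T$ to collapse all subproblems at a fixed recursion level, and the same $O(\log_m n)$ depth bound absorbed into the $\tilde{O}$. The "main obstacle" you flag (the per-level collapse via super-additivity) is exactly the step the paper handles the same way.
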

\begin{proof}
  If $|u| \le 1$, then we can find an optimal alignment in time
  $O(|v|)$ naively.

  Suppose $|u| > 1$. Notice that $|S| \le O(m^2 \log n)$. In
  particular, because $(1 + \frac{1}{m})^{(m + 1) \ln n} \ge n$,
  \begin{equation*}
    S \subseteq \{p_0, \ldots, p_m\} \cup \{|v|\} \cup \left\{\left\lceil p_i \pm \left(1 +
  \frac{1}{m}\right)^j \right\rceil \mid i \in [0: m], j \in [0: (m + 1)\ln n]\right\},
  \end{equation*}
  which has size at most $O(m^2 \log n)$.

  Finding an equipartition of $u$ can be done in linear time, and
  constructing $S$ takes time $O(|S|) = \tilde{O}(m^2)$. In order to
  perform the fourth step which selects $Q$, we must compute $f(l, j)$
  for each $l \in S$ and $j \in [0: m]$. This results in $O(m|S|) \le
  O(m^3 \log n)$ subproblems. To evaluate $f(l, j)$, we must consider
  each $l' \in S$ satisfying $l' \le l$, and then compute the cost of
  $f(l', j - 1)$ plus $\mathcal{A}(P_j, v[l' + 1: l])$ (which takes
  time at most $T(|u| + |v|)$ to compute). Therefore, each $f(l, j)$
  is computed in time $O(|S| \cdot T(|u| + |v|)) \le \tilde{O}(T(|u| +
  |v|) \cdot m^2)$. Because there are $O(m \cdot |S|) =
  \tilde{O}(m^3)$ subproblems of the form $f(l, j)$, the total run
  time of the dynamic program is $\tilde{O}(T(|u| + |v|) \cdot m^5)$.

  So far we have shown that the first level of recursion takes time
  $\tilde{O}(T(|u| + |v|) m^5)$. We will now extend this to consider
  the $i$-th level of recursion for any $i$. The sum of the lengths of
  the inputs to Algorithm \ref{alg:blackboxnew} at a particular level
  of the recursion is at most $|u| + |v|$. Therefore, by the
  super-additivity of $T(n)$, it follows that the time spent in any
  given level of recursion is at most $\tilde{O}(T(|u| + |v|)
  m^5)$. Because each level of recursion reduces the sizes of the
  parts of $u$ by a factor of $\Omega(m)$, the number of levels is at
  most $O(\log_m n) \le O(\log n)$. Therefore, the run time is
  $\tilde{O}(T(|u| + |v|) \cdot m^5).$

\end{proof}

When discussing the approximation ratio of Algorithm
\ref{alg:blackboxnew}, it will be useful to have a notion of edit
distance between partitions of strings.
\begin{defn}
  Given two partitions $C = (c_0, \ldots, c_m)$ and $D = (d_0, \ldots,
  d_m)$ of strings $a$ and $b$ respectively, we define
  $\ed(C, D) := \sum_{i} \ed(C_i, D_i)$.
\end{defn}

In order to bound the approximation ratio of Algorithm
\ref{alg:blackboxnew}, we will introduce, for the sake of analysis, a
partition $\Qopt = (\qopt_0, \ldots, \qopt_m)$ of $v$ satisfying
$\ed(P, \Qopt) = \ed(u, v)$. Recall that $P$ is fixed, which allows us
to use it in the definition of $\Qopt$.

We claim that some partition $\Qopt$ satisfying $\ed(P, \Qopt) =
\ed(u, v)$ must exist. If $u$ and $v$ differed by only a single edit,
one could start from $P$ and explicitly define $\Qopt$ so that $\ed(P, \Qopt) =
1$ (by a case analysis of which type of edit was performed). It can then be shown by induction on the number of edits that, in general, we can obtain a partition $\Qopt$
satisfying $\ed(P, \Qopt) = \ed(u, v)$.

Our strategy for bounding the approximation ratio of Algorithm
\ref{alg:blackboxnew} will be to compare $\ed(P, Q)$ for the partition
$Q$ selected by our algorithm to $\ed(P, \Qopt)$. We do this through
three observations.

The first observation upper bounds $\ed(P, Q)$. Informally, it shows
that the cost in edit distance which Algorithm \ref{alg:blackboxnew}
pays for selecting $Q$ instead of $\Qopt$ is at most $2\sum_{i = 0}^m
|q_i - \qopt_i|$.

\begin{lem}
  Let $Q = (q_0, \ldots, q_m)$ be a partition of $v$. Then $$\ed(P, Q) \le \ed(u, v) + 2\sum_{i = 1}^m |q_i - \qopt_i|.$$
  \label{lemblackboxupper}
\end{lem}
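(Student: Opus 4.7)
The plan is to bound $\ed(P_i,Q_i)$ in terms of $\ed(P_i,\Qopt_i)$ on each individual piece using the triangle inequality, and then sum.

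First I would observe the elementary fact that, for any string $v$ and any indices $a,b,a',b' \in \{0,\ldots,|v|\}$ with $a \le b$ and $a' \le b'$, the two substrings $v[a+1:b]$ and $v[a'+1:b']$ are at edit distance at most $|a-a'| + |b-b'|$: one can transform one into the other by inserting or deleting at most $|a-a'|$ characters at the left end and at most $|b-b'|$ characters at the right end. Applying this with $(a,b) = (q_{i-1},q_i)$ and $(a',b') = (\qopt_{i-1},\qopt_i)$ gives
\begin{equation*}
\ed(Q_i, \Qopt_i) \;\le\; |q_{i-1}-\qopt_{i-1}| + |q_i - \qopt_i|.
\end{equation*}

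Next I would combine this with the triangle inequality for edit distance between the three strings $P_i$, $Q_i$, and $\Qopt_i$:
\begin{equation*}
\ed(P_i, Q_i) \;\le\; \ed(P_i, \Qopt_i) + \ed(Q_i, \Qopt_i) \;\le\; \ed(P_i, \Qopt_i) + |q_{i-1}-\qopt_{i-1}| + |q_i - \qopt_i|.
\end{equation*}

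Finally I would sum over $i$ from $1$ to $m$. The sum of $\ed(P_i,\Qopt_i)$ equals $\ed(P,\Qopt) = \ed(u,v)$ by the choice of $\Qopt$. The remaining sum telescopes into $\sum_{i=0}^{m} |q_i - \qopt_i| + \sum_{i=1}^{m} |q_i - \qopt_i|$ after reindexing the left-endpoint terms, and both boundary terms vanish since $q_0 = \qopt_0 = 0$ and $q_m = \qopt_m = |v|$. This yields $\ed(P,Q) \le \ed(u,v) + 2\sum_{i=1}^m |q_i - \qopt_i|$, as required. There is no substantive obstacle here; the only minor bookkeeping is verifying the shift in indices when summing the endpoint discrepancies and noting that the boundary terms at $i=0$ and $i=m$ are zero.
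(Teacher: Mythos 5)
Your proof is correct and follows essentially the same route as the paper: a triangle inequality through $\Qopt$ combined with the observation that $\ed(Q_i,\Qopt_i) \le |q_{i-1}-\qopt_{i-1}| + |q_i-\qopt_i|$ via insertions at the endpoints, then summing and using $q_0 = \qopt_0 = 0$. The only cosmetic difference is that you apply the triangle inequality piecewise while the paper applies it to the partition-level distance $\ed(P,\Qopt)+\ed(\Qopt,Q)$, which is the same thing by definition.
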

\begin{proof}
  Observe that
  \begin{equation*}
    \ed(P, Q) \le \ed(P, \Qopt) + \ed(\Qopt, Q)  = \ed(u, v) + \sum_{i = 1}^m \ed(\Qopt_i, Q_i).
  \end{equation*}

  Because $Q$ and $\Qopt$ are both partitions of $v$, $\ed(Q_i, \Qopt_i) \le
  |q_{i - 1} - \qopt_{i - 1}| + |q_{i} - \qopt_{i}|$. In particular, $|q_{i -
    1} - \qopt_{i - 1}|$ insertions to the left side of one of $Q_i$
  or $\Qopt_i$ (whichever has its start point further to the right) will
  result in the two substrings having the same start-point; and then
  $|q_i - \qopt_i|$ insertions to the right side of one of $Q_i$ or
  $\Qopt_i$ (whichever has its end point further to the left) will result
  in the two substrings having the same end-point. Thus
  \begin{equation*}
    \begin{split}
      \ed(u, v) + \sum_{i = 1}^m \ed(\Qopt_i, Q_i) & \le \ed(u, v) + \sum_{i = 1}^m |q_{i - 1} - \qopt_{i - 1}| + |q_{i} - \qopt_{i}| \\
      & \le  \ed(u, v) + 2\sum_{i = 1}^m |q_i - \qopt_i|,
    \end{split}
  \end{equation*}
  where we are able to disregard the case of $i = 0$ because $q_0 =
  \qopt_0 = 0$.
\end{proof}

The next observation establishes a lower bound for $\ed(u, v)$.

\begin{lem}
  $\ed(u, v) \ge \frac{1}{m} \sum_{i = 1}^m |p_i - \qopt_i|.$
  \label{lemblackboxlower}
\end{lem}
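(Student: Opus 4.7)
The plan is to show that the sequence of signed offsets $\delta_i := \qopt_i - p_i$ can only move slowly as $i$ increases, where ``slowly'' is measured by the per-part edit distances $\ed(P_i, \Qopt_i)$. Summing $|\delta_i|$ through a telescoping argument will then give the desired bound with the factor $1/m$ coming naturally from double-counting.

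First I would observe the standard fact that for any two strings, their length difference is at most their edit distance, since each insertion or deletion changes length by at most one and substitutions preserve length. Applied part by part, this gives
\[
\bigl|\,|\Qopt_i| - |P_i|\,\bigr| \le \ed(P_i, \Qopt_i).
\]
Since $|P_i| = p_i - p_{i-1}$ and $|\Qopt_i| = \qopt_i - \qopt_{i-1}$, the left-hand side equals $|\delta_i - \delta_{i-1}|$, so
\[
|\delta_i - \delta_{i-1}| \le \ed(P_i, \Qopt_i).
\]

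Next, using $\delta_0 = \qopt_0 - p_0 = 0$ and telescoping, for each $i$ I get
\[
|\delta_i| = \Bigl|\sum_{j=1}^i (\delta_j - \delta_{j-1})\Bigr| \le \sum_{j=1}^i \ed(P_j, \Qopt_j).
\]
Summing over $i \in \{1, \ldots, m\}$ and swapping the order of summation yields
\[
\sum_{i=1}^m |p_i - \qopt_i| = \sum_{i=1}^m |\delta_i| \le \sum_{i=1}^m \sum_{j=1}^i \ed(P_j, \Qopt_j) = \sum_{j=1}^m (m - j + 1)\, \ed(P_j, \Qopt_j) \le m \sum_{j=1}^m \ed(P_j, \Qopt_j).
\]
By the defining property of $\Qopt$, the final sum equals $\ed(P, \Qopt) = \ed(u, v)$, so dividing by $m$ gives the claim.

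There is essentially no main obstacle: the only ingredient beyond bookkeeping is the one-line length-vs-edit-distance inequality, and the equipartition structure of $P$ is not even needed for this lemma (only that $P$ is a valid partition of $u$). The factor of $m$ in the telescoping is a little loose in the worst case, but it is exactly what the statement asks for.
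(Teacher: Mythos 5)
Your proof is correct, but it reaches the bound by a slightly different route than the paper. The paper fixes an index $i$ and argues directly that $\ed(u,v) \ge |p_i - \qopt_i|$ for every single $i$: since $\ed(P,\Qopt) = \ed(u,v)$, the optimal alignment splits at the pair $(p_i, \qopt_i)$, so $\ed(u,v)$ dominates $\ed(u[1:p_i], v[1:\qopt_i])$, which in turn is at least the prefix length difference $|p_i - \qopt_i|$; summing these $m$ inequalities and dividing by $m$ gives the lemma. You instead work at the level of the parts: you bound each increment $|\delta_i - \delta_{i-1}|$ by $\ed(P_i,\Qopt_i)$ via the same length-versus-edit-distance fact, telescope to get $|\delta_i| \le \sum_{j \le i}\ed(P_j,\Qopt_j)$, and let the factor $m$ emerge from double counting. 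The two arguments share the same two ingredients (length difference is at most edit distance, and $\ed(P,\Qopt)=\ed(u,v)$), but yours never needs the assertion that $\ed(u,v)$ decomposes as prefix-plus-suffix edit distance at each cut point, which the paper states without detailed justification; in that sense your version is more self-contained, using only the definition $\ed(P,\Qopt)=\sum_j \ed(P_j,\Qopt_j)$. Conversely, the paper's route makes explicit the cleaner per-index fact $|p_i - \qopt_i| \le \ed(u,v)$, which your partial sums also imply (since $\sum_{j\le i}\ed(P_j,\Qopt_j) \le \ed(P,\Qopt) = \ed(u,v)$) but which you do not isolate. Both give exactly the stated inequality.
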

\begin{proof}
  Because $\ed(P, \Qopt) = \ed(u, v)$, we must have that for each $i \in
  [m]$,
  $$\ed(u, v) = \ed(u[1: p_i], v[1: \qopt_i]) + \ed(u[p_i + 1: |u|], v[\qopt_i
    + 1: |v|]).$$ Notice, however, that the strings $u[1: p_i]$ and
  $v[1: \qopt_i]$ differ in length by at least $|\qopt_i - p_i|$. Therefore,
  their edit distance must be at least $|\qopt_i - p_i|$, implying that
  $\ed(u, v) \ge |\qopt_i - p_i|$.

  It follows that $\frac{1}{m} \ed(u, v) \ge \frac{1}{m}|\qopt_i -
  p_i|$. Summing over $i \in [m]$ gives the desired equation.
\end{proof}

So far we have shown that the cost in edit distance which Algorithm
\ref{alg:blackboxnew} pays for selecting $Q$ instead of $\Qopt$ is at
most $2\sum_{i = 0}^m |q_i - \qopt_i|$ (Lemma \ref{lemblackboxupper}),
and that the edit distance from $u$ to $v$ is at least $\frac{1}{m}
\sum_{i = 1}^m |p_i - \qopt_i|$ (Lemma \ref{lemblackboxlower}). Next
we compare these two quantities. In particular, we show that if $Q$ is
chosen to mimic $\Qopt$ as closely as possible, then each of the $|q_i
- \qopt_i|$ will become small relative to each of the $|p_i -
\qopt_i|$.

\begin{lem}
  There exists a partition $Q = (q_0, \ldots, q_m)$ of $v$ such that
  each $q_i$ is in $S$, and such that for each $i \in [0: m]$,
  $|q_i - \qopt_i| \le \frac{1}{m} |p_i - \qopt_i|.$
  \label{lempickgoodQ}
\end{lem}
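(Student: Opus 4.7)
The plan is to construct each $q_i$ as a suitable element of $S$ close to $\qopt_i$ using the geometric candidates $\lceil p_i \pm (1+1/m)^j \rceil$, and then adjust the construction so that the resulting sequence is a valid partition. Fix $i$ and set $d_i := |p_i - \qopt_i|$. If $d_i = 0$, take $q_i := p_i \in S$. Otherwise, assume WLOG $\qopt_i > p_i$ (the opposite direction is symmetric, since $\lceil p_i - (1+1/m)^j \rceil = p_i - \lfloor (1+1/m)^j \rfloor$). For $j \ge 0$, let $c_j := p_i + \lceil (1+1/m)^j \rceil$. Since $c_0 = p_i + 1 \le \qopt_i \le |v|$, there exists a largest index $j^*$ with $c_{j^*} \le \qopt_i$; I take $q_i := c_{j^*}$, which lies in $S$ because $0 \le p_i \le c_{j^*} \le |v|$.

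The heart of the argument is the bound $\qopt_i - c_{j^*} \le d_i/m$. Using $\lceil x \rceil \le x + 1$ and $\lceil x \rceil \ge x$,
\[
c_{j^*+1} - c_{j^*} \le \bigl(1 + \tfrac{1}{m}\bigr)^{j^*+1} + 1 - \bigl(1 + \tfrac{1}{m}\bigr)^{j^*} = \tfrac{(1+1/m)^{j^*}}{m} + 1.
\]
Since $(1+1/m)^{j^*} \le \lceil (1+1/m)^{j^*} \rceil = c_{j^*} - p_i \le d_i$, this yields $c_{j^*+1} - c_{j^*} \le d_i/m + 1$. By maximality of $j^*$, $c_{j^*+1} > \qopt_i$, and integrality of both gives $\qopt_i \le c_{j^*+1} - 1$. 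Combining, $\qopt_i - c_{j^*} \le c_{j^*+1} - 1 - c_{j^*} \le d_i/m$, as required. The delicate point is that the $+1$ contributed by the ceiling cancels exactly with the $-1$ arising from integrality of $\qopt_i$.

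Because the statement asserts a \emph{partition}, I must also enforce $q_0 \le q_1 \le \cdots \le q_m$ together with $q_0 = 0$ and $q_m = |v|$. The endpoints are immediate: $\qopt_0 = p_0 = 0 \in S$ and $\qopt_m = |v| \in S$, so fixing $q_0 := 0$ and $q_m := |v|$ satisfies the bound with distance zero. For monotonicity, I would instead define each interior $q_i$ as the element of $S$ \emph{closest} to $\qopt_i$, breaking ties toward the smaller value; the construction above exhibits $c_{j^*} \in S$ within $d_i/m$ of $\qopt_i$, so the distance bound $|q_i - \qopt_i| \le d_i/m$ persists. A short case analysis then shows that $\qopt_i \le \qopt_{i+1}$ forces $q_i \le q_{i+1}$: any violation would contradict the closest-element property of one of $q_i, q_{i+1}$ or the tie-breaking rule. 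The main obstacle is the geometric estimate of paragraph two; everything else is routine.
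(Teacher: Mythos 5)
Your proposal is correct and takes essentially the same route as the paper: both rely on the geometric grid $\left\lceil p_i \pm \left(1+\frac{1}{m}\right)^j \right\rceil \subseteq S$, choose the grid point adjacent to $\qopt_i$, and bound the gap by $\frac{1}{m}\left(1+\frac{1}{m}\right)^j \le \frac{1}{m}|p_i - \qopt_i|$. The only difference is cosmetic: the paper defines $q_i$ as the largest element of $S$ not exceeding $\qopt_i$, which makes monotonicity of $(q_0,\ldots,q_m)$ immediate and lets the ceiling be absorbed by noting $q_i \ge p_i + \left(1+\frac{1}{m}\right)^j$, whereas your nearest-element rule needs the extra (correct) tie-breaking/monotonicity argument and the more delicate integer bookkeeping in which the $+1$ from the ceiling cancels against integrality of $\qopt_i$.
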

\begin{proof}
  Consider the partition $Q$ in which $q_i$ is chosen to be the
  largest $s \in S$ satisfying $s \le \qopt_i$. Observe that: (1)
  because $0 \in S$, each $q_i$ always exists; (2) because $|v| \in S$,
  we will have $q_m = |v|$; (3) and because $\qopt_0 \le \qopt_1 \le
  \cdots \le \qopt_m$, we will have that $q_0 \le q_1 \le \cdots \le
  q_m$. Therefore, $Q$ is a well-defined partition of $v$.

  It remains to prove that $|q_i - \qopt_i| \le \frac{1}{m} |p_i -
  \qopt_i|$. We consider three cases:
  \begin{itemize}
  \item \textbf{Case 1: $\qopt_i = p_i$.} If $\qopt_i = p_i$, then $\qopt_i \in S$
    and thus $q_i = \qopt_i$ as well, meaning that $0 = |q_i - \qopt_i| \le
    \frac{1}{m} |p_i - \qopt_i| = 0$ trivially.
  \item \textbf{Case 2: $p_i < \qopt_i$.} Consider the largest
    non-negative integer $j$ such that $p_i + (1 + \frac{1}{m})^j \le
    \qopt_i$. By definition of $j$,
    \begin{equation}
      \left(1 + \frac{1}{m}\right)^{j} \le \qopt_i - p_i \le \left(1 + \frac{1}{m}\right)^{j + 1} .
      \label{eqjclose1}
    \end{equation}

    It follows that
    $$\qopt_i - \left(p_i + \left(1 + \frac{1}{m}\right)^j\right) \le \left(1 + \frac{1}{m}\right)^{j + 1} - \left(1 + \frac{1}{m}\right)^j.$$
    Simplifying, this becomes,
    \begin{equation}
      \qopt_i - \left(p_i + \left(1 +
      \frac{1}{m}\right)^j\right) \le \frac{1}{m}\left(1 + \frac{1}{m}\right)^j.
      \label{eqjclose2}
    \end{equation}
    Since $\lceil p_i + (1 + \frac{1}{m})^{j} \rceil \in S$, the
    definition of $q_i$ ensures that $q_i$ is between $p_i + (1 +
    \frac{1}{m})^{j}$ and $\qopt_i$ inclusive. Therefore,
    \eqref{eqjclose2} implies
    \begin{equation}
       \qopt_i - q_i \le \frac{1}{m}\left(1 + \frac{1}{m}\right)^j .
      \label{eqjclose3}
    \end{equation}
    Combining \eqref{eqjclose1} with \eqref{eqjclose3}, it follows
    that $\qopt_i - q_i \le \frac{1}{m} (\qopt_i - p_i)$, as desired.
  \item \textbf{Case 3: $p_i > \qopt_i$.} This case is similar to Case
    2. Consider the smallest positive integer $j$ such that $p_i - (1
    + \frac{1}{m})^j \le \qopt_i$. By definition of $j$,
    \begin{equation}
      \left(1 + \frac{1}{m}\right)^{j - 1} \le p_i - \qopt_i \le \left(1 + \frac{1}{m}\right)^{j}.
      \label{eqjclose1a}
    \end{equation}
    Manipulating this yields $\qopt_i - p_i \le -\left(1 + \frac{1}{m}\right)^{j - 1}$, which in turn implies
    $$\qopt_i -\left(p_i - \left(1 + \frac{1}{m}\right)^{j}\right) \le
    \left(1 + \frac{1}{m}\right)^{j} - \left(1 + \frac{1}{m}\right)^{j
      - 1}.$$ Simplifying, this becomes,
    \begin{equation}
     \qopt_i -\left(p_i - \left(1 + \frac{1}{m}\right)^{j}\right) \le
     \frac{1}{m}\left(1 + \frac{1}{m}\right)^{j - 1}.
      \label{eqjclose2a}
    \end{equation}
    Since $\max(\lceil p_i - (1 + \frac{1}{m})^{j} \rceil, 0) \in S$,
    the definition of $q_i$ ensures that $q_i$ is between $p_i - (1 +
    \frac{1}{m})^{j}$ and $\qopt_i$ inclusive. Therefore,
    \eqref{eqjclose2a} implies
    \begin{equation}
      \qopt_i - q_i \le \frac{1}{m}\left(1 + \frac{1}{m}\right)^{j - 1}.
      \label{eqjclose3a}
    \end{equation}
    Combining \eqref{eqjclose1a} with \eqref{eqjclose3a}, it follows
    that $\qopt_i - q_i \le \frac{1}{m} (p_i - \qopt_i)$, as desired.

  \end{itemize}
\end{proof}

We are now equipped to bound the approximation ratio of Algorithm
\ref{alg:blackboxnew}, thereby completing the proof of Theorem
\ref{thm:blackbox-main}. In particular, the preceding lemmas will
allow us to bound the approximation ratio at each level of recursion
to $O(\gamma(n))$. The approximation ratio will then multiply across
the $O(\log_m n)$ levels of recursion, giving total approximation
ratio $O(\gamma(n))^{O(\log_m n)}$.

\begin{prop}
 Let $E(u, v)$ be the number of edits returned by Algorithm
 \ref{alg:blackboxnew}. Then
 $$\ed(u, v) \le E(u, v) \le \ed(u, v) \cdot (3\gamma(n))^{O(\log_m
   n)}.$$
 \label{proplowapproxratio}
\end{prop}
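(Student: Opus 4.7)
The lower bound $\ed(u,v) \le E(u,v)$ is immediate: any sequence of edits witnesses an upper bound on edit distance, so the alignment returned by the algorithm has length at least $\ed(u,v)$. The real content is the upper bound, which I would prove by induction on the depth of recursion, showing that at each level the algorithm loses a factor of at most $3\gamma(n)$.

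For the base case $|u| \le 1$, the algorithm computes an optimal alignment in linear time, so $E(u,v) = \ed(u,v)$. For the inductive step, let $Q = (q_0,\ldots,q_m)$ be the partition chosen by the algorithm's dynamic program, and let $Q^\star$ be the partition guaranteed by Lemma~\ref{lempickgoodQ}, whose components all lie in $S$. Since the dynamic program finds the $q_i \in S$ minimizing $\sum_i \mathcal{A}(P_i, Q_i)$, and since $Q^\star$ is a feasible choice, we have
\begin{equation*}
\sum_{i=1}^m \ed(P_i, Q_i) \;\le\; \sum_{i=1}^m \mathcal{A}(P_i, Q_i) \;\le\; \sum_{i=1}^m \mathcal{A}(P_i, Q^\star_i) \;\le\; \gamma(n)\sum_{i=1}^m \ed(P_i, Q^\star_i) \;=\; \gamma(n)\,\ed(P, Q^\star).
\end{equation*}
Now combine the three lemmas on $Q^\star$: by Lemma~\ref{lemblackboxupper}, $\ed(P, Q^\star) \le \ed(u,v) + 2\sum_i |q^\star_i - \qopt_i|$; by Lemma~\ref{lempickgoodQ}, $\sum_i |q^\star_i - \qopt_i| \le \frac{1}{m}\sum_i |p_i - \qopt_i|$; and by Lemma~\ref{lemblackboxlower}, $\frac{1}{m}\sum_i |p_i - \qopt_i| \le \ed(u,v)$. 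Chaining these gives $\ed(P, Q^\star) \le 3\,\ed(u,v)$, hence $\sum_i \ed(P_i, Q_i) \le 3\gamma(n)\,\ed(u,v)$.

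To finish the inductive step, observe that the algorithm recursively produces an alignment between each $P_i$ and $Q_i$ and concatenates them, so $E(u,v) = \sum_i E(P_i, Q_i)$. The inputs to the recursive calls satisfy $|P_i| + |Q_i| \le |u| + |v| \le n$, so by the inductive hypothesis applied at depth $d-1$, $E(P_i, Q_i) \le (3\gamma(n))^{d-1}\ed(P_i, Q_i)$ (using that $\gamma$ is non-decreasing, so $\gamma$ evaluated at the subproblem size is at most $\gamma(n)$). Summing yields
\begin{equation*}
E(u,v) \;=\; \sum_{i=1}^m E(P_i, Q_i) \;\le\; (3\gamma(n))^{d-1}\sum_{i=1}^m \ed(P_i, Q_i) \;\le\; (3\gamma(n))^{d}\,\ed(u,v).
\end{equation*}
Since the equipartition shrinks $|u|$ by a factor of $\Omega(m)$ at each level, the recursion depth is $d = O(\log_m n)$, giving the claimed bound $E(u,v) \le (3\gamma(n))^{O(\log_m n)}\,\ed(u,v)$.

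The only subtle point is making sure the inductive hypothesis can be invoked with the same $\gamma(n)$ across all levels; this is handled by the monotonicity of $\gamma$ together with the fact that subproblem sizes never exceed $n$. Everything else is a direct assembly of the three preceding lemmas plus the observation that the dynamic program picks a $Q$ at least as good (under $\mathcal{A}$) as the auxiliary partition $Q^\star$.
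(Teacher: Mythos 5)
Your proof is correct and follows essentially the same route as the paper's: combine Lemmas~\ref{lempickgoodQ}, \ref{lemblackboxlower}, and \ref{lemblackboxupper} to show some feasible partition has $\ed(P,Q^\star)\le 3\ed(u,v)$, use the dynamic program's optimality under $\mathcal{A}$ and the approximation guarantee to conclude the chosen $Q$ satisfies $\ed(P,Q)\le 3\gamma(n)\ed(u,v)$, and then induct over the $O(\log_m n)$ recursion levels. Your write-up merely makes the per-level chain of inequalities and the monotonicity of $\gamma$ more explicit than the paper does.
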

\begin{proof}
  Because Algorithm \ref{alg:blackboxnew} finds a sequence of edits
  from $u$ to $v$, clearly $\ed(u, v) \le E(u,
  v)$.

  In order to show that $E(u, v) \le \ed(u, v) \cdot (3\gamma(n))^{O(\log_m n)}$,
  we use the previous lemmas. By Lemma \ref{lempickgoodQ} there is
  some partition $Q = (q_0, \ldots, q_m)$ of $v$ such that each $q_i$
  is in $S$, and such that for each $i \in [0: m]$,
  $|q_i - \qopt_i| \le \frac{1}{m} |p_i - \qopt_i|.$
  By Lemma \ref{lemblackboxlower}, it follows that
  $$\sum_{i = 1}^m |q_i - \qopt_i| \le \frac{1}{m} \sum_{i = 1}^m |p_i - \qopt_i| \le \ed(u, v).$$ Applying Lemma
  \ref{lemblackboxupper}, we then get that
  $$\ed(P, Q) \le \ed(u, v) + 2\sum_{i = 1}^m |q_i - \qopt_i| \le 3\ed(u, v).$$

  Thus there is some $Q$ which Algorithm \ref{alg:blackboxnew} is
  allowed to select such that $\ed(P, Q) \le 3\ed(u, v)$. Since the
  approximation ratio of $\mathcal{A}$ is $\gamma(n)$, the partition
  $Q$ which the algorithm actually chooses at the first level of
  recursion must therefore satisfy
  \begin{equation}
    \ed(P, Q) \le 3\gamma(n) \ed(u, v).
    \label{eqratioatsinglelevel}
  \end{equation}

  Recall from the proof of Proposition \ref{propblackboxruntime} that
  Algorithm \ref{alg:blackboxnew} has $O(\log_m n)$ levels of
  recursion. After the $i$-th level of recursion, $u$ has implicitly
  been split into a large partition $P^i$, $v$ has implicitly been
  split into a large partition $Q^i$, and the recursive subproblems
  are searching for edits between pairs of parts of $P^i$ and
  $Q^i$. Using \eqref{eqratioatsinglelevel}, we get by induction that $\ed(P^i, Q^i) \le (3\gamma(n))^{i}\ed(u,
  v)$. Since there are $O(\log_m n)$ levels of recursion, it follows
  that the number of edits returned by the algorithm is at most
  $\ed(u, v) \cdot (3 \gamma(n))^{O(\log_m n)}$.

\end{proof}

We conclude the section with two remarks.
\begin{rem}
  Because $m$ is typically asymptotically small, we have not made an
  effort to optimize the run time exponent of $m$. In particular, the
  run time of $\tilde{O}(T(n) m^5)$ can be improved to
  $\tilde{O}(T(n)m^3)$ by limiting each $q_i$ to be in the set
  $$\{p_i\} \cup \{|v|\} \cup \left\{\left\lceil p_i \pm \left(1 + \frac{1}{m}\right)^j\right\rceil \mid i,
  j \ge 0\right\}.$$ This comes at the expense of making it slightly more
  difficult to prove a variant of Lemma \ref{lempickgoodQ}, since it
  is no longer easy to ensure that the selection of $q_0, q_1 ,\ldots
  , q_m$ satisfies $q_0 \le q_1 \le \cdots \le q_m$.
\end{rem}

\begin{rem}
  If $\gamma(n)$ is $1 + o(1)$, then the 3 appearing in the
  approximation ratio $(3\gamma(n))^{O(\log_m n)}$
  becomes a bottleneck. This can be handled by redefining
  $$S := \left(\{p_0, \ldots, p_m\} \cup \{|v|\} \cup \left\{\left\lceil p_i \pm \left(1
  + \frac{1}{m^c}\right)^j \right\rceil \mid i, j \ge 0\right\}\right) \cap \{0, \ldots, |v|\}$$ for
  some constant $c > 1$ so that the options for each $q_i$ are at a
  finer granularity. This will achieve approximation ratio $\left((1 +
  O(1/m^{c - 1}))\gamma(n)\right)^{O(\log_m n)}$ at the expense of increasing $m$'s exponent in
  the run time.
\end{rem}

\section{Alignment Embeddings for Permutations}\label{secpermsintermsofn}

In this section we present a randomized embedding from $\PP_n$, the
set of permutations of length $n$, into Hamming space with expected
distortion $O(\log n)$. The embedding has the surprising property that
it implicitly encodes alignments between strings. Moreover, if the
output of the embedding is stored using run-length
encoding,\footnote{In run-length encoding, runs of identical
  characters are stored as a pair whose first entry is the character
  and the second entry is the length of the run.} then the size of the
output and the run time are both $O(n)$.

For convenience, in this subsection, we make the Simple Uniform
Hashing Assumption \cite{CLRS}, which allows us to assume a fully
independent family $\mathcal{H}$ of hash functions mapping
$\Theta(\log n)$ bits to $\Theta(\log n)$ bits with constant time
evaluation. This can be simulated using the family of
\cite{amazinghash}, which is independent on any given set of size $n$
with high probability, and which is further discussed in Section
\ref{secpreliminaries}.

\begin{algorithm}
\caption{Alignment Embedding for Permutations\label{alg:align-perm}}
Input: A string $w = w_1 \cdots w_n \in \PP_n$.

Parameters: $\varepsilon$ and $m \ge \log_{1/2 + \varepsilon} \frac{1}{n} + 1$.

\begin{enumerate}
\item At the first level of recursion only:
    \begin{enumerate}
    \item Initialize an array $A$ of size $2^m - 1$ (indexed starting at
    one) with zeros. The array $A$ will contain the output embedding.
    \item Select a hash function $h$ mapping $\Sigma$ to $r \log n$
    bits for a sufficiently large constant $r$.
    \end{enumerate}
\item Let $i$ minimize $h(w_i)$ out of the $i \in [n/2 - \varepsilon n:
  n/2 + \varepsilon n]$.\footnotemark{} We call $w_i$ the \emph{pivot} in $w$.
\item Set $A[2^{m - 1}] = w_i$.
\item Recursively embed $w_1 \cdots w_{i - 1}$ into $A[1:2^{m - 1} - 1]$.
\item Recursively embed $w_{i+1} \cdots w_{n}$ into $A[2^{m - 1} +
  1:2^m - 1]$.
\end{enumerate}
\end{algorithm}
\addtocounter{footnote}{-1}
\stepcounter{footnote}\footnotetext{With high probability, there are no hash collisions.}

The description of the embedding appears as Algorithm~\ref{alg:align-perm}. For simplicity, we assume $0 \notin \Sigma$, which allows us to use $0$ as a null character.
The algorithm takes two parameters: $\varepsilon$ and $m$. The parameter $\varepsilon$ controls a trade-off between the distortion and the output dimension. The parameter $m$ dictates the maximum depth of recursion that can be performed within the array $A$. In particular, $m$ needs to be chosen such that the algorithm does not run out of space for the embedding in the recursive calls.

Since each recursive step takes as input words of size in the
range $[(1/2 - \varepsilon) n: (1/2 + \varepsilon) n]$, the input size at the $i$-th level of the recursion is at most $(1/2 + \varepsilon)^{i-1} n$. We need to choose $m$ such that at the $m$-th level of recursion, the input size will be at most $1$.
Therefore, it suffices to pick $m$ satisfying
$$m \ge \log_{1/2 + \varepsilon} \frac{1}{n} + 1.$$

We denote the resulting embedding of the input string $w$ into the output array $A$ by $\phi_{\varepsilon, m}(w)$. Moreover, for $m = \lceil \log_{1/2 + \varepsilon} \frac{1}{n} + 1\rceil$, we define $\phi_\varepsilon(w)$ to be $\phi_{\varepsilon, m}(w)$. Note that
$\phi_\varepsilon$ embeds $w$ into an array $A$ of size
$$O\left(2^{\log_{1/2 + \varepsilon} 1/n}\right) = O\left(n^{-1/\log (1/2
  + \varepsilon)}\right),$$ which one can verify for $\varepsilon \le
\frac{1}{4}$ is $O(n^{1 + 6 \varepsilon})$.

We call $\phi_{\varepsilon}$ an \emph{alignment embedding} because
$\phi_{\varepsilon}$ maps a string $x$ to a copy of $x$ spread out across
an array of zeros. When we compare $\phi_{\varepsilon}(x)$ with
$\phi_{\varepsilon}(y)$ by Hamming differences, $\phi_{\varepsilon}$ encodes an alignment between $x$ and $y$; it pays for
every letter which it fails to match up with another copy of the same
letter. In particular, every pairing of a letter with a null
corresponds to an insertion or deletion, and every pairing of a
letter with a different letter corresponds to a substitution. As a
result, $\ham(\phi_\varepsilon(x), \phi_\varepsilon(y))$ will always be at
least $\edit(x, y)$.

The rest of this section is dedicated to prove the following theorem which summarizes the properties of $\phi_\varepsilon$.

\begin{thm}
  For $\varepsilon \le \frac{1}{4}$, there exists a randomized embedding
  $\phi_{\varepsilon}$ from $\PP_n$ to $O(n^{1 + 6
    \varepsilon})$-dimensional Hamming space with the following
  properties.
  \begin{itemize}
  \item For $x,y \in \PP_n$, $\phi_\varepsilon(x)$ and
    $\phi_{\varepsilon(y)}$ encode a sequence of
    $\ham(\phi_{\varepsilon}(x), \phi_{\varepsilon}(y))$ edits from $x$ to
    $y$. In particular, $\ham(\phi_{\varepsilon}(x), \phi_{\varepsilon}(y))
    \ge \ed(x, y)$.
  \item For $x, y \in \PP_n$, $\E[\ham(\phi_{\varepsilon}(x),
    \phi_{\varepsilon}(y))] \le O\left(\frac{1}{\varepsilon} \log n\right) \cdot \edit(x, y)$.
  \item For $x \in \PP_n$, $\phi_\varepsilon(x)$ is sparse in the sense
    that it only contains $n$ non-zero entries. Moreover, if
    $\phi_\varepsilon(x)$ is stored with run-length encoding, it can be
    computed in time $O(n)$.
  \end{itemize}
  \label{thmalignmentembedding}
\end{thm}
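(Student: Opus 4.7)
I would prove the three bulleted properties separately, treating the first and third as essentially mechanical and focusing on the second, which is the technical heart.

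For the first property, an induction on recursion depth shows that reading off the non-zero entries of $\phi_\varepsilon(w)$ left-to-right recovers $w$ in order. Given $\phi_\varepsilon(x)$ and $\phi_\varepsilon(y)$, a single left-to-right scan over $A$ interprets each position $p$ with $\phi_\varepsilon(x)[p]\neq \phi_\varepsilon(y)[p]$ as an edit: both non-zero gives a substitution, $\phi_\varepsilon(x)[p]\neq 0 = \phi_\varepsilon(y)[p]$ a deletion, and the symmetric case an insertion. A second induction shows that applying these edits to $x$ in scan order yields $y$, so the sequence is a valid alignment of length exactly $\ham(\phi_\varepsilon(x), \phi_\varepsilon(y))$, giving $\ed(x,y) \le \ham(\phi_\varepsilon(x), \phi_\varepsilon(y))$. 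For the third property, each recursive call writes at most one non-zero entry into $A$, so $\phi_\varepsilon(w)$ has at most $n$ non-zero entries and is therefore RLE-compressible to size $O(n)$. For the $O(n)$ running time I would precompute the hashes $h(w_1),\ldots,h(w_n)$ and build a linear-time range-minimum-query structure over them (e.g., a Cartesian-tree/LCA construction), after which each of the $O(n)$ recursive calls is $O(1)$.

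The expected distortion bound is the main technical step. The plan is an inductive charging argument over the recursion tree relative to a fixed optimal alignment $\alpha$ of $x$ and $y$ with $k = \ed(x,y)$. Consider a recursive call on substrings $x', y'$ whose edit distance under $\alpha$ is $k'$. Let $M_x, M_y$ be the middle windows of $x', y'$ from which the pivots are drawn, of sizes $\approx 2\varepsilon|x'|$ and $\approx 2\varepsilon|y'|$. Because $x$ and $y$ are permutations, positions in $M_x$ are in bijection with their character-set $C_x$ (and similarly for $C_y$), so min-hash pivot selection is equivalent to picking a uniformly random element of $C_x$ (respectively $C_y$). Since $\alpha$ has only $O(k')$ unmatched characters inside $M_x \cup M_y$, we have $|C_x \cap C_y| \ge 2\varepsilon \min(|x'|,|y'|) - O(k')$ and $|C_x \cup C_y| \le 2\varepsilon \max(|x'|,|y'|) + O(k')$. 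The standard min-hash identity then gives that the probability that both pivots land on the same $\alpha$-matched pair is at least $|C_x \cap C_y|/|C_x \cup C_y| \ge 1 - O(k'/(\varepsilon(|x'|+|y'|)))$. When the pivots match, $\alpha$ decomposes into sub-alignments on the left and right sub-problems with edit costs summing to at most $k'$, and the recursion closes cleanly. When they do not, I would charge the entire subtree in $A$ beneath this node---which contributes at most $O(|x'|+|y'|)$ slots---to the failure event, yielding expected node cost $O(k'/\varepsilon)$. Summing over all nodes at a given depth (where the $k'$ values partition $k$) gives $O(k/\varepsilon)$ per level, and summing over the $O(\log n)$ levels yields $\E[\ham(\phi_\varepsilon(x),\phi_\varepsilon(y))]\le O(k\log n/\varepsilon)$.

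The main obstacle I expect is the bookkeeping in Part~2. First, in the ``pivots match'' case I need to show that $\alpha$ genuinely splits into two sub-alignments whose costs sum to at most $k'$, by assigning each edit of $\alpha$ that straddles the chosen pivot unambiguously to one of the two sub-problems. Second, conditioning on a pivot-failure at a node must not inflate the expected inherited edit distance of the sub-problems past $k'$; otherwise the per-level cost would compound geometrically down the tree. The Ulam-specific feature that makes the argument go through is that each character's $\alpha$-matched partner (if any) is unique, so the event ``both sides pick the same character'' is essentially equivalent to ``both sides pick an $\alpha$-aligned pair,'' which is precisely what allows the min-hash identity to translate into an alignment-preservation statement.
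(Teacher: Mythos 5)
Your treatments of the first and third bullets coincide with the paper's (read off non-zero entries as an implicit alignment; one non-zero entry per recursive call plus an RMQ structure for $O(n)$ time), but for the expected-distortion bound you take a genuinely different route. The paper never fixes a global alignment: it writes an optimal alignment as a chain of at most $2\ed(x,y)$ single insertions/deletions (ordering deletions first so every intermediate string is a permutation), proves the key lemma that two permutations differing by \emph{one} insertion satisfy $\E[\ham(\phi_{\varepsilon,m}(x),\phi_{\varepsilon,m}(y))]\le O(\frac{1}{\varepsilon}\log n)$ --- here the two pivot windows differ by only $O(1)$ elements, so the pivots disagree with probability $O(1/(\varepsilon n))$, a disagreement costs $O(n)$, and a per-level cost of $O(1/\varepsilon)$ accumulates over $O(\log n)$ levels --- and then concludes by the triangle inequality on Hamming distance. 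You instead run a one-shot charging argument over the recursion tree against a fixed optimal alignment $\alpha$, with per-node failure probability $O(k'/(\varepsilon(|x'|+|y'|)))$ and failure cost $O(|x'|+|y'|)$. This can be made to work: the split of $\alpha$ at a matched pivot is fine by monotonicity of alignments together with distinctness of characters, and the distinction you need is between characters common to both windows and $\alpha$-\emph{untouched} characters, of which all but $O(k')$ qualify. What your route buys is a direct accounting against $\alpha$ (no factor-2 substitution emulation, no chain of intermediate permutations); what the paper's route buys is that almost all bookkeeping disappears, and in particular the delicate point you flag --- that the per-node min-hash estimate must survive conditioning on the pivot outcomes at ancestor nodes, which reuse the same hash function on overlapping character sets --- never has to be confronted level-by-level against an evolving alignment. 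That conditioning step is the one place where your sketch is not yet a proof; the paper's own remedy (a remark suggesting a fresh ($1/2$-min-wise independent) hash function per recursion level, so levels can be analyzed independently) would plug the same hole in your argument, or you can argue under full independence that conditioning on the identities of ancestor pivots leaves the relative ranks of the remaining hashes uniform.
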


The first property in the theorem follows from the discussion above. In order to prove that $\E[\ham(\phi_\varepsilon(x), \phi_\varepsilon(y))] \le \edit(x, y)
O\left(\frac{1}{\varepsilon} \log n\right)$, we will consider a series of at most $2\ed(x,y)$ insertions or deletions that are used to transform $x$ into $y$. Each substitution operation can be emulated by an insertion and a deletion.

This results in a series of intermediate strings starting from $x$ and ending in $y$ that differ by one insertion or deletion. Moreover, note that by
  ordering deletions before insertions, each of the intermediate
  strings will still be a permutation. In the following key lemma, we will bound the expected Hamming distance between pairs of strings that differ by one insertion (or equivalently, one deletion). By the triangle inequality, we get the bound on $\E[\ham(\phi_\varepsilon(x), \phi_\varepsilon(y))]$.

\begin{lem}
 Let $x \in \PP_n$ be a permutation, and let $y$ be a permutation
 derived from $x$ by a single insertion. Let $0 < \varepsilon \le \frac{1}{4}$ and let $m$
 be large enough so that $\phi_{\varepsilon, m}$ is well-defined on $x$
 and $y$. Then $\E[\ham(\phi_{\varepsilon, m}(x), \phi_{\varepsilon, m}(y))]
 \le O\left(\frac{1}{\varepsilon} \log n\right)$.
 \label{lemembeddingexp}
\end{lem}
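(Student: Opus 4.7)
The plan is to analyze the recursion level by level, showing that at each level the pivot letters chosen in the embeddings of $x$ and $y$ usually agree, so that the recursion cleanly isolates the single insertion. I will charge the Hamming distance to the first level at which the pivot letters disagree, and bound the expected cost per level by $O(1/\varepsilon)$.

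First I would argue structurally: suppose at some recursive call we have a pair of strings $(x',y')$ with $y'$ equal to $x'$ plus one inserted character, and suppose the min-hash pivots chosen in $x'$ and $y'$ are the same letter. A short case analysis on whether the inserted position in $y'$ lies strictly left of, at, or strictly right of the pivot position in $y'$ shows that exactly one of the two child pairs is a pair of identical strings (contributing zero to Hamming distance, since $\phi_{\varepsilon,m}$ is deterministic given the hash function) and the other child pair again differs by a single insertion. Since the pivot cell at the center of the subarray also stores the same letter on both sides, the good event at this level contributes $0$; the whole problem reduces to the one nontrivial child.

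Next I would bound the probability of a \emph{bad} level, where the pivot letters differ. Let $W^{x'}$ and $W^{y'}$ be the sets of letters appearing in the position windows used on $x'$ and $y'$ respectively. Because $y'$ differs from $x'$ by one insertion and the position windows are centered at the midpoints of lengths differing by one, a case analysis on whether the insertion lies before, inside, or after the position window shows $|W^{x'}\triangle W^{y'}|=O(1)$, while $|W^{x'}\cup W^{y'}|\ge 2\varepsilon|x'|-O(1)$. Using the min-hash property (under the no-collision assumption from Lemma~\ref{lemnocollisions}), the two pivot letters disagree only if the global minimum of $h$ over $W^{x'}\cup W^{y'}$ lies in $W^{x'}\triangle W^{y'}$, which occurs with probability at most $O(1/(\varepsilon|x'|))$. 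Conditional on a first bad event occurring at level $l$ with substring size $N_l$, the contribution of the corresponding subtree is at most the total number of nonzero entries in that subarray, i.e.\ $O(N_l)$.

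Multiplying gives expected per-level contribution $O(1/(\varepsilon N_l))\cdot O(N_l)=O(1/\varepsilon)$ (with the trivial bound $O(N_l)=O(1/\varepsilon)$ handling the $O(\log(1/\varepsilon))$ bottom levels where $N_l=O(1/\varepsilon)$ and the probability estimate becomes vacuous). Since $\varepsilon\le 1/4$ implies $1/2+\varepsilon\le 3/4$, substring sizes shrink by a constant factor per level, so the number of levels is $m=O(\log n)$; summing across levels yields $\E[\ham(\phi_{\varepsilon,m}(x),\phi_{\varepsilon,m}(y))]=O((\log n)/\varepsilon)$. The main obstacle is the window symmetric-difference bound in Step~2: one must verify in all three relative positions of the inserted letter (left of, inside, or right of the position window in $y'$) that the windows of \emph{letters} differ only by $O(1)$ characters after accounting for the length-one position shift between $x'$ and $y'$; once this is done, the standard min-hash collision argument on a shared universe of letters delivers the claimed probability.
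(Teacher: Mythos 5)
Your proposal is correct and follows essentially the same route as the paper's proof: pivots in $x'$ and $y'$ agree except with probability $O(1/(\varepsilon|x'|))$ because the window letter sets differ by only $O(1)$ elements, a disagreement costs at most the trivial $O(|x'|)$ bound, and the matching-pivot case reduces to one child differing by a single insertion, giving $O(1/\varepsilon)$ expected cost per level over $O(\log n)$ levels. Your treatment of the bottom levels and the explicit child-case analysis are slightly more detailed than the paper's, but the argument is the same.
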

\begin{proof}

  Observe that the set of letters in position-range $[(1/2 -
    \varepsilon)|x|: (1/2 + \varepsilon)|x|]$ in $x$ differs by at most
  $O(1)$ elements from the set of letters in position-range $[(1/2 -
    \varepsilon)|y|: (1/2 + \varepsilon)|y|]$ in $y$. Thus with probability
  $1 - O(1/(\varepsilon n))$, there will be a letter $l$ in the overlap
  between the two ranges whose hash is smaller than that of any other
  letter in either of the two ranges.\footnote{Note by Lemma
    \ref{lemnocollisions}, the probability of any collisions is
    negligible.} In other words, the pivot in $x$ (i.e., the
  letter in the position range with minimum hash) will differ from the
  pivot in $y$ with probability $O(1/(\varepsilon n))$.

  Therefore,
  \begin{equation*}
    \begin{split}
      \E[\ham(\phi_{\varepsilon, m}(x), \phi_{\varepsilon, m}(y))] & = \Pr[\text{pivots differ}] \cdot \E[\ham(\phi_{\varepsilon, m}(x), \phi_{\varepsilon, m}(y)) \mid \text{ pivots differ}]\\
      &  \quad + \Pr[\text{pivots same}] \cdot \E[\ham(\phi_{\varepsilon, m}(x), \phi_{\varepsilon, m}(y)) \mid \text{ pivots same}] \\
      & \le O\left(\frac{1}{\varepsilon n}\right) \cdot \E[\ham(\phi_{\varepsilon, m}(x), \phi_{\varepsilon, m}(y)) \mid \text{ pivots differ}] \\
      & \quad + \E[\ham(\phi_{\varepsilon, m}(x), \phi_{\varepsilon, m}(y)) \mid \text{ pivots same}].
    \end{split}
  \end{equation*}

  In general, $\ham(\phi_{\varepsilon, m}(x), \phi_{\varepsilon, m}(y))$ cannot exceed $O(n)$. Thus
  \begin{equation*}
    \begin{split}
      \E[\ham(\phi_{\varepsilon, m}(x), \phi_{\varepsilon, m}(y))] &  \le O\left(\frac{1}{\varepsilon n}\right) \cdot O(n) + \E[\ham(\phi_{\varepsilon, m}(x), \phi_{\varepsilon, m}(y)) \mid \text{ pivots same}] \\
      & \le O\left(\frac{1}{\varepsilon}\right) + \E[\ham(\phi_{\varepsilon, m}(x), \phi_{\varepsilon, m}(y)) \mid \text{ pivots same}].
    \end{split}
  \end{equation*}

  If the pivot in $x$ is the same as in $y$, then the insertion must
  take place to either the left or the right of the pivot. Clearly
  $\phi_{\varepsilon, m}(x)$ and $\phi_{\varepsilon, m}(y)$ will agree on
  the side of the pivot in which the edit does not occur. Inductively
  applying our argument to the side on which the edit occurs, we incur
  a cost of $O(1/\varepsilon)$ once for each level in the recursion. The
  maximum depth of the recursion is $O\left(\log_{1/2 + \varepsilon}
  \frac{1}{n}\right) = O(\log n)$. This gives us
  \begin{equation*}
    \begin{split}
      \E[\ham(\phi_{\varepsilon, m}(x), \phi_{\varepsilon, m}(y))] & \le O\left(\frac{1}{\varepsilon}\right) \cdot \log n,
    \end{split}
  \end{equation*}
  as desired.
\end{proof}

It remains only to analyze the run time of computing the
embedding. Notice that for $x \in \PP_n$, $\phi_\varepsilon(x)$ is a
string with $n$ non-zero entries. Consequently, $\phi_\varepsilon(x)$ can
be stored in space $\Theta(n)$ if runs of zeros are stored using
run-length encoding. Moreover, if we store $\phi_\varepsilon(x)$ with
run-length encoding, we will show that we can compute
$\phi_\varepsilon(x)$ in time $O(n)$. Recall that the Range Minimum Query
problem can be solved with linear preprocessing time and constant
query time \cite{RangeMinimumQuery}. Consequently, with linear
preprocessing time, we can build a data structure which supports
constant-time queries that take a contiguous substring of $x$ and
return which letter has minimum hash. Using this, each recursive step
in the embedding can be performed in constant time. Since each such
step writes one letter to the output, there are only $n$ steps in
total, and thus the run time is $O(n)$.

\begin{rem}
Simulating a fully independent hash function $h$ can be done using
$O(n\log n)$ random bits \cite{amazinghash}. It turns out that if we
are willing to tolerate a run time of $O(n \log n)$, then this can be
reduced to $\polylog(n)$ bits, as follows.

Select hash functions $h_1, \ldots, h_m$ from the
$1/2$-min-wise independent family $\mathcal{H}$ of \cite{HashReference} mapping $\Sigma$ to $r \log
n$ bits for a sufficiently large constant $r$. Then modify Algorithm
\ref{alg:align-perm} to use $h_i$ at the $i$-th level of recursion. By
using a separate hash function at each level of recursion, we allow
ourselves to analyze the levels independently. Because each $h_i$ is
$1/2$-min-wise independent, the analysis then follows
similarly as in the proof of Lemma \ref{lemembeddingexp}.

Notice that each $h_i$ uses $O(\log n)$ random bits \cite{HashReference}, resulting in
$O(\log^2 n)$ random bits in total. Moreover, a naive implementation
of the algorithm yields run time $O(n \log n)$.
\end{rem}

\section{Embedding Periodic-Free Substrings in the Low Edit Distance Regime}\label{secperiodicfree}

We say that a string is \emph{$(D, R)$-periodic} if it is length at
least $D$ and is periodic with period at most $R$ (for $R \leq D$). A string is
\emph{$(D, R)$-periodic free} if it contains no contiguous $(D,
R)$-periodic substrings.

Suppose we have an embedding $\psi$ from edit distance in $\Sigma^n$
into Hamming space with subpolynomial distortion $\gamma(n)$, meaning
there is some value $T \ge 1$ such that for all $x, y \in \Sigma^n$, we have
$\ed(x, y) \le \frac{1}{T}\ham(\psi(x), \psi(y)) \le \gamma(n) \cdot
\ed(x, y)$. Our goal is to use such an embedding as a black box in
order to obtain a new embedding for the so-called low edit distance
regime. The new embedding, which would be parameterized by a value
$K$, would take any two strings $x, y \in \Sigma^n$ with $\ed(x, y)
\le K$ and map $x$ and $y$ to Hamming space with distortion
$\gamma'(K)$, a function of $K$ rather than a function of $n$.

In this section we make progress toward such an embedding with the
added constraints that our strings $x$ and $y$ are $(D, R)$-periodic
free for $D \in \poly(K)$ of our choice and $R \in O(K^3)$. Our
embedding $\alpha$ takes two such strings with $\ed(x, y) \le K$ and
maps them into Hamming space with distortion $\gamma(\poly(K))$. If we
select the black-box embedding $\psi$ to be the embedding
of Ostrovsky and Rabani \cite{Embedding}, then this gives distortion
$2^{O\left(\sqrt{\log K \log \log K}\right)}$.

The reason for restricting $x$ and $y$ to be $(D, R)$-periodic free is
somewhat subtle. Ultimately, it will boil down to the following lemma.
\begin{lem}
Suppose $x$ is a $(D, R)$-periodic free word of length at most $D +
R$. Then each of the $D$-letter substrings of $x$ are distinct.
  \label{lemperiodicnew}
\end{lem}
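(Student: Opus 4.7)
The plan is to argue by contradiction. I would assume that there exist two distinct starting positions $i < j$ in $x$ such that $x[i:i+D-1] = x[j:j+D-1]$, set $p := j - i \ge 1$, and then exhibit a contiguous $(D,R)$-periodic substring of $x$, contradicting the hypothesis on $x$.

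The first step is a length bookkeeping check. Since both length-$D$ substrings must fit inside $x$, and $|x| \le D + R$, we have $j + D - 1 \le D + R$; combined with $i \ge 1$, this yields $p \le R$. The second step unpacks the assumed equality position by position: for every $\ell \in \{0, 1, \ldots, D-1\}$, we have $x_{i + \ell} = x_{j + \ell} = x_{(i + \ell) + p}$. Equivalently, the contiguous substring $y := x[i : j + D - 1]$, which has length exactly $D + p$, satisfies $y_k = y_{k+p}$ whenever this relation makes sense inside $y$, so $y$ is $p$-periodic.

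Finally, $y$ is a contiguous substring of $x$ of length $D + p \ge D$ whose period $p$ satisfies $p \le R$. By the definition of $(D,R)$-periodic, $y$ is a $(D,R)$-periodic substring of $x$, contradicting the assumption that $x$ is $(D,R)$-periodic free. Hence no two length-$D$ substrings of $x$ starting at distinct positions can coincide, which is the desired conclusion.

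The argument is essentially a bookkeeping exercise and I do not expect any real obstacle. The only points that warrant a moment of care are confirming that $p \le R$ (which is where the length hypothesis $|x| \le D + R$ is used) and that the overlap substring $y$ has length at least $D$ (automatic since $p \ge 1$, giving length $D + p$). No appeal to Lemma~\ref{lemgcdperiodic} or to any deeper periodicity theory is needed here; the periodicity is exhibited directly from the overlap.
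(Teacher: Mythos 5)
Your proof is correct and follows essentially the same argument as the paper: assume two equal length-$D$ windows at offset $p=j-i\le R$ and exhibit a contiguous $p$-periodic substring of length at least $D$, contradicting $(D,R)$-periodic freeness. The only cosmetic difference is that you take the span $x[i:j+D-1]$ of length $D+p$ as the periodic witness, whereas the paper observes that the first window $a$ itself is $(j-i)$-periodic; both verifications are the same position-by-position bookkeeping.
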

\begin{proof}
  Suppose that two $D$-letter substrings $a = x_i\cdots x_{i + D - 1}$
  and $b = x_j \cdots x_{j + D - 1}$ of $x$ are equal with $i <
  j$. Then for $t \in \{1, \ldots, D\}$, since the $t$-th letter of
  $a$ equals the $t$-th letter of $b$, we have that $x_{i + t - 1} =
  x_{j + t - 1}$. It follows that $a$ is $(j - i)$-periodic (by definition). Because
  $a$ and $b$ are $D$-letter substrings and $x$ is length $D + R$, we
  have that $j - i \le R$, meaning that $a$ is $(D, R)$-periodic, a
  contradiction.
\end{proof}

The main step in our embedding is to partition the strings $x$ and
$y$ into parts of length $\poly(K)$ in a way so that the sum of the
edit distances between the parts equals $\ed(x, y)$ (with some
probability), and then to apply the black-box embedding $\psi$ to each
individual part. This step of the embedding is referred to as the \emph{primary embedding}. The primary embedding succeeds with some probability, but its distortion is not well bounded in expectation. By applying techniques from  \cite{UlamEmbedding}, we are able to take any high-probability low-distance-regime embedding, and turn it into an embedding with good expected distortion (Theorem \ref{thmprobtoexpected}), thereby completing the full construction of our embedding. The following theorem states the main result in this section.

\begin{thm}
  Suppose we have an embedding $\psi$ from edit distance in $\Sigma^n$
  to Hamming space with subpolynomial distortion $\gamma(n)$ (and with
  $\gamma(n) \ge 2$ for all $n$). Let $K \in \mathbb{N}$ and pick some
  $D \in \poly(K)$. Then there exists $R \in O(K^3)$ and an embedding
  $\alpha: \Sigma^n \rightarrow \{0, 1\}$ such that for $(D,
  R)$-periodic free $x, y \in \Sigma^n$ with $\ed(x, y) \le K$, we have
  $$\frac{1}{\gamma(O(K^3D))} \Omega\left(\ed(x, y)\right) \le K
  \Pr[\alpha(x) \neq \alpha(y)] \le O(\ed(x, y)).$$ In other words,
  $\alpha$ is an embedding from edit distance (at most $K$) between
  $(D, R)$-periodic free strings to scaled Hamming distance with
  expected distortion at most $O(\gamma(O(K^3D)))$.
  \label{thmloweditregime}
\end{thm}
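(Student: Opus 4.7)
The plan is to construct $\alpha$ in two stages. In the first stage I would build a \emph{primary embedding} $\beta$ that maps $(x,y)$ into a Hamming space such that, with probability $\Omega(1/K)$ over its random choices, the Hamming distance approximates $\ed(x,y)$ up to distortion $\gamma(O(K^3 D))$, while never undershooting $\ed(x,y)$. In the second stage I would invoke Theorem~\ref{thmprobtoexpected} (adapted from \cite{UlamEmbedding}), which converts any such high-probability Hamming-space embedding into a single-bit embedding $\alpha:\Sigma^n \to \{0,1\}$ whose $K\Pr[\alpha(x)\neq\alpha(y)]$ is sandwiched between $\Omega(\ed(x,y))/\gamma(O(K^3D))$ and $O(\ed(x,y))$, exactly as claimed.

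The primary embedding is built from random min-hash pivots of $D$-substrings. Choose $R = \Theta(K^3)$ and a window parameter $W \le R$ with $W = \omega(K)$, and draw a hash function $h$ on $\Sigma^D$ from a sufficiently independent family with range large enough that Lemma~\ref{lemnocollisions} applies. Declare a position $i$ to be a \emph{pivot} of $x$ if $h(x[i:i+D-1])$ achieves the minimum hash among all $D$-substrings starting within distance $W$ of $i$; define pivots of $y$ analogously. By Lemma~\ref{lemperiodicnew}, the $(D,R)$-periodic-free hypothesis ensures all $D$-substrings in any length-$(D+R)$ window are distinct, so under the no-collisions assumption each window has a unique min-hash and the pivots are well-defined and $\Theta(W)$-spaced. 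Then $\beta(x)$ is the concatenation, in reading order, of the pivot substring followed by $\psi$ applied to the piece between consecutive pivots; this way matched pivots in $x$ and $y$ align their pieces by content rather than by position.

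The technical crux is the claim that with probability $\Omega(1/K)$, the pivots of $x$ and $y$ correspond exactly under some optimal alignment, yielding a matched partition with $\sum_i \ed(x^{(i)}, y^{(i)}) = \ed(x,y)$. Fix an optimal alignment and call a $D$-substring \emph{damaged} if any of its $D$ positions is touched by an edit; then each edit damages $O(D)$ substrings, for $O(KD)$ damaged substrings overall. Within any window of $\Omega(W)$ candidate starting positions, the fraction of damaged substrings is $O(K/W)$, so with probability $1 - O(K/W)$ the argmin-hash in that window is undamaged; when it is, that pivot's content appears identically in the corresponding window of $y$ and Lemma~\ref{lemperiodicnew} forces the $y$-window's argmin to pick out the same content. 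Combining across the $O(n/W)$ windows (and handling separately the $O(K)$ windows that actually contain edits) yields alignment with probability at least $\Omega(1/K)$. Conditional on alignment, each piece has length $O(WD) = O(K^3D)$, so applying $\psi$ per piece gives total Hamming distance within factor $\gamma(O(K^3D))$ of $\ed(x,y)$.

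The main obstacle will be this probabilistic pivot-alignment analysis: one must jointly tune $D \in \poly(K)$, $R \in O(K^3)$, and $W$ so that Lemma~\ref{lemperiodicnew} yields unique argmins per window, the damaged-substring fraction per window is small enough to give $\Omega(1/K)$ alignment probability after accounting for correlated failures across windows, and each piece length stays $O(K^3D)$. Handling the $O(K)$ ``bad'' windows near edits is the delicate part, probably resolved by absorbing their contribution into the $1/K$ success probability rather than insisting on alignment there. Once the primary embedding is in hand, plugging it into Theorem~\ref{thmprobtoexpected} absorbs the $\Omega(1/K)$ success probability into the factor of $K$ appearing on the middle of the theorem's inequality, completing the proof.
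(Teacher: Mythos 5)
Your two-stage plan (a min-hash--based primary partitioning embedding followed by Theorem~\ref{thmprobtoexpected}) is the same skeleton as the paper's, but there is a decisive gap in the probability requirement. Theorem~\ref{thmprobtoexpected} requires the primary embedding's distortion bound to hold with probability at least $1 - \frac{1}{K\cdot F(K)}$, i.e., with \emph{failure} probability at most $\frac{1}{K\cdot F(K)}$; the factor $K$ in $K\cdot F(K)\Pr[\alpha(x)\neq\alpha(y)]$ comes from subsampling coordinates at rate $\frac{1}{T\cdot K\cdot F(K)}$ inside the construction of $\alpha$, not from ``absorbing'' a small success probability of the primary embedding. A primary embedding that is correct only with probability $\Omega(1/K)$ cannot be plugged in: on the complementary event (probability $1-\Omega(1/K)$) the Hamming distance is only bounded from below, so for $\ed(x,y)=1$ one could have $\Pr[\alpha(x)\neq\alpha(y)]=\Omega(1)$ and hence $K\Pr[\alpha(x)\neq\alpha(y)]=\Omega(K)$, violating the upper bound $K\Pr[\alpha(x)\neq\alpha(y)]\le O(\ed(x,y))$ you are trying to prove. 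This is exactly why the paper picks $C\in O(K^2)$ with $K\cdot\gamma(64DKC)\le C$ and engineers the primary embedding (Theorem~\ref{thmloweditregimewithhighprob}) to succeed with probability $1-\frac{1}{C}$ before invoking Theorem~\ref{thmprobtoexpected}.

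Separately, your single-level pivot scheme does not give even constant success probability uniformly in $D$. Each pivot is an argmin over roughly $W$ candidate $D$-substrings, and each edit damages $\Theta(D)$ of them, so the chance that a window's argmin is damaged is $O(D/W)$ per edit (your stated per-window fraction $O(K/W)$ contradicts your own count of $O(KD)$ damaged substrings). Since Lemma~\ref{lemperiodicnew} guarantees distinct $D$-substrings only within stretches of length $D+R$ with $R\in O(K^3)$, you are forced to take $W\le O(K^3)$, and then $O(KD/W)$ is not small for, say, $D=K^{10}$; enlarging $W$ to depend on $D$ breaks the distinctness hypothesis, because the strings are only assumed $(D,R)$-periodic free. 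The paper circumvents precisely this with a two-level structure in Algorithm~\ref{alg:low-dist-embedding}: a uniformly random sub-window of length $W'=D+32KC$ inside a window of length $W=32KC\cdot W'$ (so touched letters land in the chosen sub-window with probability $O(KW'/W)=O(1/C)$, independently of $D$), and min-hash only over the roughly $32KC$ sub-sub-windows of that sub-window to resolve the at-most-$k_t$ shift, giving misalignment probability $O(k_t/(KC))$ and total failure probability $O(1/C)$, while distinctness is needed only within length $W'\le D+R$. Without this decoupling of $D$ from the failure probability, and without the high-probability guarantee demanded by Theorem~\ref{thmprobtoexpected}, the proposed argument does not go through.
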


The following result follows by plugging in the embedding of Ostrovsky and
Rabani \cite{Embedding},
which is defined over binary strings, and maps edit distance to
$\ell_1$-distance with distortion $2^{O\left(\sqrt{\log n \log \log
    n}\right)}$.
\begin{cor}
 Let $K \in \mathbb{N}$ and pick some $D \in \poly(K)$. Then there
 exists $R \in O(K^3)$ and an embedding $\alpha: \{0, 1\}^n \rightarrow
 \{0, 1\}$ such that for $(D, R)$-periodic free $x, y \in \{0, 1\}^n$ with $\ed(x, y) \le K$,
 we have
  $$\frac{1}{2^{O\left(\sqrt{\log K \log \log K}\right)}}
 \Omega\left(\ed(x, y)\right) \le K \Pr[\alpha(x) \neq \alpha(y)] \le
 O(\ed(x, y)).$$ In other words, $\alpha$ is an embedding from edit
 distance (at most $K$) to scaled Hamming distance with expected distortion
 $2^{O\left(\sqrt{\log K \log \log K}\right)}$.
 \label{corpluginOR}
\end{cor}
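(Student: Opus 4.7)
The plan is to directly instantiate Theorem \ref{thmloweditregime} with $\psi$ taken to be the Ostrovsky-Rabani embedding, then simplify the resulting distortion using $D \in \poly(K)$.

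The Ostrovsky-Rabani construction is natively an embedding of edit distance in $\{0,1\}^n$ into $\ell_1$ with distortion $\gamma(n) = 2^{O(\sqrt{\log n \log \log n})}$, whereas Theorem \ref{thmloweditregime} requires its black-box $\psi$ to target Hamming space. I would bridge this with the standard unary-encoding reduction: since the Ostrovsky-Rabani vectors are integer-valued with coordinates bounded by $\poly(n)$, I can replace each coordinate by a block of that many $1$'s padded to a fixed length with $0$'s, obtaining a binary-valued $\psi$ whose Hamming distance equals twice the original $\ell_1$ distance. This preserves the distortion $\gamma(n)$ up to an absolute constant and only blows up the dimension polynomially, which Theorem \ref{thmloweditregime} does not constrain. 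By replacing $\gamma$ with $\max(\gamma, 2)$ where necessary, the hypothesis $\gamma(n) \ge 2$ is satisfied as well.

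Plugging this $\psi$ into Theorem \ref{thmloweditregime} immediately yields $R \in O(K^3)$ and an embedding $\alpha$ of $(D,R)$-periodic free strings to scaled Hamming distance with expected distortion $O(\gamma(O(K^3 D)))$; the upper bound $K \Pr[\alpha(x) \neq \alpha(y)] \le O(\ed(x,y))$ transfers directly. Because $D \in \poly(K)$, we have $K^3 D \in \poly(K)$, so $\log(K^3 D) = O(\log K)$ and $\log \log(K^3 D) = \log \log K + O(1)$, which gives $\gamma(O(K^3 D)) = 2^{O(\sqrt{\log K \log \log K})}$. Substituting yields exactly the stated bounds. The whole argument is essentially a tautological corollary of Theorem \ref{thmloweditregime}, and no step is genuinely hard; the only care needed is in the unary conversion and the routine verification that $\gamma$ composed with a polynomial in $K$ collapses as claimed, since the square-root of $\log$ absorbs polynomial factors inside its argument.
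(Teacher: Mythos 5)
Your proposal is correct and takes essentially the same route as the paper: plug the (converted) Ostrovsky--Rabani embedding into Theorem~\ref{thmloweditregime} as the black-box $\psi$, bridge $\ell_1$ to Hamming by unary encoding of coordinates, and then absorb the $\poly(K)$ argument inside the $\sqrt{\log K \log\log K}$ term. The only imprecision is that you assert the Ostrovsky--Rabani vectors are integer-valued with $\poly(n)$-bounded coordinates; in fact they are real-valued, so before the unary encoding one must first shift, rescale, and round the coordinates (which costs only a constant factor in distortion, exactly as the paper does), after which your argument goes through verbatim.
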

\begin{proof}
  Although the embedding of Ostrovsky and Rabani embeds into $\ell_1$
  rather than into Hamming space, it can be converted using standard
  techniques to an embedding into Hamming space with additional blow-up in dimension and constant additional
  multiplicative distortion.\footnote{In particular, if $x \in \mathbb{R}^\ell$ is the output of the embedding of \cite{Embedding}, then one can transform it as follows. First by shifting each coordinate's value, one can assume that the values of coordinates are always between $0$ and $O(n^2)$, since if the embedding ever had two outputs differing by more than $\Omega(n^2)$ in some coordinate, then the distortion of the embedding would be at least $\Omega(n)$. Then after rescaling up by $\ell$ and rounding each of the coordinates, one can presume each coordinate to be a positive integer of magnitude at most $O(n^2 \ell)$, while incurring at most constant additional distortion. Each
    coordinate can then be encoded in $O(n^2\ell)$ Hamming coordinates by
    mapping the value $i$ to the adjacently-placed values $1, 1, \ldots, 1, 0, 0,
    \ldots, 0$, where the $1$ is repeated $i$ times.} Therefore, applying Theorem
  \ref{thmloweditregime} to the embedding of Ostrovsky and Rabani, we
  get $\alpha$ satisfying
    $$\frac{1}{2^{O\left(\sqrt{\log O(K^3D) \log \log O(K^3D)}\right)}} \Omega\left(\ed(x, y)\right) \le K
  \Pr[\alpha(x) \neq \alpha(y)] \le O(\ed(x, y)).$$
  Since $D \in \poly(K)$, it follows that
    $$\frac{1}{2^{O\left(\sqrt{\log K \log \log K}\right)}} \Omega\left(\ed(x, y)\right) \le K
  \Pr[\alpha(x) \neq \alpha(y)] \le O(\ed(x, y)),$$
  as desired.
\end{proof}

The primary embedding is given by Algorithm \ref{alg:low-dist-embedding}. The algorithm uses as parameters a window-size $W$, a sub-window size $W'$, and a sub-sub-window
size $W''$. We will use the term \emph{window} to refer to $W$
consecutive letters in a word, the term \emph{sub-window} to refer to
$W'$ consecutive letters in a window, and the term
\emph{sub-sub-window} to refer to $W''$ consecutive letters within a
sub-window. We will assign values to $W, W', W''$ at the end of our
analysis.

Algorithm \ref{alg:low-dist-embedding} is designed so that when two strings $x$ and $y$ differ by a small number of edits, they are likely to be broken into similar window sequences. In particular, suppose that $A_t(x)$ and $A_t(y)$ are aligned with each other, meaning that some optimal sequence of edits from $x$ to $y$ maps the first letter of $A_t(x)$ to the first letter of $A_t(y)$ without editing the letter itself. Then consider what happens when we construct $A_{t + 1}(x)$ and $A_{t + 1}(y)$. By picking a sub-window uniformly at random, we guarantee that with high probability (as a function of $W,W'$) no edits occur directly within that sub-window. However, if $x$ and $y$ differ by an insertion or deletion
prior to that sub-window, the sub-window within $x$ may be misaligned
with the sub-window within $y$. Nonetheless, the set of $W''$-letter
sub-sub-windows of the sub-window of $x$ will be almost the same as
the set of $W''$-letter sub-sub-windows of the sub-window of $y$. By selecting the sub-sub-window with minimum hash as the
start position for the next window, we are then able to guarantee that
with high probability (as a function of $W'$) we pick start positions for $A_{t + 1}(x)$ and $A_{t + 1}(y)$ which are
aligned with each other. Note that in order for the min-hash technique to work, one needs the sub-sub-windows of any given sub-window to all be distinct; for appropriately selected $W, W', W''$, this will be a consequence of Lemma \ref{lemperiodicnew}.

Before formally analyzing Algorithm \ref{alg:low-dist-embedding}, we discuss its use of hash functions. For each $t \in \{1, \ldots, \frac{W}{2W'}\}$, the algorithm requires a $2n$-wise independent hash function $h_t$ mapping $\Sigma^{W''}$ to $\Theta(\log{n})$ bits. Note that the $h_t$ hash functions can be efficiently simulated using the family of \cite{amazinghash}. In particular, one can first use a pairwise independent hash function $g$ to map elements of $\Sigma^{W''}$ to $\Theta(\log n)$ bits (while avoiding collisions with high probability). Then, one can then select a function $h$ mapping $\Theta(\log n)$ bits to $\Theta(\log n)$ bits from the family of \cite{amazinghash}. Finally one can then define $h_t(u) = h((t, g(u))$, where $(t, g(u))$ represents the tuple containing $t$ and $g(u)$. For strings $x, y \in \Sigma^n$, and for each $t$, with high probability $h_t$ will be independent on the $W''$-letter substrings of $x$ and $y$. By the union bound, the independence of $h_t$ holds simultaneously for all $t$ with high probability.

\begin{algorithm}[t]
\caption{The Primary Embedding\label{alg:low-dist-embedding}}
Given:
(i) an input string $w$ of length at most $n$,
(ii) parameters $W'' \ll W' \ll W$,
(iii) an embedding $\psi$ from edit distance into Hamming space, where the input strings are of length at most $W$, and the output has a fixed length,
(iv) $s_1, \ldots, s_{2n / W}$ randomly selected elements of $\{1, \ldots, \frac{W}{2W'}\}$, and
(v) for each $t \in \{1, \ldots, \frac{W}{2W'}\}$, a $2n$-wise independent hash function $h_t$ mapping $\Sigma^{W''}$ to $\Theta(\log{n})$ bits.

\begin{enumerate}
\item Let $b = b_1 \cdots b_{2W}$ consist of dummy letters not in $\Sigma$. Define $w'$ to be the concatenation $wb$.
\item Construct a window sequence $A(w) = A_1(w), \ldots, A_{2n / W}(w)$ of windows in $w'$, by first setting $A_1(w)$ to consist of the first $W$ letters $w'_1 \cdots w'_W$ of $w'$, and then constructing each $A_{t + 1}(w)$ from $A_t(w)$ as follows:
\begin{enumerate}
\item If $A_{t}(w)$ consists entirely of dummy letters, then define $A_{t +  1}(w) = A_t(w)$. Otherwise, continue to the next step.
\item Divide the second half of the window $A_t(w)$ into $\frac{W}{2W'}$ non-overlapping sub-windows of size $W'$, and consider the $s_t$-th such sub-window.
\item Each substring of length $W''$ inside the sub-window is called a sub-sub-window (sub-sub-windows may overlap). Compute the hash $h_t(v)$ of each sub-sub-window $v$.
\item The window $A_{t+1}(w)$ will consist of $W$ letters starting at the same position as the sub-sub-window with the smallest hash. (Ties can be broken arbitrarily but, it turns out, will not occur with high probability.)
\end{enumerate}
\item Define $P(w)$ to be a partition of $w$ where the $i$-th part starts at the first letter of $A_i(w)$ and ends prior to the first letter of $A_{i+1}(w)$, excluding any dummy letters.
\item Embed each part of $P(w)$ using $\psi$, and concatenate the embedding results. The dimension of the concatenation is $\ell \cdot |P(w)|$ where $\ell$ denotes the output dimension of $\psi$. In order to ensure that the outputs of our algorithm are of a fixed dimension, pad the concatenated result with zeros as needed to bring its length to $\ell \cdot \lceil 2n / W \rceil$. Return the padded string.
\end{enumerate}
\end{algorithm}

We now present a formal analysis of Algorithm \ref{alg:low-dist-embedding}. For simplicity, we will ignore floors
and ceilings when convenient. We start with two definitions that will aid us in our discussion of the algorithm.

\begin{defn}
  Let $x$ and $y$ be words over $\Sigma \cup \{b_1, b_2, \ldots\}$. Fix some optimal sequence
  of edits between $x$ and $y$. Then a letter in $x$ or $y$ is
  said to be \emph{touched} if it is involved in an edit, or if an insertion or deletion occurs immediately to its right in the sequence of edits.
  A letter $a$ in $x$ is said to be \emph{siblings} with a letter $b$ in
  $y$ if either $a$ and $b$ are the first letters of $x$ and $y$, or
  the edits map $a$ to $b$ while leaving both untouched.
\end{defn}

\begin{ex}
Consider $x = abcdefghi$ and $y = bpdeqrsfghia$. Moreover, consider the following sequence of edits:
\[
\begin{array}{c c c c c c c c c c c c c}
a & b & c & d & e &   &   &   & f & g & h & i &  \\
  & b & p & d & e & q & r & s & f & g & h & i & a
\end{array}
\]
That is, we delete and reinsert the $a$, we substitute the $c$ with a $p$, and we insert the $q$, $r$, and $s$.

The touched letters in $x$ are $a$, $c$, $e$, and $i$. The touched letters in $y$ are $p$, $e$, $q$, $r$, $s$, $i$, and $a$. Between $x$ and $y$, the pairs of siblings are $(b, b)$, $(d, d)$, $(f, f)$, $(g, g)$, and $(h, h)$.
\end{ex}

In general, if there is a sequence of adjacent untouched letters in $x$, then the siblings of those letters will form the same adjacent sequence in $y$. This property of touched letters is the main motivation for the above definition.

\begin{defn}
  Let $P$ and $Q$ be partitions of strings $x$ and $y$ into $j$ parts
  $P_1, \ldots, P_j$ and $Q_1, \ldots, Q_j$ respectively.  We call
  $(P, Q)$ \emph{edit-preserving} if there exists an optimal sequence
  of edits from $x$ to $y$ such that the first letter of $P_i$ is
  siblings with the first letter of $Q_i$ for each $i$.
\end{defn}

Intuitively, if $(P, Q)$ is edit-preserving, then we can focus on the parts of $P$ and $Q$ separately when trying to embed $x$ and $y$ into Hamming space. This is formalized by the following lemma.

\begin{lem}
  Let $P$ and $Q$ be partitions of strings $x$ and $y$ into $j$ parts
  $P_1, \ldots, P_j$ and $Q_1, \ldots, Q_j$ respectively. If $(P, Q)$
  is edit-preserving, then
  $$\edit(x, y) = \sum_{i = 1}^j \edit(P_i, Q_i).$$
  \label{lempartitionnew}
\end{lem}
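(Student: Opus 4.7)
The plan is to prove both inequalities $\edit(x, y) \le \sum_i \edit(P_i, Q_i)$ and $\edit(x, y) \ge \sum_i \edit(P_i, Q_i)$. The first is immediate: taking an optimal edit sequence from $P_i$ to $Q_i$ for each $i$ and concatenating them yields an edit sequence transforming $x = P_1 \cdots P_j$ into $y = Q_1 \cdots Q_j$ of total length $\sum_i \edit(P_i, Q_i)$.

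For the reverse inequality I would invoke the edit-preserving hypothesis to fix an optimal sequence of edits $\sigma$ from $x$ to $y$ such that for each $i$, the first letter $a_i$ of $P_i$ is siblings with the first letter $b_i$ of $Q_i$. Viewing $\sigma$ as an alignment whose non-match operations cost $\edit(x,y)$ in total, the siblings relation says that $a_i$ is matched with $b_i$, and for $i \ge 2$ neither letter is touched, so no insertion or deletion abuts this match on the right. Each such match gives a canonical cut point, and I would split $\sigma$ into contiguous sub-sequences $\sigma_1, \ldots, \sigma_j$, where $\sigma_i$ starts with the match of $(a_i, b_i)$ and ends just before the match of $(a_{i+1}, b_{i+1})$ (with $\sigma_j$ running to the end). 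The key claim is then that each $\sigma_i$ consumes exactly the letters of $P_i$ from $x$ and produces exactly the letters of $Q_i$ in $y$, so it is a valid alignment of $P_i$ with $Q_i$ using at least $\edit(P_i, Q_i)$ non-match operations. Summing and using that every operation of $\sigma$ lies in exactly one $\sigma_i$ yields $\edit(x,y) \ge \sum_i \edit(P_i, Q_i)$.

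The main obstacle I anticipate is verifying that the cuts are truly clean, i.e., that no operation of $\sigma$ straddles a siblings boundary. This is exactly what the ``insertion or deletion immediately to the right'' clause in the definition of \emph{touched} is designed to rule out: without it, an insertion placed between $P_{i-1}$ and $a_i$ in $x$, or a deletion adjacent to $a_i$, could ambiguously belong to either $\sigma_{i-1}$ or $\sigma_i$, and summing the per-piece edit counts would no longer give a valid lower bound on $\edit(x, y)$. Making the argument airtight requires fixing a concrete left-to-right model of edit sequences and checking that the match at $(a_i, b_i)$ sits at a single well-defined alignment column, but I view this as bookkeeping rather than a conceptual difficulty.
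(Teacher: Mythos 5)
Your proposal is correct and takes essentially the same route as the paper: the paper likewise fixes the optimal edit sequence supplied by the edit-preserving hypothesis, declares an edit to occur within $P_i$ if it lies between the first letter of $P_i$ and the first letter of $P_{i+1}$, and observes that the edits within $P_i$ transform $P_i$ into $Q_i$, which is exactly your splitting of $\sigma$ into $\sigma_1,\ldots,\sigma_j$ at the sibling matches (the paper leaves the easy concatenation direction implicit). One small quibble: an insertion placed just before the first letter $a_i$ of $P_i$ is not excluded by the ``insertion or deletion immediately to its right'' clause (it makes the last letter of $P_{i-1}$ touched, not $a_i$); it is simply assigned to $\sigma_{i-1}$ by the cutting convention, and since its inserted letter lands just before $b_i$ it belongs to $Q_{i-1}$, so the decomposition is clean without needing that clause on the left boundary.
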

\begin{proof}
  Consider an optimal sequence of edits from $x$ to $y$ such that the
  first letter of $P_i$ is siblings with the first letter of $Q_i$ for
  each $i$.  We say that an edit occurs within $P_1$ if the first
  letter of $P_2$ is to the right of the edit. For $i > 1$, we say
  that an edit occurs within $P_i$ if it takes place between the first
  letter of $P_i$ and the first letter of $P_{i + 1}$ (or the end of
  the word, if $i = j$). Notice that because the first letter of each
  $P_i$ is siblings with the first letter of each $Q_i$, the edits
  occurring within $P_i$ transform $P_i$ into $Q_i$. It follows that
  $$\edit(x, y) = \sum_{i = 1}^j \edit(P_i, Q_i).$$
  as desired.
\end{proof}

For the rest of the section, for any string $w$, we use $P(w)$ to denote the partition constructed in Algorithm \ref{alg:low-dist-embedding}. The following key lemma establishes that under certain conditions, $(P(x), P(y))$ is
edit-preserving with high probability.
\begin{lem}
  Let $x, y \in \Sigma^n$ be words such that $\edit(x, y) \le
  K$. Moreover, suppose that for each sub-window of either $x$ or $y$,
  no two of its sub-sub-windows are equal. Fix a minimal sequence of
  edits between $x$ and $y$. Then $(P(x), P(y))$ is edit-preserving
  with probability at least
  $$1 - 16K \cdot \left(\frac{W'}{W} - \frac{1}{W' - W'' + 1}\right).$$
  \label{lemeditpreservingprobnew}
\end{lem}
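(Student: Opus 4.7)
The plan is to prove the lemma by induction on $t$, with inductive hypothesis that the first letter of $A_t(x)$ is siblings with the first letter of $A_t(y)$ under the fixed optimal edit sequence. The base case $t = 1$ is immediate since both $A_1(x)$ and $A_1(y)$ start at position one, and the first letters of $x$ and $y$ are siblings by convention. Once the inductive step carries through all $t \in \{1, \ldots, 2n/W\}$, the partitions $(P(x), P(y))$ are edit-preserving by definition.

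For the inductive step at a fixed $t$, I would isolate two bad events whose avoidance ensures that the first letters of $A_{t+1}(x)$ and $A_{t+1}(y)$ are again siblings. \emph{Bad event (a)}: the $s_t$-th sub-window in either $A_t(x)$ or $A_t(y)$ contains a touched letter. \emph{Bad event (b)}: the chosen sub-window is untouched, but the min-hash sub-sub-window chosen in $x$ corresponds to a different logical position than the one chosen in $y$, caused by the net shift $d_t$ between the sub-windows of $x$ and $y$ arising from insertions and deletions inside $A_t$ that precede the sub-window. When neither event fires, the contents of $A_t(x)$'s and $A_t(y)$'s sub-windows agree as shifted strings and the min-hash picks out sub-sub-windows that begin with a sibling pair of letters, completing the inductive step.

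To bound event (a) across all $t$: the fixed optimal edit sequence has at most $K$ edits and hence touches $O(K)$ letters in $x$ and $y$ combined. Because each $s_t$ is uniform over $W/(2W')$ equal-sized sub-windows, and the second halves of the windows $A_t(x)$ (resp. $A_t(y)$) are pairwise disjoint (since successive windows advance by at least $W/2$), each touched letter accounts for at most one bad index summed over all $t$; this yields a union-bound contribution of $O(KW'/W)$. To bound event (b): conditioned on (a) not firing at step $t$, the sub-window contents in $x$ and $y$ agree up to a shift $d_t$, so the sets $S_x, S_y$ of their sub-sub-windows satisfy $|S_x \cup S_y| = W' - W'' + 1 + |d_t|$ and $|S_x \cap S_y| = W' - W'' + 1 - |d_t|$. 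The periodic-free hypothesis together with Lemma \ref{lemperiodicnew} guarantees that these sub-sub-windows are pairwise distinct, so min-wise independence of $h_t$ (which follows from its $2n$-wise independence since $|S_x \cup S_y| \le 2n$) places the minimum uniformly over $S_x \cup S_y$; the per-step mismatch probability is therefore at most $2|d_t|/(W' - W'' + 1)$. Since each edit contributes to $d_t$ for only $O(1)$ values of $t$ (again using that consecutive windows advance by at least $W/2$), summing yields a union-bound contribution of $O(K/(W' - W'' + 1))$.

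Combining the two contributions matches the stated bound up to constants. The main subtleties I expect are (i) verifying the shift-vs-alignment claim underlying event (b), which requires carefully unpacking the definitions of \emph{siblings} and \emph{touched} to argue that if the chosen sub-window in $A_t$ contains no touched letters then its contents in $x$ and in $y$ agree up to the precise net shift $d_t$; and (ii) bounding $\sum_t |d_t|$ by $O(K)$ rather than $O(K^2)$, which hinges on the window-spacing observation that each individual edit can fall inside the prefix-before-sub-window of only $O(1)$ windows. The periodic-free hypothesis enters only through Lemma \ref{lemperiodicnew} to guarantee sub-sub-window distinctness, which is precisely what enables treating $h_t$ as min-wise independent in the analysis of event (b).
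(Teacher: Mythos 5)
Your plan follows essentially the same route as the paper's proof: a union bound over touched letters landing in the randomly chosen sub-windows (using that the second halves of successive windows are disjoint, giving the $O(KW'/W)$ term), plus a per-step min-hash argument in which the two sub-windows agree up to a shift bounded by the touched letters in the window, losing at most the shift's worth of sub-sub-windows from the overlap and yielding the $O(K/(W'-W''+1))$ term. The only differences are cosmetic: the paper conditions explicitly on no hash collisions (absorbing a $1/\poly(n)$ term), and the sub-sub-window distinctness you attribute to the periodic-free condition is already a direct hypothesis of this lemma.
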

\begin{proof}
Notice that, as long as the window $A_t(x)$ is not entirely contained
within the dummy letters $b$, then the window $A_{t + 1}(x)$ is
constructed to overlap it by at most $W / 2$ letters. It follows that
each letter $l$ in $x$ appears in at most two windows in $x$'s window
sequence. For each $A_t(x)$ which is not contained within the dummy
letters $b$, define $B_t(x)$ to be the sub-window of $A_t(x)$ used to
determine $A_{t + 1}(x)$'s starting point. Recall that $B_t(x)$ is
chosen uniformly at random (independently for each $t$) out of the sub-windows making
up the second half of $A_t(x)$. Thus a given letter in the second half of $A_t$ has
probability $\frac{W'}{W/2}$ of appearing within
$B_t(x)$. Since each letter appears in the second half of at most one window, a given
letter in $x$ has probability at most $\frac{2W'}{W}$ of appearing in
any $B_t(x)$. There are at most $2K$ touched letters in $x$. By the
union bound, the probability of any touched letter in $x$ appearing in
any $B_t(x)$ is at most $\frac{4KW'}{W}$.

Define $B_t(y)$ similarly as to $B_t(x)$. We shall assume for the rest
of the proof that none of the letters in any $B_t(x)$ or $B_t(y)$ are
touched by the edits. We additionally condition on each one of the $h_t$ functions having no collisions (on the elements we hash), which happens with probability $1-\frac{1}{\poly(n)}$ (by Lemma~\ref{lemnocollisions}).
By the analysis provided so far, these conditions will occur
with probability at least
\begin{equation}
  1 - \frac{8KW'}{W} -\frac{1}{\poly(n)} \geq 1 - \frac{16KW'}{W}.
  \label{probsubwindowbrokennew}
\end{equation}

Consider some window $A_t(x)$ (with $t < 2n/W$)
and the corresponding window $A_t(y)$. Suppose that the first letters
of $A_t(x)$ and $A_t(y)$ are siblings. In other words, either $t = 1$,
or the edits between $x$ and $y$ transform the first letter of
$A_t(x)$ to become the first letter of $A_t(y)$, without ever touching
either letter. We will show that
\begin{multline}
  \Pr\Big[\text{First letters of }A_{t + 1}(x) \text{ and }A_{t +
      1}(y)\text{ siblings} \mid \\ \text{First letters of }A_{t}(x)
    \text{ and }A_{t}(y)\text{ siblings} \Big] \ge 1 - \frac{2k_t}{W'
    - W'' + 1}.
  \label{eqsiblingsnew}
\end{multline}
where $k_t$ is defined to be the number of touched letters contained in either
$A_t(x)$ or $A_t(y)$. Also recall that these probabilities are conditioned on lack of hash collisions and on the event that touched letters do not appear in any $B_t(x)$ or $B_t(y)$.

The easy case for \eqref{eqsiblingsnew} occurs when either of $A_t(x)$ or
$A_t(y)$ consists entirely of dummy letters. Because the first letters
of $A_t(x)$ and $A_t(y)$ are siblings, it follows that both consist entirely of dummy letters, and $A_t(x) = A_t(y)$. Therefore,
with probability $1$, the first letter of $A_{t + 1}(x)$ will be
siblings with the first letter of $A_{t + 1}(y)$.

Suppose, on the other hand, that neither $A_t(x)$ nor $A_t(y)$
consists entirely of dummy letters. Then, because the sub-window
$B_t(x)$ of $A_t(x)$ consists of untouched letters, the same
sub-window must appear in $y$ (we will denote the image of $B_t(x)$ in
$y$ by $B'_t(x)$). Since $A_t(x)$ and $A_t(y)$ start with sibling letters, the offset (due to insertions or deletions) of the sub-window $B_t(x)$ in $A_t(x)$
must be within $k_t$ of the offset of $B'_t(x)$ in $A_t(y)$.
Since $B_t(y)$ appears at the same offset (determined by $s_t$) in $A_t(y)$ as does $B_t(x)$
in $A_t(x)$, it follows that $B_t(y)$ must overlap $B'_t(x)$ in at
least $W' - k_t$ letters. This means that, for each one of $B_t(x)$ and
$B_t(y)$, out of the $W' - W'' + 1$ sub-sub-windows, all but at most $k_t$ of
them appear in the overlap between $B_t(x)$ and $B_t(y)$. It follows that with probability at least $1 -
\frac{2k_t}{W' - W'' + 1}$, there is a unique sub-sub-window $u$ in
the overlap of $B_t(y)$ and $B'_t(x)$ whose $h_t$-hash is smaller than
that of any other sub-sub-window in $B_t(y)$ or $B'_t(x)$. This, in
turn, guarantees that the first elements of $A_{t + 1}(x)$ and $A_{t +
  1}(y)$ are siblings, completing the proof of \eqref{eqsiblingsnew}.

By repeatedly applying \eqref{eqsiblingsnew},
\begin{equation*}
  \begin{split}
    \Pr\Big[\text{First letters of }A_{t}(x) \text{ and }A_{t}(y)\text{ siblings }\forall t\Big] & \ge  \prod_{t = 1}^{2n/W - 1} \left(1 - \frac{2k_t}{W' - W'' + 1}\right) \\
     & \ge  1 - \sum_{t = 1}^{2n/W - 1} \left(\frac{2k_t}{W' - W'' + 1}\right).
  \end{split}
\end{equation*}

Since each of $x$ and $y$ contain at most $2K$ touched letters. Each touched letter in $x$ or $y$ appears in at most two windows of $A(x)$
or $A(y)$, respectively. It follows that $\sum_{t = 1}^{2n/W - 1}{k_t} \leq 8K$, and
\begin{equation}
  \Pr\Big[\text{First letters of }A_{t}(x) \text{ and }A_{t}(y)\text{ siblings }\forall t\Big] \ge  1 - \frac{16K}{W' - W'' + 1}.
  \label{eqallsiblingsnew}
\end{equation}

By definition, if the first letters of $A_t(x)$ and $A_t(y)$ are siblings
for all $t$, then $(P(x), P(y))$ will be edit preserving. Thus
\eqref{probsubwindowbrokennew} and \eqref{eqallsiblingsnew} combine to tell
us that $(P(x), P(y))$ are edit-preserving with probability at least
  $$1 - 16K \cdot \left(\frac{W'}{W} - \frac{1}{W' - W'' + 1}\right).$$
\end{proof}

The following theorem shows that with high
probability, Algorithm~\ref{alg:low-dist-embedding} behaves well on $(D, R)$-periodic free strings of distance
at most $K$ from each other.

\begin{thm}
  Suppose we have an embedding $\psi$ from edit distance in $\Sigma^n$
  to Hamming space with distortion $\gamma(n)$, meaning
there is some value $T \ge 1$ such that for all $x, y \in \Sigma^n$, we have
$\ed(x, y) \le \frac{1}{T}\ham(\psi(x), \psi(y)) \le \gamma(n) \cdot
\ed(x, y)$. Let $K, D, R, C$ be
  positive variables satisfying $R \ge 32KC$. Then there exists a
  randomized embedding $\phi$ from $\Sigma^n$ to Hamming space
  satisfying the following property. If $x, y \in \Sigma^n$ are $(D,
  R)$-periodic free and $\edit(x, y) \le K$, then
  $$\edit(x, y) \le \frac{1}{T}\ham(\phi(x), \phi(y)) \le \edit(x, y) \cdot
  \gamma(64 DKC),$$ with probability at least $1 - \frac{1}{C}$.
  \label{thmloweditregimewithhighprob}
\end{thm}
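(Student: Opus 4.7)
The plan is to define $\phi$ to be exactly the primary embedding in Algorithm~\ref{alg:low-dist-embedding}, instantiated with carefully chosen window sizes $W'' \le W' \le W$, and then combine Lemmas~\ref{lemperiodicnew}, \ref{lempartitionnew}, and \ref{lemeditpreservingprobnew} to obtain the desired distortion guarantee.

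First I would set $W'' := D$, $W' := D + 32KC$, and $W := 32KC \cdot W'$. The choice of $W''$ and $W'$ ensures, via Lemma~\ref{lemperiodicnew}, that within any sub-window of $x$ or $y$ the $W''$-letter sub-sub-windows are all distinct: each sub-window is a length-$W'$ substring of a $(D,R)$-periodic free string, and since $W'' \ge D$ and $W' - W'' \le R$ (the latter by the hypothesis $R \ge 32KC$), the sub-window is itself $(W'', W' - W'')$-periodic free, so Lemma~\ref{lemperiodicnew} applies. The choice of $W$ ensures, via Lemma~\ref{lemeditpreservingprobnew}, that $(P(x), P(y))$ is edit-preserving with probability at least $1 - 16K\bigl(W'/W + 1/(W' - W'' + 1)\bigr) \ge 1 - 1/C$.

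Next, conditioned on $(P(x), P(y))$ being edit-preserving, I would bound the distortion as follows. Because each window $A_{t+1}(w)$ starts in the second half of $A_t(w)$, each part $P_i(w)$ has length at most $W \le 64DKC$ (when $D \ge 32KC$ this is immediate from the parameter choices; otherwise $W$ is dominated by $(32KC)^2$, which is still absorbed into the stated bound $\gamma(64DKC)$ by monotonicity of $\gamma$ applied with $D$ replaced by $\max(D, 32KC)$). Applying $\psi$'s distortion guarantee to each pair $(P_i(x), P_i(y))$ yields, for every $i$,
\begin{equation*}
T \cdot \ed(P_i(x), P_i(y)) \le \ham(\psi(P_i(x)), \psi(P_i(y))) \le T \gamma(64DKC) \cdot \ed(P_i(x), P_i(y)).
\end{equation*}
Summing over $i$ (the zero-padding at the end of $\phi$ and the embeddings of empty parts are identical across $\phi(x)$ and $\phi(y)$ and contribute zero to Hamming distance) and invoking Lemma~\ref{lempartitionnew} to replace $\sum_i \ed(P_i(x), P_i(y))$ by $\ed(x,y)$, we obtain
\begin{equation*}
T \cdot \ed(x,y) \le \ham(\phi(x), \phi(y)) \le T\gamma(64DKC) \cdot \ed(x,y),
\end{equation*}
with probability at least $1 - 1/C$, as required.

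The hard part is the balancing act in the parameter choices: the term $W'/W$ in Lemma~\ref{lemeditpreservingprobnew} forces $W$ to be $\Theta(KC)$ times as large as $W'$, while the term $1/(W' - W'' + 1)$ forces $W' - W'' \ge \Theta(KC)$, and the $(D,R)$-periodic-free constraint forces $W'' \ge D$ together with $W' - W'' \le R$. These four inequalities together pin down the scaling $W = \Theta(KC(D + KC))$, which is exactly what yields the claimed distortion bound $\gamma(64DKC)$. A minor technical check is that $P(x)$ and $P(y)$ have the same number of parts, which is guaranteed by the dummy-padding in step~1 of the algorithm and the recursion $A_{t+1} = A_t$ on all-dummy windows, so that the pairwise comparison $\ham(\psi(P_i(x)), \psi(P_i(y)))$ is well-defined.
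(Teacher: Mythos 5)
Your proof follows the paper's argument essentially verbatim: the same parameter choices $W''=D$, $W'=D+32KC$, $W=32KC\cdot W'$, the same use of Lemma~\ref{lemperiodicnew} to get distinctness of the sub-sub-windows within each sub-window, Lemma~\ref{lemeditpreservingprobnew} to get that $(P(x),P(y))$ is edit-preserving with probability at least $1-\frac{16K}{32KC}-\frac{16K}{32KC+1}\ge 1-\frac{1}{C}$, and Lemma~\ref{lempartitionnew} together with the per-part guarantee of $\psi$ to convert edit-preservation into the claimed two-sided bound. The one place where your reasoning does not hold up as written is the parenthetical treatment of the regime $D<32KC$: monotonicity of $\gamma$ with $D$ replaced by $\max(D,32KC)$ only yields $\gamma(64\max(D,32KC)\cdot KC)$, which is strictly weaker than the stated $\gamma(64DKC)$ when $D<32KC$. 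The paper disposes of this regime up front rather than absorbing it: since $R\ge 32KC$, having $D<32KC$ forces $D<R$, and then there are no $(D,R)$-periodic free strings in $\Sigma^n$ at all (for $n\ge D$ every length-$D$ substring is vacuously periodic with period at most $R$), so the claim is vacuous; the remaining case $n<D$ is handled by applying $\psi$ directly. After that reduction one may assume $32KC\le R\le D$, whence $W=(D+32KC)\cdot 32KC\le 64DKC$ and the rest of your argument goes through unchanged, so the gap is easily closed, but your stated fix for the small-$D$ case is not the right one.
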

\begin{proof}
  Note that if $n < D$, then we can achieve $\gamma(D)$ distortion
  using the embedding $\psi$, and that if $n \ge D$ but $D < R$, then
  there are no $(D, R)$-periodic free strings $x, y \in \Sigma^n$,
  again making the result trivial. Thus we can assume that $R \le D
  \le n$.

  Select $W'' = D$, $W' = W'' + 32KC$ and $W = W' \cdot 32KC$. Let
  $\phi(x)$ and $\phi(y)$ be the outputs of Algorithm \ref{alg:low-dist-embedding}.

  Because $x$ and $y$ are $(D, R)$-periodic free, Lemma
  \ref{lemperiodicnew} tells us that for any substring $a$ of either
  $x$ or $y$ satisfying $|a| \le D + R$, the $D$-letter substrings of
  $a$ are distinct. Since $W' = D + 32KC \le D + R$ (from the statement of the theorem) and $W'' = D$, it
  follows that for each sub-window of $x$ and $y$, the sub-sub windows
  are distinct. Hence we can apply Lemma
  \ref{lemeditpreservingprobnew} to see that $(P(x), P(y))$ is
  edit-preserving with probability at least

  \begin{equation*}
    \begin{split}
      1 - 16K \cdot \left(\frac{W'}{W} - \frac{1}{W' - W'' + 1}\right)
      & = 1 - 16K \cdot \left(\frac{1}{32KC} + \frac{1}{32KC +
        1}\right) \\ & > 1 - \frac{1}{C}. \\
    \end{split}
  \end{equation*}

  Supposing that $(P(x), P(y))$ is edit-preserving, Lemma
  \ref{lempartitionnew} tells us that the distortion of $\phi$ on the
  edit distance between $x$ and $y$ is at most the distortion of the
  embedding $\alpha$ on words of length $W = (D + 32KC) \cdot
  32KC$. Since $32KC \le R \le D$, we have that $(D + 32KC) \cdot
  32KC \le 64DKC$. It follows that

  \begin{equation*}
    \begin{split}
      \edit(x, y) \le \frac{1}{T}\ham(\phi(x), \phi(y)) & \le \edit(x, y) \cdot
      \gamma(64 DKC),
    \end{split}
  \end{equation*}
  as desired.
\end{proof}

So far we have obtained an embedding which behaves well with
probability $1 - \frac{1}{C}$.  The next result, which is implicitly
present in Section 3.5 of \cite{UlamEmbedding} takes any high-probability
low-edit-distance-regime embedding from edit distance to Hamming
space, and turns it into an embedding with good expected distortion.

\begin{thm}
  \label{thmprobtoexpected}
Let $S$ be a subset of $\Sigma^{\le n}$. Suppose we have a randomized
embedding $\phi:S \rightarrow \Sigma^m$ such that for all $x, y \in S$
satisfying $\ed(x, y) \le K$,
$$\ed(x, y) \le \frac{1}{T}\ham(\phi(x), \phi(y)) \le F(K) \ed(x, y)$$
with probability at least $1 - \frac{1}{K \cdot F(K)}$ for some
distortion function $F(K)$ with $F(K) \ge 2$, for some $T \ge 1$,
and for all $K \in \mathbb{N}$.

Then there exists $\alpha:S \rightarrow \{0, 1\}$ such that for all
$x, y \in S$ satisfying $\ed(x, y) \le K$,
$$\Omega\left(\ed(x, y)\right) \le K \cdot F(K) \Pr[\alpha(x) \neq
  \alpha(y)] \le O(\ed(x, y)F(K)).$$ In other words, $\alpha$ is an
embedding from edit distance (at most $K$) over $S$ to Hamming distance (scaled) with expected distortion at
most $O(F(K))$.
\end{thm}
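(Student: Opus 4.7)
The plan is to sketch the embedding $\phi$ down to a single bit via a hash-and-XOR gadget, tuned so that the expected probability of disagreement lands on the target scale $\ed(x,y)/K$. After drawing $\phi$ using its internal randomness, I would independently draw $r$ distinct uniform positions $j_1,\ldots,j_r \in [m]$, a pairwise-independent hash $h:\Sigma \to \{0,1\}$, and a gating coin $C \in \{0,1\}$ with $\Pr[C=1] = p$; then define
\[
\alpha(x) \;=\; C \cdot \bigoplus_{i=1}^{r} h(\phi(x)_{j_i}),
\]
where $\phi$, the sampled positions, the hash, and the coin are all shared between $x$ and $y$. The parameters $r,p$ are tuned so that $pr = \Theta(m/(T K F(K)))$ and both $p \le 1$ and $pr \le m$: concretely, if $m \ge T K F(K)$ take $r = \lceil m/(T K F(K)) \rceil$ and $p = 1$; otherwise take $r = 1$ and $p = m/(T K F(K))$.

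Next, I would analyze $\Pr[\alpha(x) \neq \alpha(y)]$ conditional on $\phi$. Let $d := \ham(\phi(x), \phi(y))$. By pairwise independence of $h$, each bit $h(\phi(x)_{j_i}) \oplus h(\phi(y)_{j_i})$ is deterministically $0$ when $\phi(x)_{j_i} = \phi(y)_{j_i}$ and a fresh unbiased coin otherwise, and distinct sampled positions yield independent bits. Letting $N$ be the (hypergeometric) number of sampled positions landing on a differing coordinate, XOR-ing $N$ independent unbiased coins yields $1$ with probability $\tfrac12$ when $N \ge 1$ and $0$ when $N = 0$, so
\[
\Pr\bigl[\alpha(x) \neq \alpha(y) \mid \phi\bigr] \;=\; \tfrac{p}{2}\bigl(1 - \Pr[N=0]\bigr),
\]
which is $\Theta(prd/m)$ whenever $prd/m = O(1)$ and never exceeds $p/2$.

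Finally, I would split the expectation over $\phi$ into its \emph{good} event (of probability $\ge 1 - 1/(K F(K))$, on which the hypothesis forces $d \in [T\ed,\, T F(K) \ed]$) and its bad complement. The tuning of $pr$ guarantees $prd/m = O(1)$ on the good event, so its contribution to $\Pr[\alpha(x) \neq \alpha(y)]$ lies in $\bigl[\Omega(\ed/(K F(K))),\, O(\ed/K)\bigr]$. On the bad event the saturation bound $p/2 \le 1/2$ contributes at most $1/(2 K F(K))$ in expectation, which is $O(\ed/K)$ whenever $\ed \ge 1$ (the case $\ed = 0$ being trivial since then $\phi(x) = \phi(y)$). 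Combining the two contributions,
\[
\Omega\bigl(\ed(x,y)/(K F(K))\bigr) \;\le\; \Pr\bigl[\alpha(x) \neq \alpha(y)\bigr] \;\le\; O\bigl(\ed(x,y)/K\bigr),
\]
which is the theorem statement after multiplying through by $K F(K)$.

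The main obstacle is the bad event: its probability $1/(K F(K))$ is small, but a priori $\ham(\phi(x),\phi(y))$ could be as large as $m$ there, and a naive single-coordinate sketch (with $r = 1, p = 1$) would have its expected disagreement probability overwhelmed by this contribution whenever $m \gg T K F(K)$. The XOR-over-$r$ gadget addresses this by capping the conditional disagreement probability at $p/2$ independently of $d$, while the joint tuning $pr \asymp m/(T K F(K))$ simultaneously places the good-event contribution inside the target window $[\Omega(\ed/(KF(K))),\, O(\ed/K)]$.
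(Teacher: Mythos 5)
Your overall architecture matches the paper's: sketch $\phi$ down to one bit by sampling coordinates at effective rate $\Theta(1/(T\,K\,F(K)))$, XOR hashed symbols, and split the analysis into the good event (where $\ham(\phi(x),\phi(y))\in[T\ed(x,y),\,T F(K)\ed(x,y)]$) and the bad event of probability at most $1/(K F(K))$, whose contribution is absorbed into the upper bound. The paper does this with a Bernoulli-sampled index set $r$ and an inner product $\langle r, w'\rangle \bmod 2$, which makes the "cap" automatic; your fixed-size sample plus gating coin is a legitimate variant of that, and your parameter tuning on the good event is correct.

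However, there is a genuine flaw in the core probabilistic step. You apply a \emph{single} pairwise-independent hash $h:\Sigma\to\{0,1\}$ at all sampled positions and claim that "distinct sampled positions yield independent bits," concluding $\Pr[\alpha(x)\neq\alpha(y)\mid\phi]=\frac{p}{2}\bigl(1-\Pr[N=0]\bigr)$. This is false: the difference bit at a sampled position $j$ is $h(\phi(x)_j)\oplus h(\phi(y)_j)$, and these bits at different positions share hash values whenever symbols repeat. For example, if two sampled differing positions carry the symbol pairs $(a,b)$ and $(b,a)$ (say $\phi(x)=ab\cdots$, $\phi(y)=ba\cdots$), their XOR contributions cancel deterministically, so $\alpha(x)=\alpha(y)$ with probability $1$ even though $N\ge 2$; more generally the total difference is the GF(2) linear form $\bigoplus_{s}c_s h(s)$ with $c_s$ the parity of occurrences of $s$ among the sampled differing pairs, which can vanish identically, and even when it does not vanish, pairwise independence of $h$ alone does not make a form in three or more hash values uniform. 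So your stated conditional formula, and hence the clean $\Theta(prd/m)$ bound it feeds, does not follow. The fix is exactly what the paper does: use mutually independent per-position hash functions $h_1,\dots,h_m$ (equivalently, hash the pair $(j,\phi(x)_j)$ with enough independence); then conditioned on at least one sampled differing position the XOR is a genuinely fresh unbiased bit, and the rest of your argument goes through. (One could try to rescue the single-hash version by arguing via the event $N=1$ alone, but that is a different argument from the one you gave and still needs care; as written, the proof has a gap.)
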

\begin{proof}
  Implicit in \cite{UlamEmbedding}. The full proof is deferred to Appendix~\ref{secimplicitresultofCandK}.
\end{proof}

Combining Theorems \ref{thmloweditregimewithhighprob} and
\ref{thmprobtoexpected}, we can now prove the main result for
this section.

\begin{proof}[Proof of Theorem~\ref{thmloweditregime}]
  Because $\gamma$ is subpolynomial, we can pick $C \in O(K^2)$ and $R
  = 32KC \in O(K^3)$ such that $K \cdot \gamma(64DKC) \le C$. Because
  $R \ge 32KC$, we can apply Theorem
  \ref{thmloweditregimewithhighprob} to obtain an embedding with
  distortion $\gamma(64 DKC)$ with probability $1 -
  \frac{1}{C}$. Because $K \cdot \gamma(64DKC) \le C$, we can then
  apply Theorem \ref{thmprobtoexpected} to obtain a map $\alpha:
  \Sigma^n \rightarrow \{0, 1\}$ as described.
\end{proof}

\begin{rem}
  The embedding described by Theorem \ref{thmloweditregime} maps
  strings to only a single bit. By concatenating together multiple
  independent iterations of the embedding, one can obtain a map to
  multiple bits for which the output distances are bounded with
  high probability in addition to just in expectation. In fact,
  although stated as a randomized result, the above embedding be
  iterated over all possible choices of random bits in order to give a
  deterministic embedding from edit distance
  to scaled Hamming space for $(D, R)$-periodic free strings whose
  distances are upper bounded by $K$.
\end{rem}

\begin{rem}
  The embedding described by Theorem \ref{thmloweditregime} can also
  be performed on pairs of $(D, R)$-periodic strings $x, y$ with $\ed(x,
  y) > K$. In this case, the embedding will encode the fact that
  $\ed(x, y) \ge \Omega(K)$. In particular, the embedding of Theorem
  \ref{thmloweditregimewithhighprob} will output strings of Hamming
  distance at least $T \cdot K$; and one can check that the embedding of
  Theorem \ref{thmprobtoexpected} will then satisfy
  $\frac{\Omega(K)}{\gamma(O(K^3D))} \le K \Pr[\alpha(x) \neq
    \alpha(y)]$.\footnote{This follows by the same logic as in Case 2
    of the proof of Theorem \ref{thmprobtoexpected}.}
\end{rem}

\section{Dimension Reduction}\label{secdimred}

A mapping of edit distance on length-$n$ strings to edit distance on
strings of length at most $n/c$ (with larger alphabet size) is called
a dimension-reduction map with contraction $c$. In this section we
present a randomized dimension-reduction map for edit distance whose
contraction is $c$ and whose distortion is $O(c)$, which is within a
constant factor of optimal. This marks an improvement over the
previous state of the art \cite{DimensionReduction}, which achieved
distortion $\tilde{O}\left(c^{1+\frac{2}{\log \log \log
    n}}\right)$. Moreover, our dimension-reduction map can be computed
in time $\tilde{O}(n)$.

We begin with a formal definition of a dimension-reduction map.

\begin{defn}
We say that a randomized map $\phi$ from strings in $\Sigma^n$ to
strings of length at most $n/c$ over a different alphabet $\Sigma'$ is
a dimension-reduction map with \emph{contraction} $c$ and
\emph{expected distortion} at most $\alpha \cdot \beta$ if for all $x,
y \in \Sigma^n$,
\begin{equation}
  \ed(x, y) \le \alpha \ed(\phi(x), \phi(y)),
  \label{eqexpecteddistortioneq1}
\end{equation}
and
\begin{equation}
  \E[\ed(\phi(x), \phi(y))] \le \beta \ed(x, y).
  \label{eqexpecteddistortioneq2}
\end{equation}
\end{defn}

Note, in particular, that the lower bound
\eqref{eqexpecteddistortioneq1} is required to hold regardless of the
random bits used to compute $\phi$. This is similar to the way
distortion was defined in \cite{DefnExpectedDistortion}.

Before continuing, we present a lower bound for the expected
distortion of a dimension-reduction map $\phi$. This is a simple
generalization of the analogous lower bound for deterministic
dimension reduction provided in \cite{DimensionReduction}.
\begin{lem}
  Suppose $\phi$ is a randomized dimension-reduction map with
  contraction $c$. Then the expected distortion of $\phi$ is at least
  $c$.
\end{lem}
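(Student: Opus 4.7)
The plan is to derive the bound $\alpha\beta \geq c$ from two extreme pairs of inputs, exploiting the fact that the length constraint on $\phi$'s output is deterministic (it holds regardless of the random bits), so it bounds $\ed(\phi(x),\phi(y))$ pointwise, not just in expectation. This pointwise upper bound is what makes the argument go through even in the randomized setting; the deterministic analogue in \cite{DimensionReduction} relies on essentially the same observation.

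First I would pick $x,y \in \Sigma^n$ with $\ed(x,y) = n$ (e.g.\ $x = 0^n$ and $y = 1^n$, assuming $|\Sigma| \ge 2$). Since $|\phi(x)|, |\phi(y)| \le n/c$ hold deterministically, $\ed(\phi(x), \phi(y)) \le n/c$ for every realization of the randomness. Applying \eqref{eqexpecteddistortioneq1} gives
\[
n \;=\; \ed(x,y) \;\le\; \alpha \cdot \ed(\phi(x),\phi(y)) \;\le\; \alpha \cdot \frac{n}{c},
\]
so $\alpha \ge c$. Next I would pick $x', y' \in \Sigma^n$ with $\ed(x',y') = 1$ (any pair differing in a single letter). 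Then \eqref{eqexpecteddistortioneq1} gives $\alpha \cdot \ed(\phi(x'),\phi(y')) \ge 1$, so $\ed(\phi(x'),\phi(y')) \ge 1/\alpha > 0$ in every realization; since edit distance is a non-negative integer, this strengthens to $\ed(\phi(x'),\phi(y')) \ge 1$ almost surely, and hence $\E[\ed(\phi(x'),\phi(y'))] \ge 1$. Combining this with \eqref{eqexpecteddistortioneq2} applied to $(x',y')$ yields $\beta \ge 1$.

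Multiplying the two bounds gives the claimed $\alpha\beta \ge c$. There is no real obstacle; the only thing to be careful about is that both \eqref{eqexpecteddistortioneq1} and the length constraint $|\phi(\cdot)| \le n/c$ are deterministic (worst-case) statements, so they combine cleanly without needing to reason about the distribution of $\phi$'s outputs, and the integrality of edit distance is what boosts $\beta$ from a vacuous bound to $\beta \ge 1$.
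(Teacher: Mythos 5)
Your proposal is correct and follows essentially the same route as the paper: use a pair at distance $n$ together with the deterministic length bound $|\phi(\cdot)| \le n/c$ to force $\alpha \ge c$, and a pair at distance $1$ together with the pointwise lower bound \eqref{eqexpecteddistortioneq1} and integrality of edit distance to force $\beta \ge 1$. The paper's proof is the same argument with the two cases in the opposite order, so there is nothing further to flag.
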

\begin{proof}
  Consider $x, y \in \Sigma^n$ with $\ed(x, y) = 1$. In order so that
  \eqref{eqexpecteddistortioneq1} holds for some $\alpha$, it must be
  that $\ed(\phi(x), \phi(y)) \neq 0$. Therefore, $\ed(\phi(x),
  \phi(y)) \ge 1$, meaning that $\beta \ge 1$.

  Now consider $x, y \in \Sigma^n$ with $\ed(x, y) = n$. Since
  $\phi(x), \phi(y)$ are of length at most $n/c$, it follows that
  $\ed(\phi(x), \phi(y)) \le n/c$. Therefore, from
  \eqref{eqexpecteddistortioneq1}, $\alpha$ is at least $c$.

  Since $\alpha \ge c$ and $\beta \ge 1$, the expected distortion of
  $\phi$ is $\alpha \beta \ge c$.
\end{proof}

The main result in this section will be constructing a
dimension-reduction map $\phi$ for which the following theorem
holds.
\begin{thm}
  Let $c, n \in \mathbb{N}$. There exists a dimension-reduction map
  $\phi$ such that for every two strings $x, y \in \Sigma^{\le n}$,
  \begin{itemize}
  \item The lengths of $\phi(x)$ and $\phi(y)$ are at most $O(|x|/c)$ and $O(|y|/c)$, respectively.
  \item $\ed(x, y) \le 2c \cdot \ed(\phi(x), \phi(y))$.
  \item The expected distortion of $\phi$ is $O(c)$. In
    particular, $$\Pr[\ed(\phi(x),\phi(y)) > m\ed(x, y)] \leq
    \left(\frac{1}{2}\right)^{\Omega(m)} + \frac{1}{\poly(n)},$$ for
    all $m \in \mathbb{N}$ and for a polynomial $\poly(n)$ of our choice.
  \item Each of $\phi(x)$ and $\phi(y)$ can be computed in time
    $O(n\log c)$.
  \end{itemize}
  \label{thm:main-dim-reduction-embedding}
\end{thm}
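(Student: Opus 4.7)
The plan is to construct $\phi$ via a min-hash-based partition of the input string into variable-length blocks, each encoded as a single super-character of a larger output alphabet $\Sigma'$. Drawing $h$ from a hash family that is $n$-wise independent with high probability (as in Section~\ref{secpreliminaries}), mapping $\Theta(\log n)$-bit strings to $\Theta(\log n)$-bit strings, I would scan $x$ left to right: starting from the previous cut position $i$, the next cut is chosen to be the position $j \in [i + c/2, i + 2c]$ that minimizes $h(x[j : j+L-1])$ for a context length $L = \Theta(\log n)$ (sufficient to avoid hash collisions with high probability, cf.\ Lemma~\ref{lemnocollisions}); the block $x[i+1 : j]$ is emitted as a single super-character of $\phi(x)$, and the final block at the right end of $x$ is handled as a natural boundary case. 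Since every block has length in $[c/2, 2c]$, we immediately get $|\phi(x)| \le 2|x|/c = O(|x|/c)$, which is the first bullet.

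For the second bullet, any alignment from $\phi(x)$ to $\phi(y)$ of cost $k$ can be realized letter-wise by replacing each super-character insertion, deletion, or substitution with at most $2c$ character edits (the trivial cost of transforming one length-$\le 2c$ block into another). Summing over the $k$ super-character operations yields $\ed(x, y) \le 2c \cdot \ed(\phi(x), \phi(y))$.

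For the third bullet, the key technical lemma is a local-recovery statement: if $y$ is obtained from $x$ by a single insertion, deletion, or substitution at position $p$, then the cut sequences produced by $\phi$ on $x$ and $y$ differ in only $O(1)$ cuts in expectation, with geometrically decaying tail. Once the algorithm places a cut $j$ whose look-ahead window $[j+c/2, j+2c]$ lies strictly to the right of $p$, the algorithm's subsequent choices are driven by identical content in $x$ and $y$ (modulo a fixed offset from the edit) and the cut sequences coincide thereafter; and analogously to the left of $p$. Because successive look-ahead windows move rightward by at least $c/2$ positions and draw essentially independent hash values on disjoint ranges, each cut has constant probability of being the recovery cut, giving a geometric recovery time. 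Summing across the $\ed(x,y)$ single-edit steps of an optimal alignment, $\E[\ed(\phi(x), \phi(y))] \le O(1) \cdot \ed(x, y)$; combined with the $2c$ factor above, the expected distortion is $O(c)$. The tail bound $\Pr[\ed(\phi(x), \phi(y)) > m\, \ed(x, y)] \le (1/2)^{\Omega(m)} + 1/\poly(n)$ then follows from the geometric tail on each individual recovery, with the additive $1/\poly(n)$ absorbing the failure probability that the hash family fails to be sufficiently independent on the $O(n)$-sized sample of length-$L$ windows actually queried.

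Finally, for the running time, precomputing $h$ on every length-$L$ window via a rolling hash takes $O(|x|)$ time, and selecting the minimizer in each look-ahead window $[i+c/2, i+2c]$ can be done with a monotonic-deque sliding-window minimum in amortized $O(1)$ time per cut; the overall $O(n \log c)$ bound absorbs the word-level operations on the $\Theta(\log n)$-bit hash values and the $O(\log c)$ comparisons that may be required per cut. The main obstacle I anticipate is a clean proof of the local-recovery lemma: one must symmetrically handle the left and right of the edit, correctly account for possible dependence between look-ahead windows that straddle the edit location, and verify that the geometric tail survives a union bound over all $\ed(x, y)$ edits rather than compounding multiplicatively into a bad expected distortion.
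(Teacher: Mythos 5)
Your overall architecture (min-hash block boundaries, super-characters over $\Sigma^*$, a per-edit local-damage lemma with geometric tail, then summation over the $O(\ed(x,y))$ single-edit steps) is the right shape, and your first, second, and fourth bullets are handled essentially as in the paper. But the core of your argument --- the local-recovery lemma --- has a genuine gap, and it is exactly the gap that the paper's construction is built to avoid. Your cuts are \emph{stateful}: each look-ahead window $[i+c/2,\,i+2c]$ starts at the previous cut, so after an insertion the two scans of $x$ and $y$ are no longer examining corresponding windows, and your claim that ``once the look-ahead window lies strictly to the right of the edit the choices are driven by identical content and the cut sequences coincide'' is false as stated; what you actually need is a stochastic re-synchronization (coupling) argument, which requires both that the two look-ahead ranges keep overlapping in a constant fraction (a drift bound you never establish --- if the offset between the scans ever exceeds the window length the choices become independent and the offset performs a random walk) and, more fundamentally, that the hashed windows inside the range are \emph{distinct}, so that fresh randomness decides the minimizer. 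The second condition fails on periodic content: in a long stretch of period $p \le c$, all windows at the same phase are identical, the min-hash has massive ties, and ``each cut has constant probability of being the recovery cut'' is simply not true --- the cut placement degenerates to deterministic tie-breaking, and an insertion that shifts the phase can in principle misalign every subsequent block of the run. This is why the paper first extracts all maximally periodic substrings (using Lemma~\ref{lemgcdperiodic} and Lemma~\ref{lem:max-periodic-overlap}), places markers inside them deterministically anchored to the lexicographically smallest rotation of the period (so that a phase shift does not change block contents), and only applies min-hash markers inside maximally non-periodic substrings, where Lemma~\ref{lem:local-permutation} guarantees the nearby windows are distinct. Moreover, the paper's markers are purely \emph{local} functions of the content (a position is a marker iff its window hash is the unique minimum among the $c/2$ neighbors on each side), so markers far from the edit coincide automatically and no re-synchronization coupling is needed at all; the only probabilistic ingredient is that markers occur frequently (Lemma~\ref{clm:minimal-letter} and Lemma~\ref{lem:consecutive-marker}), which bounds the damaged region. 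Your proposal would need either this periodic/non-periodic dichotomy or a genuinely new argument for ties and drift; as written, a single insertion inside a long periodic run is not covered.

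A second, smaller gap: to pass from the per-edit geometric tails to the stated bound $\Pr[\ed(\phi(x),\phi(y)) > m\,\ed(x,y)] \le (1/2)^{\Omega(m)} + 1/\poly(n)$, a union bound over the $O(\ed(x,y))$ edits does not suffice (it only controls the event that \emph{some} edit causes more than $m$ damage, not the sum, and the per-edit damages are dependent because they share the hash function). The paper handles this with a dedicated concentration bound for sums of dependent geometric-type variables (Lemma~\ref{lemgeometricvars} via H\"older's inequality, and its variant Lemma~\ref{lemgeometricvars2} absorbing the $1/\poly(n)$ failure events), applied along a chain of intermediate strings $x = x^0, x^1, \ldots, x^{2\ed(x,y)} = y$ as in Lemma~\ref{lemboundexpansionperm}/\ref{lemboundexpansion}. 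You correctly flag this worry at the end of your proposal, but the resolution requires this extra lemma, not a union bound.
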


The rest of the section proceeds as follows. In Subsection
\ref{secdimredforperms}, we present a simplified version of Theorem
\ref{thm:main-dim-reduction-embedding} for permutations. This
motivates our more general construction, and yields a linear-time
$O(\log n)$-approximation algorithm for finding an optimal sequence of
edits between permutations. Then in Subsection \ref{secdimredprelims},
we present preliminaries for the more general dimension-reduction map
$\phi$. In Subsection \ref{subsecdimred}, we define $\phi$ and prove
Theorem \ref{thm:main-dim-reduction-embedding}. Finally, in Subsection
\ref{secdimredapproxalg} we discuss an application of our
dimension-reduction map to approximation algorithms for edit distance.

\subsection{Dimension Reduction for Permutations}\label{secdimredforperms}

In this section we present a dimension-reduction map for
permutations. This motivates the techniques which we will use in
general, and has the interesting property that the map can be computed
in strictly linear time.

For convenience, in this subsection we assume access to a fully
independent family $\mathcal{H}$ of hash functions mapping
$\Theta(\log n)$ bits to $\Theta(\log n)$ bits with constant time
evaluation. This can be simulated using the family of
\cite{amazinghash}, which is independent on any given set of size $n$
with high probability, and which is further discussed in Section
\ref{secpreliminaries}.

\begin{algorithm}\caption{Dimension Reduction for Permutations with Edit Distance\label{alg:permembedding}}
Input: a string $w \in \PP_{\le n}$, a parameter $c \in \mathbb{N}$ with $c$
even, and a constant $b$.

\begin{enumerate}
\item Select $h$ at random from a family of $n$-wise independent hash
  functions mapping $\Sigma$ to $\{0, 1\}^{b \log n}$.
\item Call a letter $w_i$ a \emph{marker} if $c < i \le n - c$ and if
  $h(w_i)$ is smaller than any of $h(w_j)$ for $j \in \{i - c, \ldots,
  i + c\} \setminus \{i\}$. That is, $w_i$ is a marker if $h(w_i)$ is
  the unique minimum in the sequence $h(w_{i - c}), \ldots, h(w_{i +
    c})$.
\item Partition $w$ into blocks in the following way: Each marker
  indicates the beginning of a block. For each block of size at least
  $c$, subdivide it into blocks such that all of them are of size
  exactly $c$, except for the last one which will be of size at most
  $c$.
\item Return a string over the alphabet $\Sigma^*$ (i.e., whose
  letters are strings in $\Sigma$) where each block from the preceding
  step is a letter.
\end{enumerate}
\end{algorithm}

Given $w \in \PP_{\le n}$, Algorithm \ref{alg:permembedding} shows how
to compute $\phi(w)$. The embedding selects a letter $w_i$ to be a
marker if $w_i$'s hash is smaller than that of any of the preceding
$c$ letters or following $c$ letters. The algorithm then splits $w$
based on the markers, and then further splits each of the resulting
chunks into blocks of size $c$, except that the last block in each
chunk may be of size between $1$ and $c$.

The following theorem establishes that $\phi$ is a dimension-reduction
map.

\begin{thm}
Let $p \in \poly(n)$. Then there exists a sufficiently large constant
$b$ such that for every two permutations $x, y \in \PP_{\le n}$,
  \begin{itemize}
  \item The lengths of $\phi(x)$ and $\phi(y)$ are at most $O(|x|/c)$
    and $O(|y|/c)$, respectively.
  \item $\ed(x, y) \le c \cdot \ed(\phi(x), \phi(y))$.
  \item The distortion of $\phi$ is $O(c)$. In
    particular, $$\Pr[\ed(\phi(x),\phi(y)) > m\ed(x, y)] \leq
    \left(\frac{1}{2}\right)^{\Omega(m)} + \frac{1}{p(n)},$$ for all
    $m \in \mathbb{N}$.
  \item Each of $\phi(x)$ and $\phi(y)$ can be computed in time
    $O(n)$.
  \end{itemize}
    \label{thmdimredforperms}
\end{thm}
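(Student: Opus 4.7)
The plan is to verify the four bullet points in order, with the third (the distortion bound) carrying essentially all the technical weight.

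Handling the easy points first: no two markers can lie within $c$ of one another, because each is the unique minimum of a radius-$c$ window of hash values, so the existence of one would rule out the other. Consequently $w$ has at most $|w|/(c+1)$ markers, and since each chunk of length $\ell$ between consecutive markers is subdivided into $\lceil \ell/c \rceil$ pieces, the total number of blocks is $|w|/c + O(|w|/c) = O(|w|/c)$. Finding all markers is a sliding-window-minimum over windows of size $2c+1$, executable in $O(n)$ time with the standard monotone-deque algorithm, and building the output afterwards is another linear scan. For the lower bound, every block has length at most $c$, so a single insertion, deletion, or substitution at the block level can be realized by at most $c$ character-level edits (substituting $B_1$ for $B_2$ costs $\ed(B_1,B_2)\le\max(|B_1|,|B_2|)\le c$). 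Hence $\ed(x,y)\le c\cdot\ed(\phi(x),\phi(y))$.

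For the expected-distortion claim, I would fix an optimal alignment between $x$ and $y$ of cost $B=\ed(x,y)$, and call a position $i$ in $x$ (and its matched partner in $y$) \emph{undisturbed} if no edit touches the radius-$c$ window around $i$, and \emph{disturbed} otherwise. At most $(2c+1)B$ positions are disturbed. The key observation is that if $i$ is undisturbed then the $2c{+}1$ letters in its window are identical (position-by-position) to the letters in the matched window of $y$, so $i$ is a marker in $x$ iff its partner is a marker in $y$; undisturbed markers therefore match in one-to-one fashion. Because $x$ is a permutation and the hash is $n$-wise independent, each of the $2c+1$ distinct letters of any window is the unique window-minimum with probability exactly $1/(2c+1)$, so by linearity of expectation the number of disturbed markers is at most $B$. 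I would then charge $O(1)$ block-level edits to each of the $B$ original edits (to repair the block containing the edit and at most one neighbour) and $O(1)$ block-level edits to each disturbed marker (to repair the shifted boundaries inside its chunk), and observe that between two consecutive undisturbed markers whose interior contains no edit, the chunks in $x$ and $y$ are identical and split into identical blocks. Summing yields $\E[\ed(\phi(x),\phi(y))] = O(B)$.

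The tail bound $(1/2)^{\Omega(m)}$ comes from Chernoff-style concentration for the number of disturbed markers: these are sums of bounded $0/1$ variables whose near-independence is supplied by the $n$-wise independent hash family, and high-moment concentration bounds for $n$-wise independent sums of bounded variables deliver the desired exponential decay. The additive $1/p(n)$ slack absorbs the (polynomially small) probability that the Pagh--Pagh hash family fails to be injective on the $O(n)$ letters of interest (Lemma~\ref{lemnocollisions}) or fails to be truly independent on the set it is queried on.

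The main obstacle will be the charging argument at disturbed markers. A marker that appears or disappears can shift block boundaries within an entire chunk, and I need to argue that this cost stays $O(1)$ per disturbed marker rather than cascading. The cleanest way will be to treat consecutive pairs of undisturbed markers as \emph{anchors}: within each anchored region, the number of edits plus the number of disturbed markers bounds the number of block-edits needed, and the anchors prevent damage from propagating beyond the region. Carrying this bookkeeping out rigorously, while simultaneously justifying the Chernoff tail under only $n$-wise independence, is where the care has to go.
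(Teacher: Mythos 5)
Your handling of the length bound, the lower bound $\ed(x,y)\le c\cdot\ed(\phi(x),\phi(y))$, and the $O(n)$ run time via sliding-window minima is fine and matches the paper. The gap is in the distortion argument. The charging scheme you propose --- $O(1)$ block-level edits per original edit plus $O(1)$ per disturbed marker --- is not valid, because the damage caused by a single insertion or deletion is not confined to the block containing it and a neighbour: within a chunk, block boundaries are laid down at offsets $c,2c,3c,\ldots$ from the chunk's starting marker, so one insertion shifts the contents of \emph{every} block between the edit and the next marker to its right. If the stretch after the edit contains no marker for $L$ letters (an event of probability roughly $(2/3)^{L/(3c)}$, small but not negligible), that one edit forces $\Theta(L/c)$ block-level edits while creating zero disturbed markers; so the anchor-region claim you offer as the fix --- that the number of edits plus disturbed markers bounds the number of block edits inside each anchored region --- is precisely the statement that fails. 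The true per-edit cost is the distance (in blocks) to the next marker, a random variable with a geometric tail, and this also undermines your route to the tail bound: Chernoff-type concentration for the number of disturbed markers controls the wrong quantity, since the cost can be large even when that count is zero.

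What the proof actually needs is (i) the statement that any $3cm$ consecutive letters contain a marker except with probability $(2/3)^m$ (Lemma~\ref{lemmarkerscommonperm}), which makes the cost of a single insertion/deletion a geometric-tailed variable (Lemma~\ref{lemphiaftersingleinsertionperm}), and (ii) a mechanism for summing about $2\ed(x,y)$ such \emph{dependent} geometric-tailed costs while retaining a $(1/2)^{\Omega(m)}$ tail; the paper decomposes $x\to y$ into a chain of single insertions/deletions and uses an exponential-moment/H\"{o}lder argument for dependent geometric variables (Lemmas~\ref{lemgeometricvars} and~\ref{lemgeometricvars2}), with the hash-collision failure absorbed into the $1/p(n)$ term as you intended. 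Your disturbed-position bookkeeping can be salvaged for the expectation bound if the $O(1)$ charge per edit is replaced by a distance-to-next-marker charge, but the stated high-probability bound cannot be recovered from concentration of indicator sums alone.
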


We prove each of the parts of the theorem individually. The first part
of Theorem \ref{thmdimredforperms} is established by the following
lemma.

\begin{lem}
  Let $w \in \PP_{\le n}$. Then for any contiguous $m \cdot c$ letters in
  $w$, $\phi(w)$ breaks the letters into at most $O(m)$ blocks.
  \label{lemshrinkperm}
\end{lem}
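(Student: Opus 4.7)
The plan is to first show that markers in $w$ must be spaced at least $c + 1$ positions apart, and then to use this to bound both the number of initial blocks intersecting any region and the number of final blocks they contribute. For the spacing claim, I would argue by contradiction: if two markers occurred at positions $i < j$ with $j - i \le c$, then $j$ would lie in the window $\{i - c, \ldots, i + c\} \setminus \{i\}$ used to decide whether $w_i$ is a marker, and symmetrically $i$ would lie in $w_j$'s window. Since both $w_i$ and $w_j$ are markers, each of $h(w_i)$ and $h(w_j)$ would have to be the strict minimum of its window, so each would have to be strictly smaller than the other, a contradiction. Combined with the restriction $c < i \le n - c$ on marker positions, this implies that every initial block (from position $1$ to just before the first marker, between consecutive markers, or from the last marker through position $n$) has length at least $c$, and therefore gets subdivided in the third step of Algorithm~\ref{alg:permembedding}.

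Next I would fix a contiguous region $R$ of $mc$ letters and count the final blocks that intersect $R$. Let $B_1, \ldots, B_k$ be the initial blocks intersecting $R$ in order. The interior blocks $B_2, \ldots, B_{k-1}$ are sandwiched between $B_1$ and $B_k$, and since both neighbors also intersect $R$, each $B_i$ with $2 \le i \le k-1$ is fully contained in $R$. Because every initial block has length at least $c$, the inequality $(k - 2) \cdot c \le |R| = mc$ yields $k \le m + 2$.

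To finish, I would count final blocks inside each $B_i$. Within any initial block of length $L$, the subdivision places final-block starts at offsets $0, c, 2c, \ldots$, all exactly $c$ apart, so the number of final blocks of $B_i$ meeting $R$ is at most $\lceil |B_i \cap R|/c \rceil + 1 \le |B_i \cap R|/c + 2$. Summing over $i$ and using $\sum_i |B_i \cap R| = |R| = mc$ together with $k \le m + 2$, the total number of final blocks meeting $R$ is at most $m + 2(m+2) = O(m)$, as claimed.

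The argument is essentially bookkeeping, so there is no real obstacle. The only slightly subtle point is the spacing claim for markers; I would emphasize that it follows immediately from the \emph{unique} minimum requirement in the definition of a marker and therefore requires no assumption about hash collisions, independence, or the choice of $b$.
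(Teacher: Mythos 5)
Your proposal is correct and follows essentially the same route as the paper: the key observation that two markers cannot lie within distance $c$ of one another (via the mutual strict-minimum contradiction) is identical, and the rest is the same counting of marker-delimited chunks plus their size-$c$ subdivisions, just carried out with more explicit bookkeeping. No gaps.
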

\begin{proof}
  Notice that no window of $c + 1$ letters in $w$ can contain two
  distinct markers $w_i$ and $w_j$. Indeed, in order for $w_i$ to be a
  marker, we would need that $h(w_i) < h(w_j)$. But then in order for
  $w_j$ to be a marker, we would also need that $h(w_j) < h(w_i)$, a
  contradiction.

  It follows that the total number of markers in the $mc$ letters is
  at most $O(m)$. Recall that once the markers are placed, the string
  is partitioned into blocks of size $c$, except that the block
  preceding a marker may be smaller. Thus in the $mc$ letters there
  are at most $O(m)$ blocks of length $c$ and then at most one
  additional block for each marker, totaling to $O(m)$ blocks.
\end{proof}

 The second part of Theorem \ref{thmdimredforperms} is established by
 the following lemma.
\begin{lem}\label{lem:distortion-lower}
  For $x, y \in \PP_{\le n}$,
  $$\ed(x, y) \le c \cdot \ed(\phi(x), \phi(y)).$$
\end{lem}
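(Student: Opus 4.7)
The plan is to show that any sequence of edits transforming $\phi(x)$ into $\phi(y)$ can be lifted, block-by-block, to a sequence of edits transforming $x$ into $y$ at a cost of at most a factor of $c$ per edit.

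First I would observe the basic structure of the output of Algorithm \ref{alg:permembedding}: the map $\phi$ breaks the input permutation into consecutive blocks, each of length at most $c$, and then views the resulting sequence of blocks as a single string over the alphabet $\Sigma^*$. In particular, one can recover $w$ from $\phi(w)$ by concatenating its letters (viewed as strings in $\Sigma^*$) in order. Let $k := \ed(\phi(x), \phi(y))$ and fix an optimal sequence of $k$ edits transforming $\phi(x)$ into $\phi(y)$.

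The key step is then to replace each edit on $\phi$ with at most $c$ letter-level edits on the underlying strings, by induction along the sequence. Concretely, if the edit deletes a letter of $\phi$, which is a block $\beta \in \Sigma^{\le c}$, we perform the corresponding $|\beta| \le c$ deletions at the appropriate location in the underlying string. Insertions of a block are handled symmetrically. A substitution that replaces a block $\alpha \in \Sigma^{\le c}$ by a block $\beta \in \Sigma^{\le c}$ is emulated by $\min(|\alpha|,|\beta|)$ substitutions followed by $\bigl||\alpha|-|\beta|\bigr|$ insertions or deletions, for a total of $\max(|\alpha|,|\beta|) \le c$ edits. Summing over all $k$ edits, we transform $x$ into $y$ using at most $c \cdot k$ letter-level edits, which gives $\ed(x,y) \le c \cdot \ed(\phi(x),\phi(y))$.

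There is essentially no real obstacle here: the inequality is just the standard fact that a ``coarse'' edit between block-strings can always be refined into letter-level edits whose count is bounded by the length of the blocks involved. The only thing to be slightly careful about is making sure that deleting/inserting/substituting a block of length zero or less than $c$ is counted correctly, and that the emulated edits are composed in an order consistent with the original edit script on $\phi$; both are immediate from the fact that the concatenation of the letters of $\phi(w)$ reproduces $w$ exactly.
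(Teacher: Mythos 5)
Your proposal is correct and follows essentially the same argument as the paper: since every block (letter of $\phi(x)$ or $\phi(y)$) has length at most $c$, each block-level edit can be emulated by at most $c$ letter-level edits, giving $\ed(x,y) \le c \cdot \ed(\phi(x),\phi(y))$. The paper states this in one line; your write-up simply spells out the per-edit-type accounting explicitly.
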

\begin{proof}
  Consider a sequence of edits from $\phi(x)$ to $\phi(y)$. Since the
  blocks in $\phi(x)$ and $\phi(y)$ are all of size at most $c$, each
  edit to $\phi(x)$ corresponds with a sequence of at most $c$ edits
  to $x$. Therefore, we get that $\ed(x, y) \le c \cdot \ed(\phi(x),
  \phi(y))$.
\end{proof}

In order to prove the third part of Theorem \ref{thmdimredforperms},
which bounds the distortion of $\phi$, we must first upper bound the
probability of having many consecutive letters without any markers. In
the following analysis we will condition on $h$ being injective (i.e.,
collision free). Later, we will apply the no-collisions assumption
(Lemma \ref{lemnocollisions}) to justify this with high probability.

\begin{lem}
Consider $w \in \PP_{\le n}$, $k \in \{1, \ldots, b\}$, and $m \in
\mathbb{N}$. Consider some $3cm$ consecutive letters, and condition on
$h$ being injective on those letters. Then the algorithm places a marker due
to $h$ in the $3cm$ letters with probability at least $1 -
\left(\frac{2}{3}\right)^{m}$.
\label{lemmarkerscommonperm}
\end{lem}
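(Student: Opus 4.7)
The plan is to partition the $3cm$ consecutive letters into $m$ disjoint chunks $C_1, \ldots, C_m$ of length $3c$ each, and show that each chunk independently contains a marker with probability at least $1/3$. This will give a failure probability of at most $(1-1/3)^m = (2/3)^m$ for the event that none of the chunks contains a marker, which upper bounds the probability that no marker appears anywhere in the $3cm$ letters.

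For the per-chunk analysis, observe that any position $i$ lying in the middle $c$ positions of $C_k$ (i.e., positions $c+1$ through $2c$ within the chunk) has its entire $2c+1$-window $\{i-c, \ldots, i+c\}$ contained inside $C_k$. Hence, whichever letter of $C_k$ attains the smallest hash value, if it happens to occupy one of these middle $c$ positions, it is automatically a marker (the boundary requirement $c < i \le n-c$ holds provided the $3cm$-block lies in the interior of $w$, which I will assume without loss of generality). Conditioned on $h$ being injective on these $3cm$ letters, and using the $n$-wise independence of $h$ with $n \ge 3cm$, the induced ranking of the hash values is a uniformly random permutation. The minimum-hash position within $C_k$ is therefore uniform over the $3c$ positions of $C_k$, and lies in the middle $c$ with probability exactly $1/3$.

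For the independence across chunks, I will use the standard fact that a uniformly random ranking of $3cm$ distinct values, restricted to $m$ disjoint subsets of size $3c$, produces internal rankings within each subset that are mutually independent, each being a uniformly random permutation on $3c$ elements. Since the event ``the minimum-hash position in $C_k$ lies in its middle $c$'' is a function only of the internal ranking within $C_k$, these events for $k = 1, \ldots, m$ are jointly independent, and the product bound yields the claim.

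The main delicate point is the justification that only the relative ranks of the hashes matter and that the internal rankings within disjoint chunks are independent; once that reduction is in place, the rest is a direct counting argument. No additional estimates on the behavior of $h$ beyond injectivity and $3cm$-wise independence are needed.
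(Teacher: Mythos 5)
Your proposal is correct and follows essentially the same route as the paper's proof: split the $3cm$ letters into $m$ disjoint chunks of $3c$, note that the minimum-hash letter of a chunk lands in its middle $c$ positions with probability $1/3$ and is then automatically a marker, and multiply over the chunks using independence of the hash values on disjoint chunks. Your extra care about the uniform ranking and the boundary condition (which in fact holds automatically, since a letter in the middle of a $3c$-letter substring of $w$ always has $c$ letters of $w$ on each side) only spells out what the paper leaves implicit.
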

\begin{proof}
Consider a sequence of $3c$ consecutive characters $w_i \cdots w_{i +
  3c - 1}$ in $w$. With probability $\frac{1}{3}$, the letter $w_j$
with minimum hash will be in the middle $c$ out of the $3c$ letters;
that is, we will have $i + c \le j < i + 2c$. Should this occur, then
$w_j$ will be a marker, since its hash will be smaller than any of the
$c$ letters to its left or right. Hence with probability at least
$\frac{1}{3}$, the sequence of $3c$ consecutive letters will contain a
marker.

Now consider $3cm$ consecutive characters of $w$. If we break these
characters into $m$ disjoint chunks of $3c$ letters, then the argument
above demonstrates that the probability of the no chunk containing a
marker is at most $\left(\frac{2}{3}\right)^m$ (by independence of
$h$).
\end{proof}

In order to bound the distortion of $\phi$, we begin by considering the
case where $x$ and $y$ differ by a single insertion.
\begin{lem}
 Let $m \in \mathbb{N}$. Let $x \in \PP_{\le n}$ be a permutation and
 $y \in \PP_{\le n}$ be a permutation that is derived from $x$ by a
 single insertion.  Condition on the fact that $h$ is injective on the
 letters in $x$. Then, $\Pr[\ed(\phi(x),\phi(y)) > m] \leq
 \left(\frac{1}{2}\right)^{\Omega(m)}$.
 \label{lemphiaftersingleinsertionperm}
\end{lem}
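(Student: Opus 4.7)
The plan is to exploit that the hash used by Algorithm~\ref{alg:permembedding} depends only on letter values, not on positions, so inserting a single letter at position $k$ in $x$ to obtain $y$ affects the marker decisions only within a window of radius $c$ around $k$. Concretely, for every position $i \le k - c - 1$ the radius-$c$ window in $y$ at position $i$ coincides with that in $x$, so $i$ is a marker in $y$ iff it is a marker in $x$; and for every $i \ge k + c + 1$ the window in $y$ coincides with the $x$-window at position $i - 1$, so $i$ is a marker in $y$ iff $i - 1$ is a marker in $x$. Only positions in the ``affected zone'' $\{k-c, \ldots, k+c\}$ can have their marker status altered by the insertion.

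The first step is to locate the last marker $M_L$ of $x$ with $M_L \le k - c - 1$ and the first marker $M_R$ of $x$ with $M_R \ge k + c$. By the observation above, $M_L$ is also a marker of $y$ at the same position, and $M_R$ corresponds to a marker of $y$ at position $M_R + 1$. Since the partition procedure of Algorithm~\ref{alg:permembedding} builds blocks from markers together with a purely local size-$c$ subdivision rule, the blocks of $\phi(x)$ lying strictly to the left of $M_L$ coincide with those of $\phi(y)$, and the blocks of $\phi(x)$ from $M_R$ onward coincide letter-for-letter with the blocks of $\phi(y)$ from $M_R + 1$ onward. Consequently, $\ed(\phi(x), \phi(y))$ is bounded by the combined number of blocks contributed by the two middle regions $x[M_L : M_R - 1]$ and $y[M_L : M_R]$, since we can delete each block of the first and insert each block of the second.

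To bound the length of the middle region, I would apply Lemma~\ref{lemmarkerscommonperm} to a run of $3 c m_0$ consecutive letters of $x$ just to the left of the affected zone and another such run just to the right, for a parameter $m_0$ to be tuned. Each application yields a marker with probability at least $1 - (2/3)^{m_0}$, so by a union bound both $M_L$ and $M_R$ lie within $O(c m_0)$ of $k$ with probability at least $1 - 2 (2/3)^{m_0}$. Lemma~\ref{lemshrinkperm} then supplies $O(m_0)$ blocks per middle region, giving $\ed(\phi(x), \phi(y)) \le C m_0$ for some absolute constant $C$. Taking $m_0 = m / C$ yields $\Pr[\ed(\phi(x), \phi(y)) > m] \le 2(2/3)^{m/C} = (1/2)^{\Omega(m)}$, as claimed.

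The main obstacle is handling the boundary cases where $k$ lies within $O(c m_0)$ of an endpoint of $x$, so that the $3 c m_0$-letter windows needed to invoke Lemma~\ref{lemmarkerscommonperm} run off the string, or the desired marker falls outside the allowed range $c < i \le n - c$. In those cases the entire prefix or suffix of $x$ outside the affected zone already has length $O(c m_0)$, so Lemma~\ref{lemshrinkperm} bounds its block count by $O(m_0)$ directly, without needing a marker at the specified location, and the same conclusion goes through.
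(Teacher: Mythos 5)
Your proposal is correct and follows essentially the same route as the paper's proof: exploit that marker status is determined by a radius-$c$ window (so the insertion only perturbs markers near position $k$, and markers to the right shift by one), use Lemma~\ref{lemmarkerscommonperm} to find a nearby resynchronizing marker with failure probability $(2/3)^{\Omega(m)}$, and bound the number of affected blocks via Lemma~\ref{lemshrinkperm}. The only cosmetic difference is that you also invoke Lemma~\ref{lemmarkerscommonperm} on the left to find $M_L$ (costing an extra union bound), whereas the paper observes directly that all blocks lying entirely before the affected zone are unchanged, needing the marker lemma only on the right; both handle the boundary cases in the same spirit.
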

\begin{proof}
Let $i$ be the index such that $y$ is generated from $x$ by adding a
character at index $i$. That is, $x_1 \cdots x_{i-1} = y_1 \cdots
y_{i-1}$ and $x_i x_{i+1} \cdots = y_{i+1} y_{i+2} \cdots$.

Whether a letter is a marker is determined entirely by the $c$ letters
immediately to its left and right. Thus for any $j < i - c$, there is
a marker at $x_j$ if and only if there is a marker at $y_j$, and for
any $j \ge i + c$, there is a marker at $x_j$ if and only if there is
a marker at $y_{j+1}$.

It follows that the blocks containing only characters prior to $x_{i -
  c- 1}$ and $y_{i - c - 1}$ are entirely identical in $\phi(x)$ and
$\phi(y)$. This is because, in general, any edit which modifies only
markers more than one character to the right of a block will not
affect that block.

Moreover, it follows that if there is a marker in some position $x_j$
with $j \ge i + c$, then there will also be a marker in $y_{j + 1}$,
and the blocks starting from $x_j x_{j + 1} \cdots$ will be identical
to those starting from $y_{j + 1}y_{j + 2}\cdots$. By Lemma
\ref{lemmarkerscommonperm}, with probability at least $1 -
\left(\frac{2}{3}\right)^m$, there is a marker in the $3mc$ letters
following $x_{i + c - 1}$. Consequently with probability at least $1 -
\left(\frac{2}{3}\right)^m$, the blocks in $x_j x_{j + 1} \cdots$ will
be identical to those in $y_{j + 1}y_{j + 2}\cdots$ for some $j < i +
3mc + c$.

So far we have shown that with probability at least $1 -
\left(\frac{2}{3}\right)^m$, the blocks in $x$ and $y$ are the same
except for those intersecting $x_{i - c - 1}, \ldots, x_{i + 3mc + c - 1}$
or $y_{i - c - 1}, \ldots, y_{i + 3mc + c}$. By Lemma
\ref{lemshrinkperm}, it follows that the number of edits needed to
transform $\phi(x)$ to $\phi(y)$ is at most $O(m)$, completing the
proof.
\end{proof}

The preceding lemma can be used to bound $\ed(\phi(x), \phi(y))$ with
high probability given that $x$ and $y$ differ by a single
insertion. In order to extend the lemma to hold for many edits
simultaneously, we must first make an observation about sums of
dependent geometric random variables.
\begin{lem}
  Let $X_1, X_2, \ldots, X_k$ be (not necessarily independent) random
  variables over $\mathbb{N}$ satisfying $\Pr[X_i > m] \le
  \frac{1}{2^m}$ for all $i$ and $m \ge 1$. Then for all $\lambda \in
  \mathbb{N}$,
  $$\Pr \left[\sum_{i = 1}^k X_i \ge \lambda k \right]  \le \frac{1}{2^{\Omega(\lambda)}}.$$
  \label{lemgeometricvars}
\end{lem}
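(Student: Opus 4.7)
The plan is a Chernoff-style tail bound in which Holder's inequality substitutes for the missing independence assumption. Let $S = \sum_{i=1}^k X_i$ and consider the moment generating function $\E[e^{tS}]$ for a parameter $t > 0$ to be chosen later. Writing $e^{tS} = \prod_i e^{t X_i}$ and applying Holder's inequality with $k$ equal exponents (each equal to $k$, so that $\sum_i 1/k = 1$) would give
\begin{equation*}
\E[e^{tS}] \;\le\; \prod_{i=1}^k \bigl(\E[e^{tk X_i}]\bigr)^{1/k} \;\le\; \max_i \E[e^{tk X_i}].
\end{equation*}
The price of giving up independence is that the joint MGF at $t$ is bounded by a marginal MGF at the scaled argument $tk$; this is affordable precisely because the threshold $\lambda k$ also scales linearly in $k$, so the extra factor of $k$ will cancel at the Markov step.

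Next I would use the tail bound $\Pr[X_i > m] \le 2^{-m}$ to show that $\E[e^{s X_i}]$ is bounded by an absolute constant whenever $e^s < 2$. This is a routine summation-by-parts (equivalently, a geometric-series) computation yielding a bound of the form $\E[e^{s X_i}] \le 1 + 2(e^s - 1)/(2 - e^s)$, which I will not grind out. Taking $s = \ln(3/2)$ keeps this quantity bounded by some constant $C = O(1)$, and choosing $t = s/k$ in the Holder step then gives $\E[e^{tS}] \le C$.

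Finally I would apply Markov's inequality:
\begin{equation*}
\Pr\!\left[\, S \ge \lambda k \,\right] \;\le\; e^{-t \lambda k}\, \E[e^{tS}] \;\le\; C \cdot (2/3)^\lambda \;=\; 2^{-\Omega(\lambda)},
\end{equation*}
which is exactly the claimed bound. The main (and essentially only) obstacle is the lack of independence among the $X_i$; Holder's inequality handles it cleanly by converting the joint expectation into a product of individual expectations, at the cost of rescaling the exponent by $k$, which is absorbed by the linear-in-$k$ threshold. An equivalent route would be to apply AM-GM directly to $e^{(t/k) S} = \prod_i e^{(t/k) X_i}$ and then take expectations, which yields the same inequality.
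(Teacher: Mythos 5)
Your proposal is correct and is essentially the paper's own proof: the paper sets $t = 1.5^{1/k}$, applies Markov's inequality to $t^{\sum_i X_i}$, uses H\"older's inequality with $k$ equal exponents to reduce to a single marginal moment $\E[1.5^{X_i}]$, and bounds that by a constant via the geometric tail, which is exactly your argument with $e^s = 3/2$ and $t = s/k$. No substantive difference beyond notation.
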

\proofatend
  Let $t= 1.5^{1/k}$. Then,
  \begin{equation*}
    \begin{split}
      \Pr \left[ \sum_{i = 1}^k X_i \ge \lambda k \right] & = \Pr\left[ t^{\sum_i X_i} \ge t^{\lambda k} \right] \\
      & \le \frac{\E\left[t^{\sum_i X_i}\right]}{t^{\lambda k}},
    \end{split}
  \end{equation*}
  where the second inequality follows from Markov's inequality. Since
  $t^k = 1.5$, it follows that
  \begin{equation*}
    \Pr \left[ \sum_{i = 1}^k X_i \ge \lambda k \right] \le
    \frac{\E\left[t^{\sum_i X_i}\right]}{2^{\Omega(\lambda)}}.
  \end{equation*}

  To complete the proof, it suffices to show that $\E\left[t^{\sum_i
      X_i}\right] \le O(1)$. By H\"{o}lder's inequality,
  \begin{equation*}
    \E\left[t^{X_1} \cdot t^{X_2} \cdots t^{X_k}\right] \le \E\left[t^{X_1 k}\right]^{1/k} \cdot \E\left[t^{X_2 k}\right]^{1/k} \cdots \E\left[t^{X_k  k}\right]^{1/k}.
  \end{equation*}
  Without loss of generality $\E\left[t^{X_i k} \right]$ is maximized
  by $i = 1$. Thus the above expression is at most
  \begin{equation*}
    \begin{split}
      \E\left[t^{X_1 k}\right] & = \E\left[1.5^{X_1}\right] \\
      & \le \sum_{i = 1}^\infty 1.5^i \cdot \Pr[X_1 \ge i] \\
      & \le 1.5 + \sum_{i = 2}^\infty 1.5^i \cdot \frac{1}{2^{i - 1}} \\
      & = O(1),
    \end{split}
  \end{equation*}
  completing the proof.
\endproofatend

In fact, we will need a slightly specialized application of the above
lemma.
\begin{lem}
  Let $p$ be some (small) probability. Let $X_1, X_2, \ldots, X_k$ be
  (not necessarily independent) random variables over $\mathbb{N}$
  satisfying $\Pr[X_i > \lambda] \le \frac{1}{2^{\Omega(\lambda)}} +
  \frac{1}{p}$ for all $i$ and $\lambda \in \mathbb{N}$. Then for all $\lambda
  \in \mathbb{N}$,
  $$\Pr \left[\sum_{i = 1}^k X_i > \lambda k \right]  \le \frac{1}{2^{\Omega(\lambda)}} + \frac{2k}{p}.$$
  \label{lemgeometricvars2}
\end{lem}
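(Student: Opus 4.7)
The plan is to reduce Lemma~\ref{lemgeometricvars2} to Lemma~\ref{lemgeometricvars} by truncating the $X_i$ at a well-chosen threshold, thereby removing the additive $\frac{1}{p}$ term in the tail bound, and then paying for the truncation separately via a union bound.

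First, I would pick a threshold $\lambda_0 = \Theta(\log p)$ large enough that $\frac{1}{2^{\Omega(\lambda_0)}} \le \frac{1}{p}$, where the hidden constant is the one from the tail hypothesis on $X_i$. The key observation is that for every $\lambda \le \lambda_0$, the hypothesis gives
\[
\Pr[X_i > \lambda] \le \frac{1}{2^{\Omega(\lambda)}} + \frac{1}{p} \le \frac{1}{2^{\Omega(\lambda)}} + \frac{1}{2^{\Omega(\lambda_0)}} \le \frac{2}{2^{\Omega(\lambda)}},
\]
so that on the range $[0,\lambda_0]$ the additive $1/p$ slack is absorbed into the geometric term.

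Next, define the truncated variables $\tilde{X}_i = \min(X_i, \lambda_0)$. By the inequality above, $\Pr[\tilde{X}_i > \lambda] \le 2 \cdot 2^{-\Omega(\lambda)}$ for $\lambda < \lambda_0$ and $=0$ for $\lambda \ge \lambda_0$, so the bound $\Pr[\tilde{X}_i > \lambda] \le 2 \cdot 2^{-\Omega(\lambda)}$ holds for all $\lambda$. Rescaling by a suitable constant $c$ (chosen so that $c \cdot \Omega(1) \ge 2$), the variables $\tilde X_i / c$ satisfy the exact hypothesis $\Pr[\tilde X_i/c > m] \le 2^{-m}$ of Lemma~\ref{lemgeometricvars}, so that lemma yields
\[
\Pr\!\left[\sum_{i=1}^{k} \tilde{X}_i > \lambda k\right] \le \frac{1}{2^{\Omega(\lambda)}}
\]
for every $\lambda \in \mathbb{N}$ (after rescaling back and absorbing the constant $c$ into the $\Omega(\cdot)$).

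Finally, I would decompose the event of interest. If $\sum_i X_i > \lambda k$, then either $\tilde X_i = X_i$ for all $i$ (in which case $\sum \tilde X_i > \lambda k$) or some $X_i$ exceeded the truncation threshold. By a union bound and the hypothesis at $\lambda_0$,
\[
\Pr[\exists i: X_i > \lambda_0] \le k\!\left(\frac{1}{2^{\Omega(\lambda_0)}} + \frac{1}{p}\right) \le \frac{2k}{p}.
\]
Combining this with the tail bound for $\sum \tilde X_i$ gives the claim. The main obstacle is bookkeeping: I need to be careful that the two different occurrences of the symbol $\Omega(\cdot)$ in the conclusion and hypothesis refer to possibly different constants, and that the choice of $c$ and $\lambda_0$ interact correctly so that the single rescaling step absorbs both the extra factor of $2$ in the truncated tail and the constant blow-up coming from Lemma~\ref{lemgeometricvars}. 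Apart from this, the argument is routine.
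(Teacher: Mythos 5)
Your proposal is correct and is essentially the paper's own argument: the paper likewise rescales by a constant ($Y_i = \lceil X_i/g \rceil$), truncates at a threshold $\Theta(\log p)$ (defining $Z_i = \min(Y_i, \lceil \log p\rceil - 1)$) so that Lemma~\ref{lemgeometricvars} applies to the truncated variables, and pays $\frac{2k}{p}$ via a union bound over the events $Z_i \neq Y_i$; swapping the order of truncation and rescaling is immaterial. The only nitpick is the direction of your condition on $\lambda_0$ (the absorption step needs $\frac{1}{p} \le 2^{-\Omega(\lambda_0)}$ while the union bound needs the reverse), but choosing $\lambda_0$ so that $2^{-c\lambda_0} \approx \frac{1}{p}$ satisfies both up to constants, exactly the bookkeeping you flagged.
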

\proofatend
  Because $\Pr[X_i > \lambda] \le \frac{1}{2^{\Omega(\lambda)}} +
  \frac{1}{p}$, there exists a positive constant $g$ such that for
  each $i$ if we define $Y_i = \lceil X_i / g \rceil$ then we get for $\lambda \in \mathbb{N}$ that
  \begin{equation}
    \Pr[Y_i > \lambda] \le
    \frac{1}{2}\left(\frac{1}{2}\right)^\lambda + \frac{1}{p}.
    \label{eqYibound}
  \end{equation}
    Thus for $\lambda
  \le \log p - 1$, we have that $\Pr[Y_i > \lambda] \le
  \left(\frac{1}{2}\right)^\lambda$. Define $Z_i = \min(Y_i, \lceil\log p\rceil -
  1)$. Then $Z_i$ satisfies that $\Pr[Z_i > \lambda] \le
  \left(\frac{1}{2}\right)^\lambda$ for all $\lambda \in \mathbb{N}$. Therefore,
  by Lemma \ref{lemgeometricvars}, we get that
  \begin{equation}
    \Pr \left[\sum_{i = 1}^k Z_i \ge \lambda k\right] \le \left(\frac{1}{2}\right)^{\Omega(\lambda)}.
    \label{eqZibound}
  \end{equation}
  For a given $i$, $$\Pr[Z_i \neq Y_i] = \Pr[Y_i > \lceil\log p\rceil - 1] \le
  \frac{2}{p},$$ where the last inequality comes from
  \eqref{eqYibound}. Thus by the union bound, the probability that
  $Z_i \neq Y_i$ for any $i$, is at most $\frac{2k}{p}$. Combining
  this with Equation \eqref{eqZibound}, we get that
  \begin{equation*}
    \Pr \left[\sum_{i = 1}^k Y_i > \lambda k\right] \le \left(\frac{1}{2}\right)^{\Omega(\lambda)} + \frac{2k}{p}.
    \label{eqYibound2}
  \end{equation*}
  If we recall that $Y_i$ and $X_i$ differ by a constant factor, it follows that
  \begin{equation*}
    \Pr \left[\sum_{i = 1}^k X_i > \lambda k\right] \le \left(\frac{1}{2}\right)^{\Omega(\lambda)} + \frac{2k}{p}.
    \label{eqXibound}
  \end{equation*}
\endproofatend

We are now prepared to bound the distortion of our dimension-reduction
map $\phi$.
\begin{lem}
  Select $p \in\poly(n)$, and pick $b$ to be a sufficiently large
  constant. Let $x$ and $y$ be permutations in $\PP_{\le n}$. Then for
  $m \in \mathbb{N}$,
  $$\Pr[\ed(\phi(x),\phi(y)) > m\ed(x, y)] \leq
   \left(\frac{1}{2}\right)^{\Omega(m)} + \frac{1}{p(n)}.$$
  \label{lemboundexpansionperm}
\end{lem}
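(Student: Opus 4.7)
The plan is to bound $\ed(\phi(x),\phi(y))$ by writing an optimal edit sequence from $x$ to $y$ as a chain of single-insertion/single-deletion steps, applying Lemma~\ref{lemphiaftersingleinsertionperm} to each step, and then summing via the triangle inequality using Lemma~\ref{lemgeometricvars2}.

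First, I would fix an optimal sequence of $\ed(x,y)$ edits from $x$ to $y$, converting every substitution into a deletion followed by an insertion, so we have a sequence of at most $k := 2\ed(x,y)$ single-character insertions and deletions. Ordering all deletions before insertions, as observed earlier in the paper, guarantees that each intermediate string $z_i$ is itself a permutation; moreover every letter appearing in any $z_i$ lies in the set $L$ of letters occurring in $x$ or $y$, which has size at most $2n$. Applying Lemma~\ref{lemnocollisions} to the set $L$ (with a polynomial chosen later), we may choose the constant $b$ large enough that with probability $1 - O(1/p(n))$ the hash function $h$ is injective on $L$, hence injective on each intermediate $z_i$ simultaneously. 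I will condition on this no-collisions event for the remainder.

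Next, for each consecutive pair $(z_i, z_{i+1})$ differing by one insertion or one deletion (the deletion case being symmetric to insertion), Lemma~\ref{lemphiaftersingleinsertionperm} gives
\begin{equation*}
\Pr\bigl[\,\ed(\phi(z_i),\phi(z_{i+1})) > \lambda\,\bigr] \le \left(\tfrac{1}{2}\right)^{\Omega(\lambda)}
\end{equation*}
conditioned on the collision-free event. Define $X_i := \ed(\phi(z_i), \phi(z_{i+1}))$. These $X_i$ are not independent (they all depend on the same hash $h$), but Lemma~\ref{lemgeometricvars2} is designed precisely for this situation: taking the polynomial in Lemma~\ref{lemnocollisions} to be $p'(n) := 4n \cdot p(n)$, the unconditional bound on each $X_i$ is $\Pr[X_i > \lambda] \le (1/2)^{\Omega(\lambda)} + 1/p'(n)$, and Lemma~\ref{lemgeometricvars2} then yields
\begin{equation*}
\Pr\!\left[\,\sum_{i=0}^{k-1} X_i > \lambda k\,\right] \le \left(\tfrac{1}{2}\right)^{\Omega(\lambda)} + \frac{2k}{p'(n)} \le \left(\tfrac{1}{2}\right)^{\Omega(\lambda)} + \frac{1}{p(n)}
\end{equation*}
since $k \le 2n$.

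Finally, the triangle inequality for edit distance gives $\ed(\phi(x),\phi(y)) \le \sum_i X_i$, and using $k \le 2\ed(x,y)$, setting $\lambda = m/2$ yields
\begin{equation*}
\Pr[\ed(\phi(x),\phi(y)) > m\,\ed(x,y)] \le \left(\tfrac{1}{2}\right)^{\Omega(m)} + \frac{1}{p(n)},
\end{equation*}
as required. The only mildly delicate step is verifying that the intermediate words remain permutations (so that Lemma~\ref{lemphiaftersingleinsertionperm} applies); this is handled by the deletion-before-insertion ordering. Everything else is a routine combination of the ingredients already developed in the section.
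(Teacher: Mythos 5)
Your proposal is correct and follows essentially the same route as the paper: decompose the optimal alignment into at most $2\ed(x,y)$ single insertions/deletions (deletions first so intermediates stay permutations), invoke Lemma~\ref{lemnocollisions} with polynomial $4n\,p(n)$ to make Lemma~\ref{lemphiaftersingleinsertionperm} unconditional, sum the dependent per-step costs via Lemma~\ref{lemgeometricvars2}, and finish with the triangle inequality, absorbing the constant factor into the $\Omega(m)$.
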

\begin{proof}
  There is a sequence of $2\ed(x, y) + 1$ permutations $x^0, x^1, x^2,
  \ldots, x^{2\ed(x, y)}$ such that $x^0 = x$, $x^{2\ed(x, y)} = y$,
  and each $x^{i + 1}$ is either equal to $x^i$ or is obtained from
  $x^i$ by either a single insertion or a single deletion.\footnote{In
    order to ensure that each successive string is still a permutation
    of length at most $n$, it is useful to perform all of the
    deletions prior to the insertions.}

  By the no-collisions assumption (Lemma \ref{lemnocollisions}), we
  can select $b$ to be a sufficiently large constant such that $h$ is
  injective on all of the letters in all of the $x_i$'s with
  probability at least $1 - \frac{1}{4n p(n)}$. Therefore, by Lemma
  \ref{lemphiaftersingleinsertionperm}, for any $x^i, x^{i + 1}$, we
  have
  \begin{equation*}
    \Pr[\ed(\phi(x^{i + 1}), \phi(x^i)) > m] \leq
    \left(\frac{1}{2}\right)^{\Omega(m)} + \frac{1}{4n p(n)},
        \label{eqlogbalancing}
  \end{equation*}
  where the constant in the $\Omega(m)$ is independent of $p(n)$.

  If we apply Lemma \ref{lemgeometricvars2}, then we see that
  \begin{equation*}
    \Pr \left[\sum_{i = 0}^{2\ed(x, y) - 1}\ed(x^i, x^{i + 1})  > 2m \cdot \ed(x, y) \right]  \le \frac{1}{2^{\Omega(m)}} + \frac{4 \ed(x, y)}{4n \cdot p(n)} \le  \frac{1}{2^{\Omega(m)}} + \frac{1}{p(n)}.
  \end{equation*}
  By the triangle inequality, it follows that
  $$\Pr[\ed(\phi(x),\phi(y)) > m\ed(x, y)] \leq
  \left(\frac{1}{2}\right)^{\Omega(m)} + \frac{1}{p(n)}.$$
\end{proof}

It remains only to bound the run time necessary to compute $\phi(w)$
for $w \in \PP_{\le n}$.
\begin{lem}
  Let $w \in \PP_{\le n}$. Then we can compute $\phi(w)$ in time
  $O(n)$.
  \label{lempermdimredruntime}
\end{lem}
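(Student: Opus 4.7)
The plan is to walk through each step of Algorithm~\ref{alg:permembedding} and bound its cost, observing that every step can be done in $O(n)$ time on input $w \in \PP_{\le n}$. The hashes $h(w_1),\dots,h(w_{|w|})$ are all computed in $O(n)$ time by the constant-time evaluation assumption on $\mathcal{H}$, and by Lemma~\ref{lemnocollisions} we may condition on $h$ being injective on the letters of $w$, so all min queries below have a unique minimum.

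The main step is the detection of markers (Step~2 of the algorithm): a naive implementation pays $O(c)$ per position, giving $O(nc)$ rather than $O(n)$. I would handle this with the classical sliding-window-minimum construction using a double-ended queue. As the window $[i-c,\,i+c]$ slides over $w$ with $i$ increasing, we maintain a deque of indices storing a monotone sequence of $h$-values currently in the window. Standard analysis shows each index enters and leaves the deque at most once, so the amortized cost per window position is $O(1)$ and the total cost is $O(n)$. After the sliding pass, $w_i$ is a marker iff the deque head at window $i$ equals $i$ itself, which can be checked in constant time per $i$. The main (only) obstacle in the proof is precisely verifying that this sliding-window step runs in amortized constant time per position; everything else is essentially bookkeeping.

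With the list of marker positions $i_1 < i_2 < \cdots < i_k$ in hand, I would implement Step~3 in a single left-to-right scan of $w$: start a new block at each $i_j$, and within the chunk between consecutive markers $i_j$ and $i_{j+1}$ emit additional block boundaries every $c$ characters (with the final block of the chunk possibly shorter). This takes time $O(n)$ because each character of $w$ is touched once.

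Finally, for Step~4, we output the string over $\Sigma^*$ whose $i$-th letter is the $i$-th block. Since the blocks form a partition of $w$, the total length of letters written is exactly $|w| \le n$, so this costs $O(n)$ time and space. Summing the costs of all steps gives the desired $O(n)$ runtime, completing the proof.
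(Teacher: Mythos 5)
Your proposal is correct and matches the paper's proof, which likewise reduces marker detection to the sliding-window minimum problem (solved in linear time with the standard deque technique) and treats the remaining block-forming and output steps as straightforward linear scans. The only minor gloss is that uniqueness of the window minimum should be checked directly by the sliding-window structure rather than assumed via the no-collisions conditioning, but this does not affect the $O(n)$ bound.
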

\begin{proof}
  This is straightforward except for one subtlety. In order to
  identify markers, we wish to identify for each window of size $2c +
  1$ which letter in that window has minimum hash (and whether that
  minimum is unique). This can be accomplished in linear time (and
  practically) using the Minimum on a Sliding Window Algorithm
  \cite{SlidingWindow}.
\end{proof}

We now present the full proof of Theorem \ref{thmdimredforperms}.

\begin{proof}[Proof of Theorem \ref{thmdimredforperms}]
  Lemma \ref{lempermdimredruntime} bounds the run time to $O(n)$. By
  Lemma \ref{lemshrinkperm}, the lengths of $\phi(x)$ and $\phi(y)$
  are at most $O(|x|/c)$ and $O(|y|/c)$. It remains to analyze the
  distortion of $\phi$. By Lemma \ref{lem:distortion-lower}, $\ed(x,
  y) \le c \cdot \ed(\phi(x), \phi(y))$. In order to complete the
  proof, we therefore wish to upper bound $\ed(\phi(x), \phi(y))$ in
  expectation by $O(\ed(x, y))$. By Lemma
  \ref{lemboundexpansionperm}, $$\Pr[\ed(\phi(x),\phi(y)) > m\ed(x,
    y)] \leq \left(\frac{1}{2}\right)^{\Omega(m)} + \frac{1}{p(n)},$$
  for all $m \in \mathbb{N}$. It follows that
  \begin{align*}
    \E[\ed(\phi(x), \phi(y)] & = \sum_{i = 0}^\infty \Pr[\ed(\phi(x), \phi(y)) > i] \\
    & \le \ed(x, y) \left(1 + \sum_{m = 1}^{\infty} \Pr[\ed(\phi(x), \phi(y)) > m \ed(x, y)]\right) \\
      & \le \ed(x, y) \left(1 +  \sum_{m = 1}^{n} \left(\left(\frac{1}{2}\right)^{\Omega(m)} + \frac{1}{p(n)}\right)\right) \qquad\text{[Using }\ed(\phi(x), \phi(y)) \le n\text{]} \\
      & = \ed(x, y) \left(O(1) + \frac{n}{p(n)}\right).
  \end{align*}
  For $p(n)$ sufficiently large, this is $O(\ed(x, y))$, as desired.

\end{proof}

\subsection{Dimension-Reduction Preliminaries}\label{secdimredprelims}

Before presenting our dimension-reduction map $\phi$ in the general
case, we provide preliminary definitions and lemmas.

\begin{defn}
Let $w \in \Sigma^{\le n}$ be a string. A contiguous substring $a$ in
$w$ is a \emph{maximally periodic substring} (with parameter $c$) if
the length of $a$ is at least $8c$, it is periodic with minimum period
at most $c$, and if extending $a$ one letter in either the left or the
right prohibits $a$ from having a minimum period at most $c$.
\end{defn}

Note that in a maximally periodic substring, the repeated substring need not appear an integer number of times.

\begin{lem}\label{lem:max-periodic-overlap}
Suppose $a$ and $b$ are maximally periodic substrings (with
parameter $c$) of $w$. If $a \neq b$, then $a$ and $b$ overlap in
fewer than $2c$ letters.
\end{lem}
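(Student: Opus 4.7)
I would prove the contrapositive: assume the overlap of $a$ and $b$ has length at least $2c$, and show $a = b$. Let $p_a, p_b \le c$ be the minimum periods of $a$ and $b$ respectively. Since the overlap has length at least $2c \ge p_a + p_b$ and is simultaneously $p_a$-periodic and $p_b$-periodic, Lemma \ref{lemgcdperiodic} implies that the overlap has period $d := \gcd(p_a, p_b) \le c$.

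Next, I would propagate this $d$-periodicity from the overlap out to all of $a$ (and symmetrically to all of $b$). The plan is: for any index $i$ with $i, i+d \in a$, use the $p_a$-periodicity of $a$ to shift $i$ by a suitable multiple of $p_a$ to an index $i'$ such that both $i'$ and $i'+d$ lie in the overlap. Since the overlap has length at least $2c \ge p_a + d$, the set of valid targets for $i'$ in the overlap has length at least $p_a$ and must contain some representative of $i \bmod p_a$. Then $w_i = w_{i'} = w_{i'+d} = w_{i+d}$, where the middle equality uses the overlap's $d$-periodicity and the outer equalities use $p_a$-periodicity of $a$. Thus $a$ is $d$-periodic; since $p_a$ is the minimum period of $a$, we conclude $d = p_a$. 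By the symmetric argument $d = p_b$, so $p_a = p_b =: p \le c$.

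Finally, consider $a \cup b$ viewed as the substring of $w$ on the union of the two intervals (an interval since they overlap). Because both $a$ and $b$ are $p$-periodic and overlap in length at least $2c > p$, the union is itself $p$-periodic. If $a \ne b$, then the union as an interval strictly contains at least one of $a$ or $b$; WLOG say $a \cup b \supsetneq a$, so there is a letter of $w$ immediately to the left or right of $a$ (and inside $w$) whose adjunction to $a$ produces a substring of $a \cup b$, hence $p$-periodic with $p \le c$. This directly contradicts the maximality of $a$, which requires that extending by one letter in either direction destroy every period $\le c$.

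The main obstacle I expect is the careful bookkeeping in the propagation step: one must verify both that the chosen shift $i' = i + k p_a$ places $i'$ and $i'+d$ inside the overlap, and that the entire chain of intermediate $p_a$-translates stays inside $a$ so the $p_a$-periodicity equalities genuinely chain. The arithmetic hinges on the fact that $2c \ge 2p_a \ge p_a + d$, which leaves just enough slack in the overlap; once these index checks are done, the rest of the argument is a clean invocation of Fine--Wilf and the maximality hypothesis.
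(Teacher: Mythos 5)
Your proof is correct and follows essentially the same route as the paper's: apply Lemma~\ref{lemgcdperiodic} to the overlap, deduce that $a$ and $b$ share the same minimum period $p$, and then observe that the $p$-periodic union properly extends one of them, contradicting maximality. The only difference is that you spell out the index-shifting bookkeeping that the paper compresses into the observation that any $p$ consecutive letters of $a$ determine all of $a$.
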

\begin{proof}
Suppose $a$ and $b$ overlap in at least $2c$ letters. Let $p$ be the
minimum period of $a$ and $q$ be the minimum period of $b$. Because
$a$ has period $p$, $b$ has period $q$, and the overlap between $a$
and $b$ is of size at least $2c \ge p + q$, Lemma~\ref{lemgcdperiodic}
tells us that the overlap is $\gcd(p, q)$-periodic. Since any
$p$ consecutive letters in $a$ completely determine the values of the
other letters in $a$, and since the overlap of $a$ and $b$ consists of
at least $p$ consecutive letters, it follows that $a$ is
$\gcd(p, q)$-periodic. Similarly, $b$ is $\gcd(p,
q)$-periodic. Thus $p = q$.

Since $a$ and $b$ both have minimum period $p$ and have overlap length
at least $2c$, it follows that the string union of $a$ and $b$ has
minimum period $p$ as well. This contradicts the statement that $a$
and $b$ are maximally periodic substrings of $w$.
\end{proof}

\begin{defn}
Let $w \in \Sigma^{\le n}$ be a string. A contiguous substring $a$ in
$w$ is a maximally non-periodic substring (with parameter $c$) if $a$
is a maximal substring of $w$ that does not intersect with any
maximally periodic substrings.
\end{defn}

\begin{lem}\label{lem:local-permutation}
Let $w \in \Sigma^{\le n}$ and let $a$ be a maximally non-periodic substring
(with parameter $c$). Then, any $c+1$ subsequent overlapping blocks
(i.e., blocks starting at $c + 1$ consecutive letters) of at least
$8c$ characters in $a$ are distinct.
\end{lem}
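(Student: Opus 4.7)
The plan is to argue by contradiction: if two of the $c+1$ overlapping blocks coincided, we would produce a periodic substring of $a$ of length $\ge 8c$ with period $\le c$, which, once extended maximally, would give a maximally periodic substring intersecting $a$ and thus contradict the definition of a maximally non-periodic substring.

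Concretely, suppose for contradiction that two distinct starting positions $i < j$ among the $c+1$ consecutive ones give the same block of length $L \ge 8c$. Then $d := j - i$ satisfies $1 \le d \le c$, and the equality of the two blocks gives $w_{k} = w_{k+d}$ for every $k \in \{i, i+1, \ldots, i+L-1\}$. I would next observe that this means the substring $s := w_i w_{i+1} \cdots w_{j + L - 1}$, which lies entirely inside $a$ and has length $L + d \ge 8c$, is $d$-periodic. Hence its minimum period is at most $d \le c$.

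Now I would extend $s$ as far as possible to the left and to the right within $w$ while still keeping its minimum period at most $c$; call the resulting substring $b$. By construction $b$ is a maximally periodic substring (with parameter $c$): it has length $\ge |s| \ge 8c$, it is periodic with minimum period $\le c$, and neither one-letter extension preserves this property. But $b \supseteq s$ and $s$ lies inside $a$, so $b$ intersects $a$. This contradicts the defining property of a maximally non-periodic substring, which by definition does not intersect any maximally periodic substring.

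The only mildly subtle point is the bookkeeping on periods: from "$s$ has $d$ as a period with $d \le c$", one needs "$s$ has minimum period at most $c$", which is immediate because the minimum period divides (or at least is bounded by) any period. No use of Lemma \ref{lemgcdperiodic} is needed for this direction; it was used only in Lemma \ref{lem:max-periodic-overlap}. The main, and essentially only, obstacle is therefore just making sure the extension producing $b$ genuinely yields a maximally periodic substring, which follows directly from the definitions.
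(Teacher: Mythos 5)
Your proof is correct and follows essentially the same route as the paper's: equality of two blocks whose starting positions differ by $d \le c$ forces a substring of length at least $8c$ inside $a$ to be $d$-periodic, contradicting maximal non-periodicity. The only difference is that you explicitly carry out the extension to a maximally periodic substring intersecting $a$, a step the paper leaves implicit, which is a harmless (indeed helpful) elaboration rather than a different argument.
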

\begin{proof}
Assume toward contradiction that there are two identical blocks $x,y$
of length at least $8c$ that start at indices $i,j$ (respectively)
such that $i<j$ and $j \leq i+c$. Then since $x_k = y_k$ for all $k
\le |x|$, and since $y_k = x_{k + (j - i)}$ for all $k \le |x| + i -
j$, it follows that $x_k = x_{k + (j - i)}$ for all $k \le |x| + i -
j$. Since $j - i \le c$, this means that $x$ is a $8c$-letter
substring with period at most $c$, contradicting the fact that $a$ is
a maximally non-periodic substring of $w$.
\end{proof}

\subsection{The Embedding in the General Case}\label{subsecdimred}
The embedding in the general case first has to identify all the
maximally periodic and non-periodic substrings. Then, it has different
rules as to where to place markers in each of the substrings. For
maximally non-periodic substrings, looking at overlapping $8c$-letter
substrings will result in something which is locally a permutation,
and will allow us to place markers similarly to the previous case (the
embedding for permutations). For maximally periodic substrings, the
period of the substring will determine the markers. The description of
the embedding appears as Algorithm~\ref{alg:embedding}.

\begin{algorithm}[tb]\caption{Dimension Reduction for Strings with Edit Distance\label{alg:embedding}}
Input: a string $w \in \Sigma^{\le n}$, a positive even integer $c$,
and a constant parameter $b \in \mathbb{N}$.

\begin{enumerate}
\item Select $h$ at random from a family $\mathcal{H}$ of $n$-wise
  independent hash functions mapping $\Sigma^{8c}$ to $\{0, 1\}^{b
    \log n}$.
\item Find all maximally periodic substrings (with parameter $c$) in
  $w$.
\item For each maximally periodic substring $a$ that was found in the
  previous step:
    \begin{enumerate}
    \item Let $p$ be the minimum period of $a$ and $s$ be the smallest multiple of $p$ satisfying $s \geq c$.
    \item Let $a_t$ be the first letter in $a$ such that $a_{t}
      a_{t+1} \cdots a_{t+p-1}$ is the lexicographically smallest
      cyclic shift of the $p$-letter substring that $a$ repeats. (Note
      that it appears at least once in $a$ since $\left|a\right| \geq
      8c$ and $p\le c$.)
    \item Place markers at each of the letters $a_t,a_{t+s},a_{t +
      2s},\ldots$ within the maximally periodic substring
      $a$. Additionally, place a marker after the final letter of $a$.
    \end{enumerate}
\item For each maximally non-periodic substring $x=x_1 \cdots x_d$:
    \begin{enumerate}
    \item Define a word $y$ of length $d-8c+1$ such that $y_i=x_i
      x_{i+1} \cdots x_{i+8c-1}$. Notice that the letters of $y$ are
      substrings of $x$.
    \item For each $i = \frac{c}{2}+1,\ldots,d-8c+1 - \frac{c}{2}$,
      place a marker at $x_i$ if, $h(y_i) < h(y_j)$ for all $j \neq i$
      satisfying $i - \frac{c}{2} \le j \le i + \frac{c}{2}$. That is,
      place a marker at $x_i$ if $h(y_i)$ is the unique minimum
      out of $h(y_{i - \frac{c}{2}}), \ldots, h(y_{i + \frac{c}{2}})$.
    \end{enumerate}
\item Partition $w$ into blocks in the following way: Each marker
  indicates the beginning of a block. For each block of size at least
  $c$ in a maximally non-periodic substring, subdivide it into blocks
  of size exactly $c$, except for the last one which will be of size
  at most $c$.
\item Return a string over the alphabet $\Sigma^*$ (i.e., whose
  letters are strings in $\Sigma$) where each block from the preceding
  step is a letter.
\end{enumerate}
\end{algorithm}

Our algorithm assumes access to an $n$-wise independent family
$\mathcal{H}$ of hash functions mapping strings of length $8c$ to the
set $\{0, 1\}^{b \log n}$. In Section
\ref{secdimensionreductionruntime} we will discuss how to implement
the algorithm in time $\tilde{O}(n)$, including how to efficiently
construct a family $\mathcal{H}$ which behaves as $n$-wise independent
with high probability. (Because the family $\mathcal{H}$ behaves as
$n$-wise independent only with high probability, the final algorithm
differs slightly from Algorithm \ref{alg:embedding}, but maintains the
original algorithm's high-probability results.)

In order to prove Theorem \ref{thm:main-dim-reduction-embedding}, we
divide the analysis into three parts. First, we bound the size of the
output strings $\phi(x),\phi(y)$ (Subsection
\ref{secboundlength}). Second, we bound the distortion in
Subsection~\ref{secbounddistortion}. We show that
$\frac{1}{2c}\ed(x,y) \leq \ed(\phi(x),\phi(y))$, and then we show
that the probability of $\ed(\phi(x), \phi(y))$ being greater than
$\ed(x, y)$ by a factor of $m$ decreases exponentially in $m$; this is
the most difficult step in the analysis. Finally, we demonstrate how
to construct $\phi(x)$ in time $O(n \log c)$ (Subsection
\ref{secdimensionreductionruntime}). Combining these, we are able to
prove Theorem \ref{thm:main-dim-reduction-embedding} in Subsection
\ref{secproofofthm}.

\subsubsection{Bounding the Length of the Output Strings}\label{secboundlength}

\begin{lem}
Let $x \in \Sigma^{\le n}$. When computing $\phi(x)$, at most $5$
markers are placed in any given $c/2$ consecutive letters of $x$.
\end{lem}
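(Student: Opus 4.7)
The plan is to classify markers by source and bound each class separately. Three sources contribute: (i) \emph{periodic-internal} markers, placed at $a_t, a_{t+s}, a_{t+2s}, \ldots$ within each maximally periodic substring $a$ (spaced $s \geq c$ apart by construction); (ii) \emph{end} markers, placed just after the final letter of each maximally periodic substring; and (iii) \emph{non-periodic} markers, placed by the min-hash rule inside each maximally non-periodic substring. I will show the counts of (i), (ii), (iii) in any window $[\ell, \ell+c/2-1]$ are at most $3$, $1$, and $1$ respectively, for a total of $5$.

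Case (iii) is the easiest. Two non-periodic markers inside a common non-periodic substring at positions $i<j$ with $j-i \leq c/2$ would each need to be the strict minimum of a window containing the other, giving $h(y_i)<h(y_j)<h(y_i)$, a contradiction; so a single non-periodic substring contributes at most one marker. Moreover, any two distinct maximally non-periodic substrings are separated by a maximally periodic substring of length $\geq 8c > c/2$, so at most one such substring meets the window. Case (ii) reduces to the observation that if two distinct maximally periodic substrings had final letters within $c/2 < 6c$ of one another, then whether they overlap or not, combining $|A_i| \geq 8c$ with Lemma~\ref{lem:max-periodic-overlap} ($\mathrm{overlap}<2c$) forces their ends to differ by more than $6c$, a contradiction; hence at most one end marker lies in the window.

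The main obstacle is case (i), which requires bounding the number of maximally periodic substrings intersecting the window. The key sub-lemma is that at most two distinct maximally periodic substrings can contain any fixed position $q$. I will prove it by contradiction: given three such substrings $A_1,A_2,A_3$ with $s_1<s_2<s_3\leq q$, Lemma~\ref{lem:max-periodic-overlap} applied to $A_1\cap A_2$, together with $|A_2|\geq 8c$, forces $e_2\geq e_1$ and $e_1-s_2+1<2c$; applying the same reasoning to $A_2\cap A_3$ with $|A_2|\geq 8c$ yields $s_3 > e_2 - 2c + 1 \geq s_2 + 6c$; chaining with $s_2 > e_1 - 2c + 1 \geq q - 2c + 1$ gives $s_3 > q + 4c + 1$, contradicting $s_3\leq q$.

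Now I classify the periodic substrings intersecting the window: since each has length $\geq 8c > c/2$, each falls into exactly one of Type~A (containing the entire window), Type~B (starting strictly inside), or Type~C (ending strictly inside). All of Type~A and Type~C contain $\ell$, so the sub-lemma gives $|A|+|C|\leq 2$; all of Type~A and Type~B contain $\ell+c/2-1$, giving $|A|+|B|\leq 2$. A start-separation argument analogous to case (ii) (two starts within $c/2<6c$ of each other contradict Lemma~\ref{lem:max-periodic-overlap} together with $|A_i|\geq 8c$) shows $|B|,|C|\leq 1$. A brief case analysis on $|A|\in\{0,1,2\}$ then yields $|A|+|B|+|C|\leq 3$. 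Since each such substring contributes at most one periodic-internal marker to the window (spacing $s\geq c>c/2$), case (i) contributes at most $3$, completing the bound of $3+1+1=5$.
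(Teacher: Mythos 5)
Your proof is correct, and it ultimately rests on the same two ingredients as the paper's argument: Lemma~\ref{lem:max-periodic-overlap} to control how periodic substrings can crowd a $c/2$-window, and the antisymmetry of the strict hash-minimum condition to show a window gets at most one non-periodic marker. The bookkeeping, however, is genuinely different. The paper counts per intersecting substring: a $c/2$-window meets at most two maximally periodic substrings (asserted directly from Lemma~\ref{lem:max-periodic-overlap}), each contributing at most two markers (at most one internal, since consecutive internal markers are $s \ge c$ apart, plus possibly its end marker), giving $4+1=5$. You count per marker type: at most three periodic substrings meet the window (your ``two substrings per fixed position'' sub-lemma plus the Type A/B/C analysis), each giving at most one internal marker; at most one end marker anywhere in the window, by the end-separation argument; and at most one non-periodic marker, giving $3+1+1=5$. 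What your decomposition buys is rigor at two points the paper treats loosely: first, the claim about how many periodic substrings can meet the window is proved rather than asserted (your bound of three is not tight --- one can push the same overlap arguments to show the true count is two --- but three suffices for your tally); second, your separate class of end markers cleanly covers the case of a periodic substring ending just before the window, whose end marker lands inside the window even though the substring itself does not intersect it --- a case that the paper's ``substrings intersecting $a$'' phrasing quietly skips, though the final bound of $5$ is unaffected. One small nit: in your sub-lemma you take $s_1 < s_2 < s_3$ without remarking that the starts must be distinct; this is immediate (equal starts would force an overlap of at least $8c > 2c$ letters), but it deserves a sentence.
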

\begin{proof}
Consider some $c/2$ consecutive letters $a = a_1 \cdots a_{c/2}$ of
$x$. The only way for two markers placed by a maximally periodic
substring to be within the same $c/2$ positions is for one of those
markers to be after the end of the periodic substring. Thus any
maximally periodic substring intersecting $a$ contributes at most two
markers. By Lemma \ref{lem:max-periodic-overlap}, the letters in $a$
can appear in at most two maximally periodic substrings, which
therefore contribute at most four markers.

On the other hand, we claim that at most one marker can be placed in a
given $c / 2$ letters by maximally non-periodic substrings. Indeed, if
markers $w_i$ and $w_j$ are placed in the same $c/2$ letters, then we
get that for the corresponding $y_i,y_j$, we have $h(y_i) < h(y_j)$
and $h(y_j) < h(y_i)$, a contradiction.
\end{proof}

\begin{cor}
Let $w \in \Sigma^{\le n}$. When computing $\phi(w)$, the total number
of blocks that intersect any $mc$-letter substring is at most
$O(m)$. Hence $|\phi(w)| \le O(n / c)$.
  \label{corboundlength}
\end{cor}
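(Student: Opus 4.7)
The plan is to use the preceding lemma, which bounds the number of markers in any $c/2$ consecutive letters by $5$, together with the observation that non-periodic blocks created by step 5 of Algorithm~\ref{alg:embedding} are each of size exactly $c$ (except possibly the last in each run).

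First, I would cover any $mc$-letter window of $w$ by $2m$ disjoint intervals of $c/2$ consecutive letters. Applying the preceding lemma to each, I get that the total number of markers appearing in the window is at most $5 \cdot 2m = O(m)$. Since each block starts either at a marker or immediately after the previous block within a maximally non-periodic substring, the blocks in the window fall into two categories: those whose left endpoint is a marker, and those created by the $c$-sized subdivision within maximally non-periodic substrings. The first category contributes $O(m)$ blocks directly from the marker count. For the second category, note that all such blocks other than possibly the last in each maximal non-periodic run have length exactly $c$, and all of these blocks lie inside the $mc$-letter window (or extend slightly past its boundary, contributing $O(1)$ more), so there can be at most $m + O(1)$ of them. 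Summing both categories gives the claimed $O(m)$ bound.

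For the second assertion, I would apply the bound with $m = \lceil n/c \rceil$ to the whole string $w$ (whose length is at most $n$), yielding $|\phi(w)| \le O(n/c)$. The main step is simply combining the marker bound from the previous lemma with the deterministic $c$-subdivision rule; there is no probabilistic obstacle here, so I expect the proof to be short and direct.
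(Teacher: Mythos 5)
Your argument is correct and essentially the same as the paper's: bound the markers in the window by the preceding lemma ($5$ per $c/2$ letters, so $O(m)$ total), then account separately for the deterministic $c$-sized subdivision blocks, which the paper phrases as placing a ``sub-marker'' every $c$ letters (no two in the same $c/2$ letters, hence at most $2m$ of them). The only slight bookkeeping slip is that the short leftover block occurs once per inter-marker block, not once per maximal non-periodic substring, but since each such block is tied to a marker this still contributes only $O(m)$ and the conclusion stands.
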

\begin{proof}
By the preceding lemma, every chunk of $c/2$ letters contains at most
$5$ markers, bounding the number of markers to $10m \in O(m)$.

After markers are placed, blocks between markers in non-periodic
substrings may be further partitioned. This further partitioning can
be seen as adding a new sub-marker every $c$ letters until the next
marker. Note that this process never places two sub-markers in the
same $c/2$ letters, bounding the number of submarkers in our
$mc$-letter substring to $2m$.

Therefore, the total number of blocks that intersect the $mc$-letter
substring is $O(m)$.
\end{proof}

\subsubsection{Bounding the Distortion}\label{secbounddistortion}

\begin{observation}
Each block is of size at most $c$ within maximally non-periodic
substrings, and is size at most $2c - 1$ within maximally periodic
substrings. Blocks that start in a maximally non-periodic substring
and end in a maximally periodic substring are also of size at most $2c
- 1$.
\end{observation}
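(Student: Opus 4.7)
The plan is to verify each of the three size bounds by direct inspection of Algorithm~\ref{alg:embedding}, handling the cases in the order the observation states them.

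For blocks wholly within a maximally non-periodic substring, Step~5 of the algorithm explicitly subdivides any such block of size at least $c$ into chunks of size exactly $c$ with at most one shorter final chunk, so every such block has size at most $c$. For blocks wholly within a maximally periodic substring $a$ with minimum period $p \le c$, Step~3 places markers at spacing $s$, where $s$ is the smallest multiple of $p$ satisfying $s \ge c$. Writing $c = kp + r$ with $0 \le r < p$, either $r = 0$ and $s = c$, or $s = (k+1)p = c + (p - r) \le c + p - 1 \le 2c - 1$. The extra marker placed immediately after the last letter of $a$ ensures that even the trailing block, from $a_{t + Ks}$ to the end of $a$, has length at most $s$: otherwise one more marker $a_{t + (K+1)s}$ would have fit inside $a$. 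Hence every block wholly within a maximally periodic substring has size at most $2c - 1$.

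For the third case, a block that starts in a maximally non-periodic substring and ends in the adjacent maximally periodic substring $a$, I would split the block into its non-periodic and periodic portions. The periodic portion consists of the letters from the start of $a$ up to position $a_t - 1$, contributing length $t - 1 \le p - 1 \le c - 1$. The non-periodic portion begins at the last marker (original from Step~4 or sub-marker from Step~5) inside the non-periodic substring; by the Step~5 subdivision rule applied to the preceding block, this last marker sits at most $c$ positions before the boundary with $a$. Summing the two portions yields a total block length of at most $c + (c - 1) = 2c - 1$.

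The main obstacle is pinning down the precise behaviour of Step~5 when the block in question straddles the boundary between a non-periodic and a periodic substring. Step~5 as written targets blocks ``in a maximally non-periodic substring,'' but for the observation to be tight one must read this as also governing the non-periodic prefix of such a spanning block, inserting sub-markers every $c$ letters up to the point where the periodic substring begins; the last resulting sub-block then absorbs the at most $p - 1$ letters of periodic tail. Once this reading is fixed, the size bounds all follow from the arithmetic above, with the final case being the only one requiring more than a line of verification.
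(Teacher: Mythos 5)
Your verification is correct and is essentially the intended argument: the paper states this as an observation with no written proof, and your direct accounting — spacing $s \le c + p - 1 \le 2c - 1$ between markers inside a maximally periodic substring, the marker after its final letter bounding the trailing block by $s$, and, for a straddling block, a non-periodic prefix of length at most $c$ plus a periodic tail of length $t - 1 \le p - 1 \le c - 1$ ending before the first marker $a_t$ — is exactly the justification the construction is meant to make evident. The interpretive point you flag about Step~5 is resolved the way you resolve it: the subdivision into length-$c$ sub-blocks is confined to the portion lying in the maximally non-periodic substring, and the fact that the paper explicitly allows $2c-1$ (rather than $c$) for blocks crossing into a periodic substring confirms that this is the intended reading, so the gap is only apparent.
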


\begin{lem}
  For any $x,y$, $\ed(x,y) \leq 2c \ed(\phi(x),\phi(y))$.
  \label{lemboundcontraction}
\end{lem}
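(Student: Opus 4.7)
The plan is to show that any sequence of edits transforming $\phi(x)$ into $\phi(y)$ can be simulated on the original strings at a cost of at most $2c$ character edits per block edit. By the preceding observation, every block produced by Algorithm~\ref{alg:embedding} has length at most $2c - 1$, whether it lies in a maximally non-periodic substring, a maximally periodic substring, or straddles the two. So each ``letter'' in $\phi(x)$ or $\phi(y)$ represents an actual substring of length at most $2c - 1$.

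Given an optimal sequence of $\ed(\phi(x),\phi(y))$ edits on $\phi(x)$, I would replace each block-level operation with the corresponding character-level edits on the underlying string: a block insertion of a block $B$ becomes $|B| \le 2c - 1$ character insertions; a block deletion of a block $A$ becomes $|A| \le 2c - 1$ character deletions; and a block substitution replacing $A$ by $B$ can be realized by editing $A$ into $B$ character-by-character at a cost of at most $\max(|A|,|B|) \le 2c - 1$ character edits (for instance, substitute overlapping positions and then insert or delete the length difference). Summing over the $\ed(\phi(x),\phi(y))$ block-level operations yields a sequence of at most $(2c-1)\,\ed(\phi(x),\phi(y)) \le 2c\,\ed(\phi(x),\phi(y))$ character edits that transforms $x$ into $y$, and therefore $\ed(x,y) \le 2c\,\ed(\phi(x),\phi(y))$.

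There is essentially no obstacle here: the only nontrivial input is the block-length bound, which has already been recorded in the observation immediately preceding the lemma. The proof is a one-paragraph simulation argument and no probabilistic reasoning is needed (the bound holds deterministically regardless of the random hash function chosen in Algorithm~\ref{alg:embedding}).
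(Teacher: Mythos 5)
Your proof is correct and is essentially the paper's argument: the paper proves this lemma by citing the same simulation argument used for Lemma~\ref{lem:distortion-lower} (each block-level edit is simulated by at most one block-length worth of character edits), with the block-length bound $c$ replaced by the $2c-1$ bound from the preceding observation. Your write-up just spells out the insertion/deletion/substitution cases explicitly.
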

\begin{proof}
The proof follows exactly as that of Lemma~\ref{lem:distortion-lower},
except with $c$ replaced with $2c$.
\end{proof}

\begin{lem}\label{clm:minimal-letter}
Consider $W \in \mathbb{N}$. Let $x$ be a string of length $m \cdot
2W$ such that any contiguous substring of length at most $2W$ is a
permutation. Consider an $|x|$-wise independent hash function
$h$, and condition on $h$ having no collisions on the letters of
$x$. For $W < i \leq |x| - W$, we say that $x_i$ is a minimal letter
if $h(x_i) = \min_{i-W \leq j \leq i+W}{h(x_j)}$. Then,
\[
\Pr[x\text{ does not contain a minimal letter}] \leq \left(\frac{1}{2}\right)^{m-2}.
\]
\end{lem}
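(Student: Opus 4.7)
The plan is to partition $x$ into $m$ consecutive blocks $B_1,\dots,B_m$, each of length $2W$, and, for each middle index $k\in\{2,\dots,m-1\}$, to exhibit a sufficient event $A_k$ that $B_k$ contains a minimal letter and has probability at least $\tfrac12$. Any minimal letter in $B_k$ (for such $k$) lies in positions $(W,|x|-W]$, so $\Pr[x\text{ has no minimal letter}]\le\Pr\!\bigl[\bigcap_{k=2}^{m-1}\neg A_k\bigr]$.

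For each $k\in\{2,\dots,m-1\}$, define the extended region $E_k$ to consist of $B_k$ together with the $W$ positions immediately to its left and the $W$ positions immediately to its right, so $|E_k|=4W$. For any $i\in B_k$ the window $[i-W,i+W]$ lies inside $E_k$; hence if the letter achieving the minimum of $h$ among the distinct letters appearing in $E_k$ occurs at some position inside $B_k$, that occurrence is a minimal letter. This is the event $A_k$. Because any contiguous substring of $x$ of length at most $2W$ is a permutation, $B_k$ contains exactly $2W$ distinct letters, while $E_k$ (length $4W$) contains at most $4W$ distinct letters. Conditioning on $h$ having no collisions on the letters of $x$, the $|x|$-wise independence of $h$ makes the identity of the $h$-minimum letter over $E_k$ uniformly distributed on the distinct letters appearing in $E_k$, so
\[
\Pr[A_k]\;\ge\;\frac{|B_k|\text{ distinct}}{|E_k|\text{ distinct}}\;\ge\;\frac{2W}{4W}\;=\;\tfrac12.
\]

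To combine the $m-2$ lower bounds, I would process the middle blocks in order $k=2,3,\dots,m-1$, at each step revealing the hash values of the letters in $E_k$ that have not been revealed by prior rounds, and argue that $\Pr[\neg A_k\mid\text{past revelations}]\le\tfrac12$. The reason this conditional bound persists is that $A_k$ depends only on the relative ordering of hash values restricted to the distinct letters in $E_k$; the $|x|$-wise independence of $h$ together with no-collisions ensures that, conditioned on the already-revealed values, the ordering of the remaining unrevealed hashes is still uniform, and a bookkeeping argument shows that the fraction of orderings producing $A_k$ is at least $\tfrac12$. Chaining these conditional bounds gives $\Pr\!\bigl[\bigcap_{k=2}^{m-1}\neg A_k\bigr]\le (1/2)^{m-2}$, as required.

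The main obstacle is the last step: the extended regions $E_k$ and $E_{k'}$ overlap in positions for $|k-k'|\le 2$, and even for non-overlapping extensions the same letter can repeat across regions (the periodicity-free assumption forces repeated letters to be $\ge 2W$ apart, but that still allows the same letter to appear in several $E_k$'s). Consequently the events $\neg A_k$ are not a priori independent, and one cannot simply multiply. The technical content of the proof is therefore the conditional-independence argument above, which must be written carefully to show that revealing hashes outside $E_k$ leaves the argmin-over-$E_k$ event with conditional probability at least $\tfrac12$, regardless of the particular revealed values.
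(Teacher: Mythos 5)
Your marginal bound $\Pr[A_k]\ge \tfrac12$ is fine, but the chaining step contains a genuine gap, and the intermediate claim it rests on is false as stated. You assert that after revealing the hashes of the letters in $E_2\cup\cdots\cup E_{k-1}$, one still has $\Pr[\neg A_k\mid\text{past revelations}]\le\tfrac12$ \emph{regardless of the particular revealed values}. This cannot be right: the left extension of $E_k$ and the first $W$ positions of $B_k$ lie inside $E_{k-1}$, so their hashes are among the revealed values. If the revealed values happen to assign an extremely small hash to a letter sitting in the left extension (outside $B_k$) and larger hashes to the revealed letters of $B_k$, then $A_k$ requires one of the at most $W$ still-unrevealed letters of $B_k$ to beat that tiny value, an event whose conditional probability can be made arbitrarily close to $0$. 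So the per-step bound does not hold uniformly over revelations; whether it holds when one conditions instead on the events $\neg A_2,\ldots,\neg A_{k-1}$ themselves is exactly the nontrivial issue, and the ``bookkeeping argument'' you defer is the entire content of the proof. A further complication you would have to handle is that, since only windows of length $2W$ are guaranteed to be permutations, the same letter can recur in several regions $E_k$, so hashes ``revealed'' in earlier rounds can reappear inside later blocks.

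The paper sidesteps this dependence problem with a different decomposition. It splits $x$ into $2m$ disjoint blocks of width $W$ and observes that if some interior block's minimum hash is smaller than the minima of both neighboring blocks, then the letter attaining it is a minimal letter (its $\pm W$ window is covered by the three blocks). Hence ``no minimal letter'' forces the sequence of block minima to be strictly increasing up to some index and strictly decreasing afterwards, so in particular one of the two halves carries a strictly monotone run of length $m$. The probability of such a run is then bounded by a product $\prod_i a_i/b_i$, where $a_i$ counts the letters of the $i$-th block not seen in earlier blocks of the run and $b_i$ their cumulative number; since adjacent width-$W$ blocks share no letters, $a_1=a_2=W$ and each factor is at most $W/2W=\tfrac12$, giving $(1/2)^{m-1}$ per half and $(1/2)^{m-2}$ overall. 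This counting of ``new letters'' is precisely what absorbs both the overlap/dependence issue and the possibility of repeated letters, which your outline leaves unresolved. If you want to salvage your block-plus-extension scheme, you would need an argument of comparable care (or settle for a weaker $(1/2)^{\Omega(m)}$ bound by using only every third block, whose extended regions are disjoint—though even then repeated letters across far-apart regions must be addressed, and the constant in the exponent no longer matches the statement).
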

\begin{proof}
  Break $x$ into subwords $x^1, x^2, \ldots, x^{2m}$ of size $W$. Let
  $h_{\min}\left(x^i\right)$ denote the minimum hash of the letters in
  $x^i$. Because each window of $2W$ letters is a permutation in $x$,
  $x^i$ and $x^{i + 1}$ are supported by disjoint sets of letters for
  all $i$. Therefore $h_{\min}\left(x^i\right) \neq h_{\min}\left(x^{i
    + 1}\right)$ for all $i$.

  We begin by analyzing the structure of $x$ under the condition that
  $x$ contains no minimal letters. Suppose there is some $s$ such that
  $h_{\min}\left(x^{s - 1}\right) > h_{\min}\left(x^s\right) $, and pick a minimal such
  $s$. Then in order so that $x^s$ not contain a minimal letter, it
  must be that either $s = 2m$ or that $h_{\min}\left(x^{s}\right) > h_{\min}\left(x^{s
    + 1}\right)$. Continuing like this, we see that
  \begin{equation}
    h_{\min}\left(x^{s - 1}\right) > h_{\min}\left(x^{s}\right) > h_{\min}\left(x^{s + 1}\right) > \cdots > h_{\min}\left(x^{2m}\right).
    \label{eqchaindown}
  \end{equation}
  Moreover, since $s$ is the minimal $s$ satisfying $h_{\min}\left(x^{s -
    1}\right) > h_{\min}\left(x^s\right)$, we must have
  \begin{equation}
    h_{\min}\left(x^{1}\right) < h_{\min}\left(x^{2}\right) < h_{\min}\left(x^{3}\right) < \cdots < h_{\min}\left(x^{s - 1}\right).
    \label{eqchainup}
  \end{equation}
  Notice that if there does not exist an $s$ for which $h_{\min}\left(x^{s -
    1}\right) > h_{\min}\left(x^s\right)$, then \eqref{eqchainup} holds for $s = 2m +
  1$. Therefore, as long as $x$ contains no minimal letters, then
  \eqref{eqchaindown} and \eqref{eqchainup} hold for some $s \in \{2,
  \ldots, 2m + 1\}$.

  To complete the proof, we will show that, given an arbitrary $x$,
  the probability of \eqref{eqchaindown} and \eqref{eqchainup} holding
  for some $s$ is at most $1/2^{m - 2}$. If we consider the cases of
  $s \le m + 1$ and $s > m + 1$ separately, then it suffices to prove in each
  case that the probability of \eqref{eqchaindown} and
  \eqref{eqchainup} holding is at most $1/2^{m - 1}$. Since the
  problem is symmetrically defined for these two cases, we will
  consider only $s \le m + 1$. In particular, we will show that for a
  given $x$,
  \begin{equation}
    \Pr\Big[h_{\min}\left(x^{m + 1}\right) > h_{\min}\left(x^{m + 2}\right) > h_{\min}\left(x^{m + 3}\right) > \cdots > h_{\min}\left(x^{2m}\right)\Big] \le \left(\frac{1}{2}\right)^{m - 1}.
    \label{eqsimplifiedchain}
  \end{equation}

  Define $a_i$ to be the number of letters appearing in $x^{m + i}$
  but not in any of $x^{m + 1}, \ldots, x^{m + i - 1}$. Define $b_i
  = \sum_{j = 1}^{i}a_i$ to be the number of distinct letters
  appearing in $x^{m + 1}, x^{m + 2} \ldots, x^{m +i}$. Suppose that
  \begin{equation}
    h_{\min}\left(x^{m + 1}\right) > h_{\min}\left(x^{m + 2}\right) > h_{\min}\left(x^{m + 3}\right) > \cdots > h_{\min}\left(x^{m + i - 1}\right).
    \label{eqinductivechain}
  \end{equation}
  Then in order for $h_{\min}\left(x^{m + i}\right)$ to be smaller
  than $h_{\min}\left(x^{m + i - 1}\right)$, some letter $l$ in $x^{m
    + i}$ not appearing in any earlier $x^{m + j}$ must have the
  smallest hash out of the letters in $x^{m + 1}x^{m + 2}\cdots x^{m +
    i}$. For a given $l \in x^{m + i}$ not appearing in any earlier
  $x^{m + j}$, the probability of $l$'s hash being smallest is
  $1/b_i$. By the union bound, it follows that the probability of
  $h_{\min}\left(x^{m + i}\right) < h_{\min}\left(x^{m + i -
    1}\right)$ given \eqref{eqinductivechain} is at most $a_i/b_i$.

  Hence the probability of \eqref{eqsimplifiedchain} occurring
  for a given $x$ is no greater than
  \begin{equation*}
      \prod_{i = 2}^{m} \frac{a_i}{b_i} = \prod_{i = 2}^{m} \frac{a_i}{\sum_{j = 1}^i a_i}.
  \end{equation*}
  Because each window of $2W$ letters in $x$ is a permutation, $a_1$
  and $a_{2}$ are both $W$. This gives us
  \begin{equation*}
      \prod_{i = 2}^{m} \frac{a_i}{2W + \sum_{j = 3}^i a_i}  \le \prod_{i = 2}^{m} \frac{a_i}{2W} \le \left(\frac{1}{2}\right)^{m - 1},
  \end{equation*}
  as desired.
\end{proof}

The above claim allows us to bound the probability of our algorithm
failing to place any markers in a long substring of consecutive
characters.

\begin{lem}\label{lem:consecutive-marker}
Consider $w \in \Sigma^{\le n}$. Consider a $(m + 10) \cdot c$ letter
substring $u$ of $w$, and condition on $h$ being injective on the
$8c$-letter contiguous substrings of $u$. (Here, $h$ is as defined in
Algorithm \ref{alg:embedding}.)  Then the algorithm places a marker in
$u$ with probability at least $1-\left(\frac{1}{2}\right)^{m-2}$.
\end{lem}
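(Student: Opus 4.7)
The plan is to split the analysis into two cases based on whether $u$ already contains some marker placed due to a maximally periodic substring. If it does, the conclusion holds deterministically; otherwise, I will show that the non-periodic, hash-based marker rule places a marker inside $u$ with probability at least $1-(1/2)^{m-2}$, and the lemma then follows because these cases form a partition of the probability space.

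The first (and main) obstacle in the second case is a structural claim: at most one maximally periodic substring $a$ intersects $u$, and when one does, $a \cap u$ is a suffix of $u$ of length at most $2c-1$. To see this, I would argue that for each $a$ intersecting $u$ in the second case, the internal markers of $a$ sit on an arithmetic progression of step $s \leq 2c - 1$, so any length-$s$ subinterval of $a$ contains a marker, forcing $|a \cap u| < s$. Moreover, $a$'s last letter must lie at or past $u$'s right endpoint, since otherwise the marker placed immediately after $a$ would fall inside $u$. Together these constraints force $a$ to start within the last $2c$ positions of $u$ and to extend at least $8c$ letters past $u$'s right end; a short overlap computation then shows that two distinct such substrings would overlap in more than $6c$ letters, contradicting Lemma~\ref{lem:max-periodic-overlap}.

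It follows that at least $(m+10)c - (2c-1) = (m+8)c + 1$ consecutive letters at the left end of $u$ miss every maximally periodic substring, and since this stretch is contiguous and crosses no periodic region, it sits inside a single maximally non-periodic substring $N = x_1 \cdots x_d$, occupying positions $x_\alpha, \ldots, x_\beta$ with $\beta - \alpha + 1 \geq (m+8)c + 1$ and $d \geq \alpha + (m+8)c$. I would then apply Lemma~\ref{clm:minimal-letter} with $W = c/2$ to the length-$mc$ slice $y_\alpha, \ldots, y_{\alpha+mc-1}$ of the sequence $y_i = x_i \cdots x_{i+8c-1}$ built by Algorithm~\ref{alg:embedding}. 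The locally-a-permutation hypothesis (any $2W = c$ consecutive $y_i$'s are distinct) follows from Lemma~\ref{lem:local-permutation} (which gives distinctness for windows of $c+1$), and the no-collisions hypothesis is exactly the conditioning in the present lemma's statement. Lemma~\ref{clm:minimal-letter} then produces, with probability at least $1-(1/2)^{m-2}$, an index $i \in [\alpha + c/2,\, \alpha + mc - c/2 - 1]$ with $h(y_i)$ strictly smaller than $h(y_j)$ for every $j$ in the $c$-window around $i$. A short arithmetic check using $d \geq \alpha + (m+8)c$ confirms that $i$ lies in Algorithm~\ref{alg:embedding}'s marker-eligible range $[c/2+1, d - 8c - c/2 + 1]$ for $N$, so a marker is indeed placed at $x_i$, which sits inside $u$.
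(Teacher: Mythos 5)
Your proposal is correct and takes essentially the same route as the paper's proof: use the deterministic markers placed inside and immediately after maximally periodic substrings to reduce to the case where all but a final suffix of fewer than $2c$ letters of $u$ lies in a single maximally non-periodic substring, then invoke Lemma~\ref{lem:local-permutation} for distinctness and Lemma~\ref{clm:minimal-letter} with $W=c/2$ on an $mc$-length slice of $y$-values to get a marker with probability at least $1-(1/2)^{m-2}$ (your extra range check on the marker-eligible indices is a welcome detail the paper glosses over). One tiny quibble that does not affect correctness: a periodic substring meeting $u$ in at most $2c-1$ letters extends at least $6c+1$ (not $8c$) letters past $u$'s right end, which still exceeds the $2c$ overlap bound of Lemma~\ref{lem:max-periodic-overlap}, and in any case the uniqueness of the intersecting periodic substring is not needed for the argument.
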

\begin{proof}
  Consider a substring $u$ of $(m + 10) \cdot c$ consecutive
  letters. If $u$ intersects a maximally periodic substring by at
  least $2c$ letters, or contains the letter after a maximally
  periodic substring, then $u$ must contain a marker from that
  maximally periodic substring. Thus in order for $u$ to be at risk of
  not containing a marker, all but the final $2c - 1$ letters of $u$
  must be elements of the same maximally non-periodic substring.

Thus we need only consider the case where $u$ contains at least $(m +
8) \cdot c$ consecutive letters all in the same maximally non-periodic
substring. Let $v$ be the string of $y$-values corresponding to the
first $mc$ of those $(m + 8)c$ consecutive letters. Because the
letters are in a maximally non-periodic substring, each consecutive
$c$ $y$-values in $v$ are distinct (by
Lemma~\ref{lem:local-permutation}). By the assumption that $h$ is
injective on the $8c$-character substrings of $u$, we have that $h_k$
is injective on the letters of $v$.  By
Lemma~\ref{clm:minimal-letter}, with probability at least $1 -
\left(\frac{1}{2}\right)^{m-2}$, there is a minimal $y$ value in $v$,
which leads to a marker being placed by the algorithm at the
corresponding letter in $u$.
\end{proof}

\begin{lem}
Let $m \in \mathbb{N}$. Let $x$ be a string and $y$ be a string that
is derived from $x$ by a single insertion before letter
$x_i$. Condition on the fact that $h$ is injective on the $8c$-letter
substrings of $x$ and $y$.

Then, $\Pr[\ed(\phi(x),\phi(y)) > m] \leq
\left(\frac{1}{2}\right)^{\Omega(m)}$.
\label{lemphiaftersingleinsertion}
\end{lem}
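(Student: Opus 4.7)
The plan is to extend the argument of Lemma~\ref{lemphiaftersingleinsertionperm} (the permutation case) to the general setting, taking extra care for how the insertion can perturb maximally periodic substrings. As before, the goal is to isolate an ``affected region'' around position $i$ such that outside of it the blocks of $\phi(x)$ and $\phi(y)$ agree (modulo the one-position shift at positions $>i$), and then to bound the size of this region by $O(mc)$ with probability $1-(1/2)^{\Omega(m)}$, so that by Corollary~\ref{corboundlength} at most $O(m)$ blocks differ and $\ed(\phi(x),\phi(y)) = O(m)$.

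I would first argue that the maximally-periodic/non-periodic decomposition of $x$ and $y$ agrees far from $i$. A maximally periodic substring of $x$ that ends more than $O(c)$ letters to the left of $i$ (or starts more than $O(c)$ to the right) appears identically in $y$, since its extension context (of length at most $p+1 \le c+1$) lies outside the modified region. Conversely, any maximally periodic substring of $y$ whose letters are disjoint from an $O(c)$-window around $i$ must, by the same argument, already be a maximally periodic substring of $x$. Consequently all structural changes are confined to a \emph{core zone}, defined as the union of (i) the (at most one) maximally periodic substring of $x$ containing $i$, if any, (ii) any maximally periodic substring of $y$ overlapping $i$, and (iii) an $O(c)$ buffer on each side.

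Inside the core zone I would show that the block sequences of $\phi(x)$ and $\phi(y)$ differ in only $O(1)$ entries, regardless of how long the core periodic substring is. The key observation is that every interior block of a maximally periodic substring equals a single length-$s$ pattern block $P$, because markers are placed at one fixed cyclic shift of the period and $s$ is a multiple of $p$. If the inserted letter preserves the period, the substring is simply lengthened by one letter, the left-side anchor is unchanged, and the right-side blocks are still copies of $P$, so only $O(1)$ blocks near $i$ can change. If the insertion breaks the period, the substring splits into a left and right piece, each of which either remains maximally periodic (with anchor locally determined by the same lex-smallest cyclic shift, so its interior blocks are again copies of $P$) or, when it falls below length $8c$, is absorbed into the adjoining maximally non-periodic region; even then the absorbed piece has length $O(c)$ and perturbs only $O(1)$ blocks on either side. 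Outside the core zone we are in maximally non-periodic territory, and the argument mirrors the permutation case: markers depend only on hashes of $8c$-letter windows, and any such window more than $O(c)$ letters from the core contains only letters that agree between $x$ and $y$ (after shift), so its marker status is preserved. To synchronize the block sequences beyond the core zone, I invoke Lemma~\ref{lem:consecutive-marker}: conditioned on $h$ being injective on the relevant $8c$-windows, with probability at least $1-(1/2)^{\Omega(m)}$ a marker is placed within the next $O(mc)$ letters on each side of the core, after which the block structures of $\phi(x)$ and $\phi(y)$ coincide exactly.

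The main obstacle is the periodic bookkeeping inside the core zone: tracking how the anchor position, the step $s$, and the ``marker after the final letter of $a$'' migrate when the insertion extends, splits, or destroys a maximally periodic substring, and handling the delicate corner cases where a piece of a split substring falls just below length $8c$ and is absorbed into an adjoining maximally non-periodic substring (possibly perturbing its markers on an $O(c)$ scale, which must still be shown to change only $O(1)$ blocks). Once this local analysis is carried out, the remainder of the proof is a direct adaptation of the permutation case, with Lemma~\ref{lem:consecutive-marker} playing the role that Lemma~\ref{lemmarkerscommonperm} played there.
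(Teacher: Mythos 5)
Your proposal is correct and takes essentially the same route as the paper's proof: localize the disturbance to an $O(c)$-size neighborhood of the insertion (together with any long maximally periodic substring through it), use the fact that interior blocks of a maximally periodic substring are all copies of the canonical lex-smallest-shift pattern block so that only $O(1)$ blocks change there, and re-synchronize in maximally non-periodic territory by finding a common marker within $O(mc)$ letters via Lemma~\ref{lem:consecutive-marker}, converting letters to blocks with Corollary~\ref{corboundlength}. The only cosmetic difference is organizational (your ``core zone'' versus the paper's two cases on whether the periodic substring extends at least $8c$ letters past $x_i$), and your extra left-side invocation of Lemma~\ref{lem:consecutive-marker} is redundant--the paper gets the left side, and in the periodic case also the right side via the marker after the final letter of the periodic substring, deterministically--but harmless.
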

\begin{proof}
Let $i$ be the index such that $y$ is generated from $x$ by adding a
character at index $i$. That is, $x_1 \cdots x_{i-1} = y_1 \cdots
y_{i-1}$ and $x_i x_{i+1} \cdots = y_{i+1} y_{i+2} \cdots$.

We claim that for any $j < i - 16c - c/2$, there will be a marker at
$x_j$ if and only if there is a marker at $y_j$. In particular, $x_j$
belongs to a maximally periodic substring if and only if $y_j$ belongs
to the same maximally periodic substring, except that the periodic
substring containing $y_j$ may be truncated or extended by the
insertion of $y_i$.  (Note that the insertion occurs more than $8c$
characters after position $j$, and thus cannot get rid of the
maximally periodic substring entirely). Moreover, for the same reason,
$x_j$ appears immediately after a maximally periodic substring if and
only if $y_j$ appears immediately after the same maximally periodic
substring. Therefore, for $j < i - 16c - c/2$, $x_j$ will be a marker
due a maximally periodic substring if and only if $y_j$ is a marker
due to a maximally periodic substring. On the other hand, if both
$x_j$ and $y_j$ belong to maximally non-periodic substrings, then the
non-periodic substring containing $x_j$ will extend to position $x_{j
  + 8c + c/2}$ if and only if the non-periodic substring containing
$y_j$ extends to position $y_{j + 8c + c/2}$. (Here, we are using that
the insertion occurs more than $8c$ characters after position $j + 8c
+ c/2$ and thus cannot affect for letters in positions $\le j + 8c +
c/2$ whether they are contained in a maximally non-periodic
substring). Therefore, for $j < i - 16c - c/2$, given that $x_j$ and
$y_j$ are contained in maximally non-repetitive substrings, one will
be a marker if and only if the other is as well.

Now, we claim that the blocks that contain only characters before
$x_{i-16c-c/2 - 1}$ or $y_{i-16c-c/2 - 1}$ are identical and appear in
both $\phi(x)$ and $\phi(y)$. This is because, in general, any edit
which modifies only markers more than one character to the right of a
block will not affect that block.

In order to examine which blocks after the insertion are affected by
the insertion, we consider two cases.

\textbf{Case 1:} $x_i$ does not belong to a maximally periodic
substring that extends until at least $x_{i+8c-1}$. In turn, this
means that $y_{i + 1}$ does not belong to a maximally periodic
substring that extends until at least $y_{i + 8c}$, since otherwise,
that periodic substring would still be present in $x$ beginning at
$x_i$. Therefore, the maximally periodic substrings in $x$ which
intersect $x_{i + 8c - 1}x_{i + 8c}\cdots$ are identical to those in
$y$ which intersect $y_{i + 8c}y_{i + 8c + 1}\cdots$. It follows that
any for $j \ge i + 8c$, $x_j$ will be a marker due a maximally
periodic substring if and only if $y_{j + 1}$ is a marker due to a
maximally periodic substring. Moreover, it follows that for $j \ge i +
8c$, $x_j$ will be a member of a maximally non-periodic substring if
and only if $y_{j + 1}$ is a member of a maximally non-periodic
substring. Therefore, for $j \ge i + 8c + c/2$, $x_j$ will be a marker
in a maximally non-periodic substring if and only if $y_{j + 1}$ is a
marker in a maximally non-periodic substring.

So far we have established that for $j \ge i + 8c + c/2$, $x_j$ is a
marker if and only if $y_{j + 1}$ is a marker. Recalling by assumption
that $h$ is injective on the $8c$-letter substrings of $x$ and $y$, by
Lemma~\ref{lem:consecutive-marker}, with probability at least
$1-\left(\frac{1}{2}\right)^{m-2}$, there will be a marker in the $(m
+ 10) \cdot c$ characters after $x_{i+8c + \frac{c}{2}}$ (if such
characters exist). Assume that this marker is located at $x_j$. Then a
marker will also be placed at $y_{j+1}$, and since all of the markers
to the right of $x_j$ are identical to those to the right of $y_{j +
  1}$, all the blocks that contain the characters starting from $x_j$
or $y_{j+1}$ will be identical in $\phi(x)$ and $\phi(y)$. We get that with
probability at least $1-\left(\frac{1}{2}\right)^{m-2}$, only the
blocks that contain the characters from $x_{i- 16c - c/2 - 1}, \ldots,
x_{i+(m + 18 + 1/2)c}$ or $y_{i - 16c - c/2 - 1}, \ldots ,y_{i+(m + 18 +
  1/2)c+1}$ can differ between $\phi(x)$ and $\phi(y)$. By Corollary
\ref{corboundlength}, there are at most $O(m)$ blocks that need to be
edited in order to transform $\phi(x)$ to $\phi(y)$.

\textbf{Case 2:} $x_i$ belongs to a maximally periodic substring that
extends until at least $x_{i+8c-1}$. By Lemma
\ref{lem:max-periodic-overlap}, there is only a single maximally
periodic substring $u$ containing both $x_i$ and $x_{i + 8c -
  1}$. Because $u$ contains at least $8c$ letters to the right of the
insertion, those $8c$ characters will still be part of a maximally
periodic substring in $y$, and the endpoint of the maximally periodic
substring will be unchanged. Consequently, if we denote by $x_{i^*}$
the last character of $u$, then $y_{i^* + 1}$ will be the last character
in $y$ which is contained in a maximally periodic substring also
containing $y_{i + 1}$. As a result, the maximally periodic substrings
and the maximally non-periodic substrings in $x$ which intersect the
letters to the right of $x_{i^*}$ are identical to those in $y$ which
intersect the letters to the right of $y_{i^* + 1}$. Thus the markers
to the right of $x_{i^*}$ will be identical to those to the right of
$y_{i^* + 1}$. Moreover, there will be a marker at each of $x_{i^* + 1}$
and $y_{i^* + 2}$ because they each marks the end of a maximally
periodic substring. Therefore, the blocks beginning at $x_{i^* + 1}$ will
be identical to those beginning at $y_{i^* + 2}$.

Now we consider letters between $x_i$ and $x_{i^*}$ and the letters
between $y_i$ and $y_{i^*}$ (inclusive). Note that $x_i \cdots x_{i^*}
= y_{i+1} \cdots y_{i^*+1}$, and both are suffixes of maximally
periodic substrings with the same periods as each other. Recall that
$u$ is the maximally periodic substring of which $x_i \cdots x_{i^*}$
is a suffix; and let $v$ denote the maximally periodic substring of
which $y_{i + 1} \cdots y_{i^* + 1}$ is a suffix. By Lemma
\ref{lem:max-periodic-overlap}, the only markers in the substrings
$x_{i + 2c} \cdots x_{i^* - 2c}$ and $y_{i + 2c + 1} \cdots y_{i^* -
  2c + 1}$ are due to $u$ and $v$, respectively (and not some other
periodic substring overlapping $u$ or $v$). Denote by $j_1$ the index
of the first marker placed in $x_{i + 2c} \cdots x_{i^* - 2c}$, and by
$j_1'$ the index of the first marker placed in $y_{i+2c+1} \cdots
y_{i^*- 2c + 1}$. Additionally, denote by $j_2$ the index of the last
marker placed before or at $x_{i^* - 2c}$ and by $j_2'$ the index of
the last marker placed before or at $y_{i^* - 2c +1}$.

Each of the blocks in $x$ between $x_{j_1}$ and $x_{j_2}$ and each
of the blocks in $y$ between $y_{j_1'}$ and $y_{j_2'}$ are
identical.\footnote{In particular, if $p$ is the period of the
  maximally periodic substrings containing the blocks, then the length
  of each block is the smallest multiple $s$ of $p$ satisfying $s \ge
  c$; and the contents of each block is oriented to begin on the
  lexicographically smallest cyclic shift of the $p$-letter substring
  being repeated.} With this in mind, we propose the following
sequence of edits from $\phi(x)$ to $\phi(y)$:
\begin{enumerate}
\item The blocks that contain only letters before $x_{i-16c - c/2 - 1}$ or
  $y_{i-16c-c/2 - 1}$ are identical, and therefore incur no edits.
\item There are $O(c)$ characters in the subwords $x_{i- 16c - c/2 - 1}
  \cdots x_{j_1 - 1}$ and $y_{i- 16c - c/2 - 1} \cdots y_{j_1' - 1}$. By
  Corollary \ref{corboundlength}, the blocks containing characters
  from the subword $x_{i- 16c - c/2 - 1} \cdots x_{j_1 - 1}$ can
  be transformed into the blocks containing characters from the
  subword $y_{i- 16c - c/2 - 1} \cdots y_{j_1' - 1}$ in $O(1)$ operations.
\item The blocks that correspond to $x_{j_1} \cdots x_{j_2 - 1}$ and
  $y_{j_1'} \cdots y_{j_2' - 1}$ are identical, except that one of these
  substrings may include $O(1)$ additional blocks (since the blocks in
  the maximally periodic substring $u$ may appear at different offsets
  than the blocks in $v$). Thus we can map the blocks in these two
  substrings to each other in $O(1)$ operations.
\item There are $O(c)$ characters in the subwords $x_{j_2} \cdots
  x_{i^*}$ and $y_{j_2'} \cdots y_{i^* + 1}$. By Corollary
  \ref{corboundlength}, the blocks containing characters from the
  subword $x_{j_2} \cdots x_{i^*}$ can therefore be transformed
  into the blocks containing characters from the subword $y_{j_2'}
  \cdots y_{i^* + 1}$ in $O(1)$ operations.
\item The blocks beginning at $x_{i^* + 1}$ are identical to those
  beginning at $y_{i^* + 2}$, requiring no additional edits.
\end{enumerate}
The total number of edits needed to transform $\phi(x)$ to $\phi(y)$ in this case is $O(1)$.

In conclusion, in either case, the total number of edits is $O(m)$ (with probability at least
$1-\left(\frac{1}{2}\right)^{m-2}$), as desired.
\end{proof}

We are now prepared to bound the distortion of our dimension-reduction
map $\phi$.
\begin{lem}
  Let $x$ and $y$ be strings. Then for $m \in
  \mathbb{N}$,
  $$\Pr[\ed(\phi(x),\phi(y)) > m\ed(x, y)] \leq
  \left(\frac{1}{2}\right)^{\Omega(m)} + \frac{1}{\poly(n)},$$ where
  $\poly(n)$ is an arbitrarily large polynomial in $n$ of our choice
  controlled by the constant $b$.
  \label{lemboundexpansion}
\end{lem}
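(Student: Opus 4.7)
The plan is to mirror the proof of Lemma \ref{lemboundexpansionperm} almost verbatim, substituting the general single-insertion result (Lemma \ref{lemphiaftersingleinsertion}) for its permutation analogue. First I would form a sequence of strings $x = x^0, x^1, \ldots, x^{2 \ed(x,y)} = y$ in which each consecutive pair is either identical or differs by a single insertion or deletion, by first applying all of the deletions in an optimal edit sequence and then all of the insertions.

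Next I would invoke the no-collisions assumption (Lemma \ref{lemnocollisions}) for the family $\mathcal{H}$: since there are at most $\poly(n)$ relevant $8c$-letter substrings across all of the intermediate strings, by choosing the constant $b$ sufficiently large we can ensure that $h$ is injective on all of these substrings simultaneously with probability at least $1 - \tfrac{1}{4n \poly(n)}$, where $\poly(n)$ is the target polynomial. Conditioning on this event, Lemma \ref{lemphiaftersingleinsertion} applies to each consecutive pair $(x^i, x^{i+1})$ and yields
\[
\Pr\bigl[\ed(\phi(x^i), \phi(x^{i+1})) > \lambda\bigr] \le \left(\tfrac{1}{2}\right)^{\Omega(\lambda)} + \tfrac{1}{4n \poly(n)}
\]
for all $\lambda \in \mathbb{N}$, where the constant hidden in $\Omega$ does not depend on the choice of polynomial.

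Then I would apply Lemma \ref{lemgeometricvars2} with $k = 2\ed(x,y) \le 2n$ and $p = 4n \poly(n)$ to the random variables $X_i = \ed(\phi(x^i), \phi(x^{i+1}))$, obtaining
\[
\Pr\!\left[\sum_{i=0}^{2\ed(x,y)-1} \ed(\phi(x^i), \phi(x^{i+1})) > 2m \cdot \ed(x,y)\right] \le \left(\tfrac{1}{2}\right)^{\Omega(m)} + \tfrac{4\ed(x,y)}{4n \poly(n)} \le \left(\tfrac{1}{2}\right)^{\Omega(m)} + \tfrac{1}{\poly(n)}.
\]
A final application of the triangle inequality $\ed(\phi(x), \phi(y)) \le \sum_i \ed(\phi(x^i), \phi(x^{i+1}))$, together with a rescaling of $m$ by a constant factor, yields the desired bound.

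The main potential obstacle is ensuring that the injectivity event for $h$ on $8c$-letter substrings covers all the intermediate strings simultaneously, not just $x$ and $y$; this is why we need to pick $b$ large enough to beat a union bound over $O(n)$ intermediate strings, each contributing $O(n)$ substrings. Once that is handled, the rest is just a combination of the single-insertion lemma with the geometric-tail lemma, exactly as in the permutation case.
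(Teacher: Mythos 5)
Your proposal is correct and follows essentially the same route as the paper: the paper's proof is literally a one-line reduction to the argument of Lemma \ref{lemboundexpansionperm}, substituting Lemma \ref{lemphiaftersingleinsertion} for Lemma \ref{lemphiaftersingleinsertionperm}, which is exactly the chain (intermediate strings, no-collisions via Lemma \ref{lemnocollisions}, geometric-tail Lemma \ref{lemgeometricvars2}, triangle inequality) you spell out.
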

\begin{proof}
  The proof follows exactly as that of Lemma
  \ref{lemboundexpansionperm} for permutations. The only difference is
  that here Lemma \ref{lemphiaftersingleinsertion} should be used in
  place of \ref{lemphiaftersingleinsertionperm}.
\end{proof}

\subsubsection[Bounding the Run Time by O(n log c)]{Bounding the Run Time by $O(n \log c)$}\label{secdimensionreductionruntime}

In this section, we discuss how to implement our dimension reduction
algorithm in time $O(n \log c)$.

We first discuss how to identify maximally periodic substrings.  The
following lemma allows us to quickly detect whether a $4c$-letter
string is periodic with a small period.
\begin{lem}
  Let $a = a_1 \cdots a_{4c}$ be a string of length $4c$. We can
  detect whether $a$ is periodic with a period $\le c$ in time
  $O(|a|)$. Additionally, if $a$ is periodic, we return its
  (minimum) period $p$.
  \label{lemdetectperiodic}
\end{lem}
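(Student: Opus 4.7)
The plan is to reduce this to a single computation of the KMP failure function. Recall that for a string $s$ of length $n$, the KMP failure function value $f[n]$ is the length of the longest proper prefix of $s$ that also equals a suffix of $s$. The whole failure array can be computed in $O(n)$ time.

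The key claim is that the minimum period of $s$ (in the sense of the paper's definition) is exactly $n - f[n]$. This is standard and follows from two directions: if $s$ has period $p$, then $s_i = s_{i+p}$ for all valid $i$ means the prefix $s[1{:}n-p]$ matches the suffix $s[p+1{:}n]$, so $f[n] \ge n - p$; conversely, if $f[n] = k$ then $s[1{:}k] = s[n-k+1{:}n]$ translates directly into the periodicity relation $s_i = s_{i+(n-k)}$ for $i \in \{1, \ldots, k\}$, showing $s$ has period $n - k$. Taking the minimum over both inequalities identifies $n - f[n]$ as the minimum period.

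Given this, the algorithm for our $4c$-letter string $a$ is immediate: compute $f[4c]$ in time $O(|a|)$ using the standard KMP preprocessing, let $p = 4c - f[4c]$, and return $p$ if $p \le c$; otherwise declare that $a$ has no period of length at most $c$. Correctness follows because any period $q \le c$ would imply the minimum period is also at most $c$, and because the minimum period is exactly what we computed. The total running time is $O(|a|)$ as required.

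The only conceptual subtlety is the correspondence between the paper's definition of ``$p$-periodic'' (which allows the last block to be truncated, i.e., only requires $w_i = w_{i+p}$ for $i \le n-p$) and the failure-function characterization; but as noted above, the two directions of the equivalence go through cleanly under this convention, so there is no real obstacle. The main thing to be careful about in writing out the full proof is stating the failure-function/minimum-period equivalence precisely enough that the reader sees the periodicity convention matches.
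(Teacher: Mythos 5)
Your proof is correct, but it takes a different route from the paper. You reduce the problem to one computation of the KMP failure function and invoke the standard fact that the minimum period of a length-$n$ string equals $n - f[n]$, where $f[n]$ is the length of its longest proper border; both directions of that equivalence are stated correctly and do match the paper's convention of periodicity ($w_i = w_{i+p}$ for $i \le |w|-p$, with no requirement that the period divide the length). The paper instead builds the suffix array of $a$ in linear time and argues that, when the minimum period is $p$, the suffix $a_{p+1}\cdots a_{4c}$ (which equals the prefix $a_1\cdots a_{4c-p}$) is the immediate predecessor of the whole string in the suffix array; it then reads off the candidate $p$ from that neighboring entry and explicitly verifies $p$-periodicity and $p \le c$ before returning. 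Your approach is the more elementary of the two: failure-function preprocessing is simpler than linear-time suffix array construction, and because the border characterization gives the minimum period exactly, no separate verification pass is needed (though adding one costs nothing and makes the correctness argument self-contained). The paper's suffix-array route has the mild advantage of reusing machinery ($\gcd$-periodicity via Lemma~\ref{lemgcdperiodic}) already present in the paper, but either argument establishes the lemma with the claimed $O(|a|)$ running time.
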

\begin{proof}
  Consider the suffix array $A$ of $a$. The $i$-th entry of $A$ is the
  $i$-th lexicographically smallest suffix of $a$. It is well known
  that $A$ can be constructed in time $O(|a|)$
  \cite{farach1997optimal}.

  Suppose $a$ is periodic with minimum period $p \le c$. Then
  $$a_1 \cdots a_{4c - p} = a_{p + 1} \cdots a_{4c}.$$ Moreover, we
  claim that, out of all the proper suffixes of $a$, $a_{p + 1} \cdots
  a_{4c}$ has the longest common prefix with $a$. Indeed suppose there
  is some $1 \le q < p$ such that $a_1 \cdots a_{4c - p} = a_{q + 1} \cdots
  a_{4c - p + q }$. Then the first $4c - p$ letters of $a$ are both
  $p$-periodic and $q$-periodic; by Lemma \ref{lemgcdperiodic}, since
  $4c - p \ge p + q$, it follows that the first $4c - p$ letters of
  $a$ are $\gcd(p, q)$-periodic. Since all of $a$ is $p$-periodic, it
  follows that all of $a$ is $\gcd(p, q)$-periodic, contradicting that
  $p$ is the minimum period of $a$. Thus no such $q$ exists, meaning
  that out of all the proper suffixes of $a$, $a_{p + 1} \cdots
  a_{4c}$ has the longest common prefix with $a$. It follows that
  $a_{p + 1} \cdots a_{4c}$ will appear adjacently to $a_1a_2 \cdots
  a_{4c}$ in the suffix array $A$ (and will, in fact, appear right
  before $a_1a_2 \cdots a_{4c}$).

  Thus, in the event that $a$ is periodic with minimum period $p \le
  c$, we can determine $p$ by examining the element in $A$ preceding
  $a_1a_2 \cdots a_{4c}$. Therefore, the following algorithm can be
  used to determine whether $a$ has minimum period $\le c$: Define $p
  + 1$ to be the index in $a$ of the first letter of the element of
  $A$ preceding $a_1a_2 \cdots a_{4c}$; check whether $p \le c$ and
  whether $a$ is $p$-periodic; if $a$ is, return $p$, and otherwise
  return that $a$ is not periodic with period $\le c$.
\end{proof}

Armed with the preceding lemma, we can identify the maximally
periodic substrings of $w$ in time $O(n)$.

\begin{prop}
  The maximally periodic substrings of $w$ can be identified in time $O(n)$.
  \label{lemfastmaximallyperiodicdetect}
\end{prop}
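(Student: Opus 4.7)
The plan is to scan $w$ left to right with $4c$-letter windows placed at starting positions $1, 2c+1, 4c+1, \ldots$ (spacing $2c$), applying Lemma~\ref{lemdetectperiodic} to each window in time $O(c)$ to test periodicity with period $\le c$ and, if periodic, to recover its minimum period $p$. Whenever a window centered at position $\ell$ is found $p$-periodic, I extend greedily to the left and to the right by checking $w_{i}=w_{i+p}$ one letter at a time until $p$-periodicity fails, obtaining a substring $[a,b]$. If $b-a+1 \ge 8c$, I record $[a,b]$ as a maximally periodic substring and skip the scan ahead to the first window position strictly larger than $b$; otherwise I simply advance to the next scan position.

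For correctness, two things need checking. First, every maximally periodic substring $\alpha$ is hit by some window: since $|\alpha|\ge 8c$, the set of $4c$-letter windows fitting entirely inside $\alpha$ corresponds to an interval of starting positions of length $\ge 4c+1$, which contains at least two $2c$-spaced scan positions. Combined with Lemma~\ref{lem:max-periodic-overlap}, this guarantees we still hit $\alpha$ after the ``skip past $b$'' pruning: any undetected maximally periodic substring overlapping a detected $[a,b]$ must begin at position $>b-2c+1$ and therefore extend at least $6c+1$ letters past $b$, leaving room for a subsequent $2c$-spaced scan window to lie entirely inside it. Second, the extended substring $[a,b]$ really is a maximally periodic substring: the window's minimum period equals the minimum period $p$ of $[a,b]$, because any smaller period $p'<p$ of $[a,b]$ would, by Lemma~\ref{lemgcdperiodic} applied to the window, force the window to have period $\gcd(p,p')<p$, contradicting Lemma~\ref{lemdetectperiodic}'s return value.

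The trickiest point is verifying that the $p$-periodic extension cannot be extended further into a genuinely periodic substring of period $q\le c$ with $q\neq p$. Suppose $[a,b+1]$ is $q$-periodic for some $q\le c$. Since $[a,b]$ has length $\ge 8c\ge p+q$ and is both $p$-periodic and $q$-periodic, Lemma~\ref{lemgcdperiodic} gives $[a,b]$ period $\gcd(p,q)$; minimality of $p$ forces $\gcd(p,q)=p$, i.e.\ $p\mid q$. But then $q$-periodicity of $[a,b+1]$ yields $w_{b+1}=w_{b+1-q}$, while iterating $p$-periodicity inside $[a,b]$ yields $w_{b+1-q}=w_{b+1-p}$, so $w_{b+1}=w_{b+1-p}$, contradicting the fact that we stopped extending. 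The symmetric argument rules out left-extension, so $[a,b]$ is indeed maximally periodic.

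For the run time, the number of scan windows is $O(n/c)$ and each costs $O(c)$ by Lemma~\ref{lemdetectperiodic}, contributing $O(n)$. The extension work is bounded by the total length of all recorded maximally periodic substrings; by Lemma~\ref{lem:max-periodic-overlap} consecutive recorded substrings (sorted by start) overlap in fewer than $2c$ letters and each has length at least $8c$, which implies their start positions are spaced $\Omega(c)$ apart and their lengths sum to $O(n)$. Hence the total time is $O(n)$. The main obstacle is the Fine--Wilf style coverage/maximality analysis in the previous paragraph; once that is nailed down, the bookkeeping of window spacing and the run-time accounting are routine.
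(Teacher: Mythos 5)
Your proof is correct and follows essentially the same approach as the paper's: test $O(n/c)$ windows of length $4c$ with the suffix-array periodicity test of Lemma~\ref{lemdetectperiodic}, greedily extend with the detected period, record extensions of length $\ge 8c$, skip past recorded substrings, and charge the work as $O(c)$ per window plus the total length of recorded substrings (controlled via Lemma~\ref{lem:max-periodic-overlap}). The only differences are cosmetic — you use overlapping $2c$-spaced windows where the paper uses disjoint $4c$-letter chunks — and you actually spell out the Fine--Wilf-style maximality and coverage-after-skip checks that the paper leaves implicit.
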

\begin{proof}
  We identify the maximally periodic substrings as follows. Break $w$
  into disjoint chunks $u_1, \ldots, u_m$ of size $4c$ (ignoring up to
  $4c - 1$ letters at the end). Because maximally periodic substrings
  are length $\ge 8c$, each maximally periodic substring must entirely
  contain some chunk $u_i$. Therefore, we can detect the maximally
  periodic substrings as follows. For $i$ from $1$ to $m$, determine
  whether $u_i$ is periodic with period $\le c$, and if it is let $p$
  be its period; we can do this in time $O(|u_i|) = O(c)$ per $u_i$ by
  Lemma \ref{lemdetectperiodic}. If $u_i$ is part of a larger
  maximally periodic substring, then that substring must also have
  period $p$. Thus we can detect whether $u_i$ is part of a maximally
  periodic string by extending $u_i$ to the left and right as far as
  we can go while still maintaining period $p$ (This is easy to do in
  time proportional to the length of the resulting extended $u_i$). If
  the result is length $\ge 8c$, then we have identified a maximally
  periodic substring. If we identify a maximally periodic substring,
  then we move on to the first $u_j$ with $j > i$ such that $u_j$ is
  not contained in the maximally periodic substring; otherwise, we
  move on to $u_{i + 1}$ and repeat the process.

For each of the $\lfloor n / (4c) \rfloor$ $u_i$'s we spend $O(c)$
work finding the period $p$ of $u_i$ (if there is one). We then either
spend time $O(c)$ extending $u_i$ to a periodic string of length $<
8c$, or we extend $u_i$ to a maximally periodic substring $x$ in time
$O(|x|)$. The total time spent on $u_i$'s which don't yield maximally
periodic substrings is $O(\frac{n}{c} \cdot c) = O(n)$. Since we spend
$O(|x|)$ time successfully finding each maximally periodic substring
$x$, the total time spent extending $u_i$'s to maximally periodic
substrings is $O(n)$. Therefore, the total time spent to detect the
maximally periodic substrings is $O(n)$.
\end{proof}

For each maximally periodic substring $x$ with period $p$, we need to
determine the offset $i \in [p]$ which minimizes $x_ix_{i+1}\cdots
x_{i + p - 1}$ lexicographically.\footnote{Note that this offset is
  unique. In particular, if $x_i \cdots x_{i + p - 1} = x_j \cdots
  x_{j + p - 1}$ for $i < j \le p$ then it follows that $x_i \cdots
  x_{j + p - 1}$ is $(j - i)$-periodic. By Lemma \ref{lemgcdperiodic},
  we get that $x_i \cdots x_{j + p - 1}$ is $\gcd(p, j -
  i)$-periodic. Thus $x$ is $\gcd(p, j - i)$-periodic, a
  contradiction.} This can be performed in time $O(c)$ by building the
suffix array $A$ for the substring $x_1x_{2} \cdots x_{2p - 1}$, and
then comparing the positions within the suffix array of $x_ix_{i + 1}
\cdots x_{2p - 1}$ for each $i \in [1, p]$. Thus we can complete the
third step of the algorithm for all of the maximally periodic
substrings in time $O(n)$.

So far we have explained how to determine the markers in maximally
periodic strings in time $O(n)$. In order to show that the entire
algorithm can be performed in time $O(n \log c)$, we must show how to
efficiently determine the markers within maximally non-periodic
strings. The main difficulty in this is computing the hashes of each
of the $8c$-letter substrings. In particular, once these are computed,
the markers can be determined using the Minimum on a Sliding Window
Algorithm as in the proof of Lemma \ref{lempermdimredruntime}.

Therefore, we devote the remainder of the section to constructing a
family $\mathcal{H}$ of hash functions from $\Sigma^{8c}$ to $\{0,
1\}^{b \log n}$ such that with high probability $\mathcal{H}$ behaves
as $n$-wise independent (that is, for any set $A$ of size $n$, with
high probability the family $\mathcal{H}$ is independent on $A$), and
such that $h \in \mathcal{H}$ can be evaluated on every $8c$-letter
substring of $w$ in total time $O(n \log c)$.

Recall from \cite{amazinghash} that in time $O(n)$ one can construct a
family $\mathcal{S}$ of hash functions from $\{0, 1\}^{\Theta(\log
  n)}$ to $\{0, 1\}^{b \log n}$ such that with high probability,
$\mathcal{S}$ behaves as $n$-wise independent, and such that $s \in
\mathcal{S}$ can be evaluated in constant time. It therefore suffices
to construct a family $\mathcal{G}$ of hash functions mapping
$\Sigma^{8c}$ to $\{0, 1\}^{\Theta(\log n)}$ so that with high
probability, $g \in \mathcal{G}$ is injective on the $8c$-letter
substrings of $w$. Given such a family $\mathcal{G}$, we can then
define $h = g \circ s$ in order to obtain a family $\mathcal{H}$ which
behaves as $n$-wise independent from $\Sigma^{8c}$ to $\{0, 1\}^{b
  \log n}$ with high probability.

For convenience, we will assume that $8c$ is a power of two, though
the argument can easily be adapted to the more general
setting.\footnote{This is done by first hashing the windows of size
  $2^{\lfloor \log 8c \rfloor}$, and then computing a hash of each
  window of size $8c$ by hashing together the hash for the overlapping
  windows of size $2^{\lfloor \log 8c \rfloor}$ which constitute the
  larger window of size $8c$.} Our construction relies on the idea of
a Merkle tree. An element $g$ of $\mathcal{G}$ is determined by $\log
8c + 1$ pairwise independent hash functions $f_0, f_1, f_2, \ldots,
f_{\log 8c}$, with each $f_i$ mapping $2 r \log n$ bits to $r \log n$
bits (where $r$ is a constant of our choice). Given these $f_i$
functions, we recursively define a function $g^*$ such that for a
string $a = a_1 \cdots a_{2^k}$ (with $2^k \le 8c$) we have that
$g^*(a) = f_k(g^*(a_1 \cdots a_{2^{k - 1}}), g^*(a_{2^{k - 1} + 1}
\cdots a_{2^{k}}))$ when $k \ge 1$ and by $g^*(a) = f_0(a_1)$ when $k
= 0$. Finally, we define our function $g$ by $g(a) = g^*(a)$ for
strings of length $8c$. The next lemma establishes that $g \in
\mathcal{G}$ is injective on the $8c$-letter substrings of $w$ with
high probability, as desired.
\begin{lem}
  Suppose $r$ is sufficiently large (remember each $f_i$ maps to $r
  \log n$ bits), and let $g$ be selected at random from the family
  $\mathcal{G}$ described above. Then $g$ is injective on the
  $8c$-letter substrings of $w$ with probability $1 -
  \frac{1}{\poly(n)}$.
\end{lem}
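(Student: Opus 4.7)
The plan is to show that a collision $g(a) = g(b)$ for two distinct $8c$-letter substrings $a, b$ of $w$ forces a collision somewhere inside the Merkle tree, and then to union-bound over all such potential collisions using the pairwise independence of the $f_k$'s.

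First, for each level $k \in \{0, 1, \ldots, \log 8c\}$, I would define the event $E_k$ that there exist two distinct $2^k$-letter substrings $u \neq u'$ of $w$ satisfying $g^*(u) = g^*(u')$. The key structural observation is that if none of $E_0, E_1, \ldots, E_{\log 8c}$ occur, then $g$ is injective on the $8c$-letter substrings of $w$; indeed $\neg E_{\log 8c}$ is exactly the desired injectivity statement. So it suffices to upper bound $\Pr[\bigcup_k E_k]$.

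Next I would bound each $\Pr[E_k \mid \neg E_0, \ldots, \neg E_{k-1}]$ using pairwise independence of $f_k$. Conditioned on the absence of lower-level collisions, any two distinct $2^k$-letter substrings $u \neq u'$ must differ in at least one half, and so their halves produce distinct ordered pairs of level-$(k-1)$ hashes; these pairs are then distinct inputs to $f_k$. By pairwise independence of $f_k$, the probability that it maps these two specific inputs to the same $r \log n$-bit string is at most $2^{-r \log n} = n^{-r}$. Since $w$ has at most $n$ substrings of length $2^k$, the number of unordered pairs to consider is at most $\binom{n}{2} \le n^2$, so a union bound gives $\Pr[E_k \mid \neg E_0, \ldots, \neg E_{k-1}] \le n^{2-r}$.

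Finally, a union bound over the $O(\log c)$ levels gives $\Pr[\bigcup_k E_k] \le O(n^{2-r} \log c)$, which is $1/\poly(n)$ for the constant $r$ chosen sufficiently large, and $r$ can be further increased to drive the failure probability below any prescribed polynomial threshold. There is no real obstacle: the only subtle point is the inductive step connecting a level-$k$ collision to distinct pairwise-independent inputs of $f_k$, but this is immediate once one notes that distinct $2^k$-letter strings differ on at least one half and that the absence of lower-level collisions makes the map from $2^{k-1}$-letter substrings to level-$(k-1)$ hashes injective on the relevant set.
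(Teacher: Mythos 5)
Your proposal is correct and is essentially the paper's own argument: both proceed level by level through the Merkle tree, using the fact that a different (pairwise independent) hash function $f_k$ is used at each level so that, conditioned on no collisions at lower levels, distinct substrings feed distinct inputs to $f_k$, and then conclude by union bounds with $r$ chosen large enough. The only difference is bookkeeping (you union over levels with global per-level collision events, while the paper argues per pair of $8c$-letter substrings and then unions over pairs), which does not change the substance.
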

\begin{proof}
   Let $u, v \in \Sigma^{8c}$ be distinct. By Lemma
   \ref{lemnocollisions}, if we select $r$ large enough, then with
   probability at least $1 - \frac{1}{\poly(n)}$ for a polynomial of
   our choice, there will be no collisions in any of the applications
   of hash function $f_0$ when computing $g(u)$ and $g(v)$. The same
   holds for each of the other $f_i$'s. In particular, if there are no
   collisions when we apply $f_{i - 1}$, then $f_i$ will be fed distinct inputs
   and with high probability will also have no collisions (this is why we use a different hash function at each level). By the
   union bound, it follows that with probability $1-
   \frac{1}{\poly(n)}$, $g^*(u) \neq g^*(v)$.

   Applying the union bound again to all pairs $u, v$ of $8c$-letter
   substrings of $w$, it follows that with probability $1 -
   \frac{1}{\poly(n)}$, $g$ is injective on them.
\end{proof}

By defining $\mathcal{G}$ in this way, we are able to quickly compute
all of the $8c$-letter window hashes using dynamic programming.
\begin{thm}
  Let $g \in \mathcal{G}$ be selected at random. Suppose $a_1 \cdots
  a_m$ is a string over an alphabet $\Sigma$. Then we can compute $g(a_i \cdots a_{i + 8c - 1})$ for all $i$ in time $O(m \log c)$.
  \label{thmfastwindow}
\end{thm}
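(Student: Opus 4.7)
My plan is to compute all the window hashes by bottom-up dynamic programming over the levels of the Merkle tree defining $g$. The key observation is that although the Merkle tree for a window starting at position $i$ is different from the Merkle tree for a window starting at position $i+1$, every intermediate hash that either tree needs is of the form $g^*(a_j \cdots a_{j+2^k-1})$ for some position $j$ and some level $k \in \{0,1,\ldots,\log 8c\}$. There are only $O(m \log c)$ such intermediate values in total, and each one can be assembled in constant time from two values at the previous level. So the same table serves every window simultaneously.

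Concretely, define the table $T[i,k] := g^*(a_i \cdots a_{i+2^k-1})$ for every $k \in \{0,1,\ldots,\log 8c\}$ and every $i \in \{1,\ldots,m-2^k+1\}$. First I would fill in the base level $T[i,0] = f_0(a_i)$ for each $i$, taking $O(m)$ time in total. Then, for $k = 1, 2, \ldots, \log 8c$ in turn, I would use the recursion
\begin{equation*}
  T[i,k] = f_k\bigl(T[i,k-1],\, T[i+2^{k-1},k-1]\bigr),
\end{equation*}
which is exactly the definition of $g^*$ on a string of length $2^k$. Each entry at level $k$ costs $O(1)$ (a single evaluation of $f_k$, which is constant time), and level $k$ has at most $m$ entries, so the cost per level is $O(m)$. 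Summing over the $\log 8c + 1 = O(\log c)$ levels yields total time $O(m \log c)$.

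Finally, by definition $g(a_i \cdots a_{i+8c-1}) = g^*(a_i \cdots a_{i+8c-1}) = T[i, \log 8c]$, so reading off the top level of the table produces all the required window hashes. There is no real obstacle here beyond noticing that the intermediate hashes depend only on $(i,k)$ and can therefore be shared across overlapping windows; all the bookkeeping is routine, and memory can be reduced to $O(m)$ by keeping only the current and previous level if desired.
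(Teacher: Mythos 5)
Your proposal is correct and is essentially the same as the paper's proof: the paper also performs $O(\log c)$ bottom-up iterations, computing in the $k$-th iteration the $g^*$ hashes of all (overlapping) length-$2^{k-1}$ substrings from the previous level, for $O(m)$ work per level and $O(m\log c)$ total. Your write-up just spells out the shared table $T[i,k]$ and the recursion $T[i,k]=f_k(T[i,k-1],T[i+2^{k-1},k-1])$ more explicitly.
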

\begin{proof}
  This is a simple application of dynamic programming. In particular,
  we perform $O(\log c)$ iterations. In the $i$-th iteration we
  compute the $g^*$ hashes of all the substrings of length $2^{i - 1}$
  (allowing overlap). Using the results of the previous iteration,
  each iteration takes time $O(m)$. Thus the total run time is $O(m
  \log c)$.
\end{proof}

\subsubsection{Proof of Theorem \ref{thm:main-dim-reduction-embedding}}\label{secproofofthm}

We are now prepared to prove that $\phi$ is a dimension-reduction map
satisfying the properties claimed in Theorem
\ref{thm:main-dim-reduction-embedding}.

\begin{proof}
  Consider two strings $x, y \in \Sigma^{\le n}$. By Corollary
  \ref{corboundlength}, $\phi(x)$ and $\phi(y)$ are lengths $O(|x|/c)$
  and $O(|y|/c)$. By Lemma \ref{lemboundcontraction}, $\ed(x, y) \le
  2c \cdot \ed(\phi(x), \phi(y))$. By Lemma \ref{lemboundexpansion},
  for $m \in \mathbb{N}$, $$\Pr[\ed(\phi(x),\phi(y)) > m\ed(x, y)]
  \leq \left(\frac{1}{2}\right)^{\Omega(m)} + \frac{1}{\poly(n)},$$
  for all $m \in \mathbb{N}$ and for a polynomial $\poly(n)$ of our
  choice. Exactly as in the proof of Theorem \ref{thmdimredforperms},
  as long as we pick $\poly(n)$ sufficiently large, it follows that
  $\E[\ed(\phi(x), \phi(y))] \le O(\ed(x, y))$, and subsequently the
  expected distortion of $\phi$ is $O(c)$. Finally, the discussion in
  Section \ref{secdimensionreductionruntime} establishes that
  $\phi(x)$ and $\phi(y)$ can each be computed in time $O(n \log c)$,
  as desired.
\end{proof}

\subsection{An Approximation Algorithm}\label{secdimredapproxalg}

Prior to the approximation algorithms of
\cite{ApproxSubPolyDistortion} and \cite{ApproxPolyLog}, the state of
the art for approximating edit distance was the algorithm of
\cite{DimensionReduction}, which obtained distortion
$$\tilde{O}\left(\min\left(n^{\frac{1}{3} + \frac{2}{3 \log \log \log
    n}}, \ed(x, y)^{\frac{1}{2} + \frac{1}{\log \log \log n}} \right)
\right).$$

Moreover, prior to our results in Section \ref{secblackbox}, the
algorithm of \cite{DimensionReduction} remained the best approximation
algorithm for the approximate alignment problem.

The approximation algorithm of \cite{DimensionReduction} is built on
top of the dimension-reduction map presented in the same
paper. Consequently, the improvements in dimension-reduction
distortion presented in this section can be used to improve the
approximation algorithm. In particular, the algorithm can be updated
to achieve distortion
$${O}\left(\min\left(n^{\frac{1}{3}}, \ed(x, y)^{\frac{1}{2}}\right)
\right).$$

In order so that our dimension-reduction map $\phi$ can be used by the
algorithm of \cite{DimensionReduction}, one must prove a subtle
property of $\phi$. Because the property may be useful in future work,
we state it here. The proof can be found in Appendix
\ref{secproofoflemapprox}.

\begin{lem}
 Consider $x, y \in \Sigma^{\le n}$. Pick $m \in \mathbb{N}$. Then
 with probability $1- \left(\frac{1}{2}\right)^{\Omega(m)} -
 \frac{1}{\poly(n)}$ (for a polynomial of our choice), there is a
 sequence of at most $m \ed(x, y)$ edits from $\phi(x)$ (with
 parameter $c$) to $\phi(y)$ with the following property. If a block
 $u$ in $\phi(x)$ is not modified by the edits and is mapped to a
 block $v$ in $\phi(y)$, then the start position of block $u$ in $x$
 is within $O(\ed(x, y) + c)$ of the start position of block $v$ in
 $y$.
 \label{lemapprox}
\end{lem}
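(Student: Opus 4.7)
The plan is to augment the proof of Lemma~\ref{lemboundexpansion} by tracking, for every single-character step in the chain, the positional information of any block that survives unmodified. Concretely, I will consider the sequence of intermediate strings $x = x^0, x^1, \ldots, x^{2 \ed(x,y)} = y$ where consecutive pairs differ by one insertion or deletion, and concatenate the explicit edit sequences given in the proof of Lemma~\ref{lemphiaftersingleinsertion} to obtain a transformation $\phi(x) \to \phi(y)$ of length at most $m \cdot \ed(x, y)$. The probability bound $1 - (1/2)^{\Omega(m)} - 1/\poly(n)$ follows by combining Lemma~\ref{lemphiaftersingleinsertion} with Lemma~\ref{lemgeometricvars2}, exactly as in the proof of Lemma~\ref{lemboundexpansion}.

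To establish the positional guarantee, I plan to reinspect the edit sequences in both Case~1 and Case~2 of the proof of Lemma~\ref{lemphiaftersingleinsertion} and verify the following per-step invariant: if a block $u$ is not touched at the $i$-th step, then the start position of its image in $x^{i+1}$ equals the start position of $u$ in $x^i$, shifted by $0$ or $\pm 1$ in agreement with the natural character alignment induced by the single insertion or deletion. In Case~1 this is essentially immediate, since every unmodified block sits entirely to one side of the $O(mc)$-letter affected region and its marker is determined by letters that are not disturbed. In Case~2 the same statement must be checked inside a maximally periodic substring: the insertion extends the substring by one letter at the existing period, preserving both the period $p$ and the lexicographically smallest cyclic shift used to anchor the markers, so the canonical first marker position moves by $0$ or $1$, and all subsequent interior markers move in lockstep.

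Composing the per-step invariant along the chain, an unmodified block $u$ in $\phi(x)$ with start position $s_u$ in $x$ mapped to a block $v$ in $\phi(y)$ with start position $s_v$ in $y$ satisfies $|s_u - s_v| \le 2 \ed(x, y) + O(c)$: at most one unit of shift accumulates per single-character edit, plus a $+O(c)$ slack at the boundary of the affected region where an unchanged block may have jumped by at most one block length (at most $2c - 1$ by the observation preceding Lemma~\ref{lemboundcontraction}).

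The main obstacle will be the bookkeeping in Case~2, specifically verifying that the canonical marker placement inside a maximally periodic substring is invariant under a single insertion or deletion that fits the period, up to the natural $\pm 1$ character shift. The crucial point is that the algorithm's choice of phase (the lexicographically smallest cyclic shift of the period and the multiples of $s$ relative to that anchor) depends only on the period's orbit, not on absolute position, so a one-letter extension of the periodic substring leaves every interior marker at the same letter, shifted by $0$ or $1$ in absolute position. Once this structural fact is established, the composition over $2 \ed(x, y)$ single-edit steps is routine and yields the claimed $O(\ed(x, y) + c)$ bound.
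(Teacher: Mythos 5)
Your high-level plan (walk the chain $x = x^0, \ldots, x^{2\ed(x,y)} = y$, reuse the single-edit analysis, and sum with Lemma~\ref{lemgeometricvars2} as in Lemma~\ref{lemboundexpansion}) matches the paper's structure, and the edit-count and probability bounds go through that way. The gap is exactly at the point you flag as the ``main obstacle,'' and it does not close the way you hope. Your per-step invariant --- that an untouched block's start position shifts by $0$ or $\pm 1$ per single-character edit --- is false inside maximally periodic substrings. Your Case~2 argument assumes the inserted letter ``extends the substring by one letter at the existing period,'' but the inserted character is arbitrary: generically it breaks the maximally periodic substring in two, so the portion to the right of the edit becomes a \emph{new} maximal periodic substring with a \emph{new} anchor (the first occurrence of the lexicographically smallest cyclic shift, which now sits just right of the edit rather than near the old left endpoint). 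Since markers are placed at the anchor plus multiples of $s$ with $s \approx c$, every untouched block in that region can have its start position shifted by up to $\Theta(c)$ in a single step --- this is precisely why Case~2 of Lemma~\ref{lemphiaftersingleinsertion} says the two block sequences are identical ``except that one of these substrings may include $O(1)$ additional blocks (since the blocks \ldots may appear at different offsets).''

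Once the per-step shift is $\Theta(c)$ rather than $O(1)$, your composition strategy only yields a positional drift of $O(c \cdot \ed(x,y))$, which is weaker than the claimed $O(\ed(x,y) + c)$; you would need a separate argument that the $O(c)$ re-anchoring slack does not accumulate across the $2\ed(x,y)$ steps. This is what the paper's proof (Appendix~\ref{secproofoflemapprox}) supplies and your proposal lacks: it never composes per-step alignments of consecutive $\phi(x^i)$'s. Instead it fixes a monotone ordering of the edits, maps untouched letters back to their positions in the original $x$ via $p(\cdot)$, defines a sequence of edits $\phi(x^i) \to \phi(x)$ to be \emph{nearly position preserving} if every untouched block lands within a fixed slack of $2c$ of its original position, and shows via Lemma~\ref{lemapproxalgphiaftersingleinsertion} that the minimum length $d(x^i)$ of such a sequence increases by only $O(m)$ per step with probability $1 - (1/2)^{\Omega(m)}$. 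Because the $2c$ slack is always measured against the original string rather than against the previous intermediate, it never compounds, and the final $O(\ed(x,y)+c)$ bound follows since $p(y_i)$ differs from $i$ by at most $\ed(x,y)$. To repair your proof you would essentially have to reintroduce this kind of absolute (origin-anchored) bookkeeping, at which point you have reconstructed the paper's argument.
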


\appendix

\section{Transforming High Probability into Expected Distortion for Embeddings into Hamming Space}\label{secimplicitresultofCandK}

In this section, we prove Theorem \ref{thmprobtoexpected}, which is
implicitly present in Section 3.5 of \cite{UlamEmbedding}. It takes
any high-probability low-edit-distance-regime embedding from
edit-distance to Hamming space, and turns it into an embedding with
good expected distortion.

We will need the following routine lemma.
\begin{lem}
  Let $j \in \mathbb{N}$ and $k \in \mathbb{R}$ with $k \ge 2$. Then
  $$\frac{j}{4k} \le 1 - (1 - 1/k)^j \le \frac{j}{k}.$$
  \label{lembasicmath}
\end{lem}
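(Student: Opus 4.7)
The plan is to prove the two inequalities separately using standard analytic estimates, with the implicit working range $j \le k$ under which the lower bound is substantive.

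For the upper bound $1 - (1-1/k)^j \le j/k$, I would simply invoke Bernoulli's inequality: for any positive integer $j$ and real $x \ge -1$, $(1+x)^j \ge 1 + jx$. Applied with $x = -1/k$ (which satisfies $x \ge -1/2$ since $k \ge 2$), this yields $(1-1/k)^j \ge 1 - j/k$, and rearranging gives the claimed bound. This direction is unconditional in $j$.

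For the lower bound $j/(4k) \le 1 - (1-1/k)^j$, I would chain two elementary estimates. First, using $\ln(1-t) \le -t$ for $t \in [0,1)$, we have
\[
(1-1/k)^j \;=\; \exp\bigl(j \ln(1 - 1/k)\bigr) \;\le\; \exp(-j/k).
\]
Second, I would verify that $e^{-t} \le 1 - t/4$ for $t \in [0,1]$: setting $g(t) := 1 - t/4 - e^{-t}$, we have $g(0) = 0$, and $g'(t) = e^{-t} - 1/4 > 0$ on $[0,1]$ since $e^{-1} > 1/4$, so $g$ is increasing on $[0,1]$ and hence nonnegative there. Substituting $t := j/k$ and chaining the two inequalities gives $(1-1/k)^j \le 1 - j/(4k)$, i.e., the desired lower bound, whenever $j/k \le 1$.

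There is no real obstacle; the only subtlety is the range of validity. The estimate $e^{-t} \le 1 - t/4$ holds only on a bounded interval, so the chained argument requires $j \le k$. This is the regime relevant to the application of the lemma in Theorem~\ref{thmprobtoexpected}, where $j/(4k)$ will be compared against a probability and is therefore at most $1$; outside this regime the claimed lower bound $j/(4k)$ would exceed $1$ and so would be trivially false, so no further case analysis is needed.
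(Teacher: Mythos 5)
Your argument is correct in the regime it covers, and the lower bound goes by a genuinely different route than the paper's. The upper bound is the same in substance: Bernoulli's inequality is exactly the paper's remark that each multiplication by $(1-1/k)$ subtracts at most $1/k$ from the running product. For the lower bound the paper argues discretely: since $(1-1/k)^k$ is increasing in $k$, one has $(1-1/k)^k \ge 1/4$ for $k \ge 2$, so as long as $i \le k+1$ each multiplication by $(1-1/k)$ subtracts at least $\frac{1}{k}(1-1/k)^{i-1} \ge \frac{1}{4k}$ from the current power, and telescoping gives $(1-1/k)^j \le 1 - \frac{j}{4k}$. You instead pass to the exponential via $(1-1/k)^j \le e^{-j/k}$ and verify $e^{-t} \le 1 - t/4$ on $[0,1]$ by a monotonicity check; both are elementary, yours isolating a clean one-variable inequality while the paper's avoids calculus altogether. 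The restriction you flag is not a defect of your proof relative to the paper's: the paper's telescoping step is likewise only valid for $j \le k+1$, so its proof carries the same implicit assumption, and in the application (Theorem \ref{thmprobtoexpected}) one indeed has $j \le T \cdot F(K) \cdot \ed(x,y) \le T \cdot K \cdot F(K)$, which is precisely the quantity substituted for $k$. The only slip is in your closing remark: for $k < j < 4k$ the quantity $j/(4k)$ does not exceed $1$, so the lower bound is not ``trivially false'' there (it actually fails only when $j$ is near $4k$ or beyond); this does not affect your proof, which correctly restricts to $j \le k$, the range actually needed.
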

\begin{proof}
Because $(1 - 1/k)^k$ is an increasing function in $k$ we must have
$(1 - 1/k)^k \ge (1 - 1/2)^2 = 1/4$. Therefore, for $i \le k + 1$, we have
that $\frac{1}{k} (1 - 1/k)^{i - 1} \ge \frac{1}{4k}$, and thus $(1 -
1/k) \cdot (1 - 1/k)^{i - 1} \le (1 - 1/k)^{i - 1} -
\frac{1}{4k}$. Applying this repeatedly, it follows that $(1 - 1/k)^j
\le 1 - \frac{j}{4k}$. On the other hand, when we compute $(1 -
1/k)^j$, each multiplication by $(1 - 1/k)$ subtracts at most $1/k$
from the previous term, giving that
  $$\frac{j}{4k} \le 1 - (1 - 1/k)^j \le \frac{j}{k}.$$
\end{proof}

We now prove Theorem \ref{thmprobtoexpected}. It is restated below.

\begin{thm}[Restatement of Theorem \ref{thmprobtoexpected}]
Let $S$ be a subset of $\Sigma^{\le n}$. Suppose we have a randomized
embedding $\phi:S \rightarrow \Sigma^m$ such that for all $x, y \in S$
satisfying $\ed(x, y) \le K$,
$$\ed(x, y) \le \frac{1}{T}\ham(\phi(x), \phi(y)) \le F(K) \ed(x, y)$$
with probability at least $1 - \frac{1}{K \cdot F(K)}$ for some
distortion function $F(K)$ with $F(K) \ge 2$, for some $T \ge 1$,
and for all $K \in \mathbb{N}$.

Then there exists $\alpha:S \rightarrow \{0, 1\}$ such that for all
$x, y \in S$ satisfying $\ed(x, y) \le K$,
$$\Omega\left(\ed(x, y)\right) \le K \cdot F(K) \Pr[\alpha(x) \neq
  \alpha(y)] \le O(\ed(x, y)F(K)).$$ In other words, $\alpha$ is an
embedding from edit distance (at most $K$) over $S$ to Hamming distance (scaled) with expected distortion at
most $O(F(K))$.
\end{thm}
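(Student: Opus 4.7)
My plan is to build $\alpha$ by post-composing $\phi$ with a random XOR-projection that turns the Hamming distance $\ham(\phi(x),\phi(y))$ into a single-bit disagreement probability at exactly the right scale. Let $t$ be a parameter to be chosen, morally $t \approx m/(T \cdot K \cdot F(K))$. Sample $t$ coordinate indices $i_1,\ldots,i_t \in [m]$ uniformly at random with replacement and $t$ independent pairwise-independent hashes $h_1,\ldots,h_t : \Sigma \to \{0,1\}$, and define
\[
\alpha(x) \;:=\; \bigoplus_{j=1}^{t} h_j(\phi(x)_{i_j}).
\]
For any fixed realization of $\phi$, writing $d := \ham(\phi(x),\phi(y))$, each tap disagrees with probability $d/(2m)$ independently (a random coordinate hits a differing position with probability $d/m$, and the pairwise-independent hash then flips with probability $1/2$), so the parity rule gives
\[
\Pr[\alpha(x) \neq \alpha(y) \mid \phi] \;=\; \tfrac{1}{2}\bigl(1 - (1 - d/m)^t\bigr).
\]

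Next, I calibrate $t$ so that on the good event that $\phi$ meets its distortion guarantee (probability $\ge 1 - 1/(KF(K)) \ge 1/2$, since $KF(K)\ge 2$), the quantity $dt/m$ lands in the window $[\ed(x,y)/(KF(K)),\, \ed(x,y)/K] \subseteq [0,1]$, where I used $\ed(x,y)\le K$. Applying Lemma~\ref{lembasicmath} with $k = m/d$ and $j = t$ then yields the matching two-sided bound
\[
\Omega\!\left(\frac{\ed(x,y)}{KF(K)}\right) \;\le\; \Pr[\alpha(x)\neq\alpha(y)\mid\phi,\,\text{success}] \;\le\; \frac{\ed(x,y)}{2K}.
\]
The theorem's upper bound follows by the law of total probability: the success event contributes at most $\ed(x,y)/(2K)$, and the failure event contributes at most $1/(KF(K)) \cdot 1/2 \le \ed(x,y)/(2K)$ whenever $\ed(x,y)\ge 1$ (the case $\ed(x,y)=0$ is immediate, since then $\phi(x)=\phi(y)$ on every realization and $\alpha(x)=\alpha(y)$ trivially). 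The lower bound simply drops the failure contribution and uses $\Pr[\text{success}]\ge 1/2$.

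The main obstacle is the boundary regime $m < TKF(K)$, where the natural $t$ would be less than $1$. I handle it by setting $t=1$ and inserting an independent "dilution" coin: with probability $q := m/(TKF(K))$ output the single-tap bit, and otherwise output $0$. This rescales the conditional disagreement probability from $d/(2m)$ to $qd/(2m) = d/(2TKF(K))$, landing in the target window on success while preserving the trivial $\le 1/2$ cap on failure. Everything else is parameter-pushing using Lemma~\ref{lembasicmath} together with the standard parity-XOR identity for independent Bernoulli indicators.
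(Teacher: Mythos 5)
Your proposal is correct and is essentially the paper's construction: a random parity of pairwise-independently hashed coordinates of $\phi(\cdot)$, analyzed via the same success/failure case split (using $\Pr[\text{success}] \ge 1 - \frac{1}{KF(K)} \ge \frac12$), the same exact formula $\frac12\bigl(1-(1-p)^{j}\bigr)$ for the disagreement probability, and the same estimate as Lemma~\ref{lembasicmath}. The only real difference is that you sample a fixed number $t \approx m/(T K F(K))$ of coordinates with replacement instead of including each coordinate independently with probability $\frac{1}{T K F(K)}$ as the paper does, which is why you need the extra dilution coin when $m < T K F(K)$ --- a boundary regime the paper's Bernoulli-inclusion version avoids entirely.
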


\begin{proof}[Proof of Theorem \ref{thmprobtoexpected}]
  For $x \in S$, we define $\alpha(x)$ as follows. First define $w = \phi(x)$. Then, use (pairwise independent) hash
  functions $h_1, \ldots, h_m$ from $\Sigma$ to $\{0, 1\}$ to map $w$
  to a binary string $w' := h_1(w_1) \cdots h_m(w_m)$. Next, select a
  binary string $r \in \{0, 1\}^m$ such that each position in $r$
  takes value one with probability $\frac{1}{T \cdot K \cdot F(K)}$. Finally,
  define
  $$\alpha(x) := \langle r, w' \rangle \bmod 2 = \left(\sum_i r_i w'_i\right) \bmod 2.$$

  Now consider some $y \in \Sigma^n$ which is distinct from $x$ such
  that $\edit(x, y) \le K$. Then, following the same progression as
  for $x$, we define $u := \phi(y)$, $u'$ to be $u$ with its $i$-th
  letter hashed to $\{0, 1\}$ by $h_i$, and $\alpha(y) := \langle r,
  u'\rangle \bmod 2$.

  We wish to analyze $\Pr[\alpha(x) \neq \alpha(y)]$. Let $j$ be the
  number of positions in which $w$ and $u$ differ. There are two
  cases.
  \begin{itemize}
  \item Case 1 is that $\frac{j}{T}$ is not in the range $[\ed(x, y),
    F(K)\ed(x, y)]$. Because $w = \phi(x)$ and $u = \phi(y)$, this
    occurs with probability at most $\frac{1}{K \cdot
      F(K)}$. Therefore, this case affects $\Pr[\alpha(x) \neq
      \alpha(y)]$ by an additive factor of at most $\frac{1}{K \cdot
      F(K)}$.
  \item Case 2 is that $\ed(x, y) \le \frac{j}{T} \le F(K)\ed(x, y)$.

    If all of the ones in $r$ correspond with positions in which $w$
    and $u$ agree, then we are guaranteed that $\alpha(x) = \alpha(y)$. In
    order for this to occur, each of the $j$ positions in which $w$
    and $u$ disagree must be assigned a zero in $r$; this occurs with
    probability $\left(1 - \frac{1}{T \cdot K \cdot F(K)}\right)^j$.

    On the other hand, with probability $1 - \left(1 -
    \frac{1}{T \cdot K \cdot F(K)}\right)^j$, $r$ takes value one in at least one
    position in which $x$ and $y$ disagree. When this happens, we have $\Pr[\alpha(x) = \alpha(y)] =
    \frac{1}{2}$.

    Therefore, if we restrict ourselves to Case 2, then
    $$\Pr[\alpha(x) \neq \alpha(y)] = \left( 1 - \left(1 -
    \frac{1}{T \cdot K \cdot F(K)}\right)^j\right)\frac{1}{2}.$$

    By Lemma \ref{lembasicmath}, this implies
    $$\frac{1}{8} \cdot \frac{j}{T \cdot K \cdot F(K)} \le \Pr[\alpha(x) \neq \alpha(y)]
    \le \frac{1}{2} \cdot \frac{j}{T \cdot K \cdot F(K)}.$$

    Because $j$ is between $T \cdot \ed(x, y)$ and $T \cdot F(K) \cdot \ed(x, y)$ (inclusive),
    \begin{equation}
      \frac{1}{8} \cdot \frac{\ed(x, y)}{K \cdot F(K)} \le \Pr[\alpha(x) \neq
        \alpha(y)] \le \frac{1}{2} \cdot \frac{\ed(x, y)}{K}.
      \label{eqsand}
    \end{equation}

  \end{itemize}

  Since Case (2) occurs with probability at least $1 - \frac{1}{K
    \cdot F(K)} \ge \frac{1}{2}$, we can lower bound $\Pr[\alpha(x)
    \neq \alpha(y)]$ by $\frac{1}{2}\Pr[\alpha(x) \neq \alpha(y) \mid
    \text{ Case 2}]$. This give us
  $$\frac{1}{16} \cdot \frac{\ed(x,y)}{K \cdot F(K)} \le \Pr[\alpha(x) \neq \alpha(y)].$$

  Since Case (1) occurs with probability at most $\frac{1}{K \cdot F(K)}$, we
  can apply \eqref{eqsand} to bound $\Pr[\alpha(x) \neq \alpha(y)]$ above by
  $$\Pr[\alpha(x) \neq \alpha(y)] \le \frac{1}{K \cdot F(K)} + \frac{1}{2} \cdot \frac{\ed(x, y)}{K}.$$

  Multiplying our lower and upper bounds by $16K \cdot F(K)$, we get
  $$\edit(x, y) \le 16K \cdot F(K)\Pr[\alpha(x) \neq \alpha(y)] \le 16 + 8 \ed(x, y)F(K),$$
  completing the proof.
\end{proof}

\section{Proof of Lemma \ref{lemapprox}}\label{secproofoflemapprox}

In this section, we present a proof of Lemma \ref{lemapprox}.

Notice that for $x, y \in \Sigma^{\le n}$, there exists a sequence
$e_1, e_2, \ldots, e_{s}$ of edits from $x$ to $y$ such that
$s \le 2 \ed(x, y)$; such that each edit is an insertion or a
deletion; and such that for each $e_i$ none of the edits $e_1, e_2,
\ldots, e_{s}$ take place to the right of $e_i$. Define $x^0 = x$ and
$x^i$ to be $x$ after the first $i$ edits. In particular, $x^{s} =
y$. A letter in $x^i$ is \emph{untouched} if it was not involved in
any of the edits $e_1, \ldots, e_i$. For each untouched letter $l$ in
$x^i$, define $p(l)$ to be the position in $x$ where $l$ originates
(prior to the edits).

Given a sequence $E$ of edits to $\phi(x^i)$, a block in $\phi(x^i)$
is \emph{untouched} if $E$ does not edit it. Moreover, for an
untouched block $a$, we use $p_E(a)$ to denote the position which the
first letter of $a$ is mapped to by the edits in $E$.

A sequence of edits $E$ consisting only of insertions and deletions
from $\phi(x^i)$ to $\phi(x)$ is called \emph{nearly position
  preserving} if for each untouched block $a$ in $\phi(x^i)$ beginning
with an untouched letter $l$, we have that $|p(l) - p_E(a)| \le
2c$. In other words, the edits in $E$ map the block $a$ to a position
in $x$ which is within $2c$ of the position in $x$ to which the edits
$e_1, \ldots, e_i$ map $l$.

Moreover, for simplicity, we require from a nearly position preserving
sequence of edits $E$ that every block in $\phi(x^i)$ beginning with a
touched letter $l$ must be touched by $E$ (though this can be achieved
simply by deleting and re-inserting the block). This is simply so that
every untouched block $a$ is guaranteed to have a first letter $l$ for
which $p(l)$ is defined. Our constructions will satisfy this property
trivially.

Finally, define $d(x^i)$ to be the minimum size of a nearly position
preserving sequence of edits from $\phi(x^i)$ to $\phi(x)$.

\begin{lem}
  Let $m \in \mathbb{N}$ and consider some pair $x^t$ and $x^{t +
    1}$. Condition on the fact that for each $(m + 10) \cdot c$
  consecutive letters in $x^t$, there is some $h_k$ which is injective
  on the $8c$-letter substrings.

  Then, $\Pr[d(x^{t + 1}) - d(x^{t}) > m] \leq
  \left(\frac{1}{2}\right)^{\Omega(m)}$.
  \label{lemapproxalgphiaftersingleinsertion}
\end{lem}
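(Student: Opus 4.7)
The plan is to reduce to the analysis of Lemma~\ref{lemphiaftersingleinsertion}, which already establishes that a single insertion or deletion changes $\phi$ at only $O(m)$ blocks with probability $1-(1/2)^{\Omega(m)}$ under the injectivity conditioning assumed here. The current lemma asks for more: not just a short edit sequence between $\phi(x^{t+1})$ and $\phi(x^t)$, but one that, when composed with an optimal nearly position preserving sequence from $\phi(x^t)$ to $\phi(x)$, is itself nearly position preserving. The key observation is that the proof of Lemma~\ref{lemphiaftersingleinsertion} actually produces a stronger structural claim than an edit-distance bound: on its high-probability event, the $O(m)$ modifications are confined to an $O(mc)$-letter window around the edit site, and every block lying entirely outside this window in $\phi(x^{t+1})$ is literally identical (character-for-character) to a corresponding block in $\phi(x^t)$ whose first letter has the same origin $p(l)$ in $x$ (modulo the global $\pm 1$ position shift of $x^{t+1}$ relative to $x^t$ caused by the insertion or deletion).

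Given this, I would let $F$ be the sequence of $O(m)$ edits from $\phi(x^{t+1})$ to $\phi(x^t)$ extracted from the proof of Lemma~\ref{lemphiaftersingleinsertion}, and let $E^t$ be an optimal nearly position preserving sequence from $\phi(x^t)$ to $\phi(x)$ of length $d(x^t)$. The concatenation of $F$ followed by $E^t$ has total length $d(x^t)+O(m)$ and takes $\phi(x^{t+1})$ to $\phi(x)$. To verify that this concatenation is nearly position preserving, consider any block $a$ in $\phi(x^{t+1})$ untouched by the composed sequence. Then $F$ leaves $a$ alone and maps it to a block $b$ in $\phi(x^t)$ that is identical to $a$ character-for-character, whose first letter shares the origin $p(l)$ in $x$. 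Since $E^t$ also leaves $b$ alone, the nearly position preserving property of $E^t$ gives $|p(l)-p_{E^t}(b)|\le 2c$, and $p_{E^t}(b)$ is exactly the final position to which the composed sequence maps the first letter of $a$. This establishes $d(x^{t+1})\le d(x^t)+O(m)$ on the high-probability event provided by Lemma~\ref{lemphiaftersingleinsertion}, yielding the claimed tail bound after adjusting constants in the $\Omega(m)$.

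The main obstacle is carefully justifying the block-identity claim, particularly in Case~2 of the proof of Lemma~\ref{lemphiaftersingleinsertion}, where the edit occurs deep inside a maximally periodic substring and the block alignment inside the periodic region may shift by a small amount. I would handle this by forcing $F$ to explicitly delete and re-insert every block intersecting the $O(mc)$-letter window around the edit site (including an $O(c)$-letter safety margin on each side of the affected maximally periodic substring); Corollary~\ref{corboundlength} bounds the number of such blocks by $O(m)$, so this only inflates $F$ by a constant factor, and the only blocks remaining untouched by $F$ are those far enough from the edit site that the Case~1 reasoning applies to them verbatim and the block-identity property holds automatically.
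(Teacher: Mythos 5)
Your reduction hinges on the claim that, outside an $O(mc)$-letter window around the edit, every block of $\phi(x^{t+1})$ is identical to a corresponding block of $\phi(x^t)$ whose first letter has the \emph{same} origin in $x$, so that composing your fix-up $F$ with an optimal nearly position preserving sequence $E^t$ costs nothing in positional accuracy. That claim fails in the periodic case (Case~2 of the proof of Lemma~\ref{lemphiaftersingleinsertion}), and your safety-margin fix does not repair it. Inside a maximally periodic substring, markers are placed at $a_t, a_{t+s}, a_{t+2s}, \ldots$, anchored at the left endpoint of the run (via the first occurrence of the lexicographically smallest cyclic shift). An insertion in the middle of a long run moves that anchor for the portion of the run to the right of the edit, so the block boundaries throughout the \emph{entire} right portion of the run --- which may contain $\omega(m)$ blocks, far more than an $O(mc)$ window plus $O(c)$ margins can cover --- are shifted in $\phi(x^{t+1})$ by a fixed nonzero offset (a multiple of $p$, up to roughly $s \le 2c$) relative to $\phi(x^t)$ under the natural letter correspondence. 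Concretely, take $x^t$ to be a long run $aaa\cdots a$ and insert a single $b$ at a position $i \not\equiv 0 \pmod{c}$: every block is the word $a^s$, so $F$ indeed needs only $O(1)$ edits, but no order-preserving matching of the untouched blocks aligns first-letter origins; the unavoidable mismatch is $\Theta(c)$ and it affects every block to the right of the insertion.

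Consequently your composed sequence maps an untouched block whose first letter originates at $p(l)$ only to within $2c + \Theta(c)$ of $p(l)$, not within $2c$, so it is not nearly position preserving and does not witness $d(x^{t+1}) \le d(x^t) + O(m)$. Nor can you simply relax the constant $2c$ in the definition: the composition must be applied at every step $t$, each application adds another $\Theta(c)$ of slack beyond whatever fixed tolerance $E^t$ was optimal for, and over the $\Theta(\ed(x,y))$ steps this degrades the conclusion of Lemma~\ref{lemapprox} from $O(\ed(x,y)+c)$ to $O(c\cdot\ed(x,y))$. The paper's proof avoids composition for exactly this reason: it reuses the optimal sequence $E$ only for the blocks to the left of the edit (where marker progressions are anchored at unchanged left endpoints, so blocks and origins genuinely coincide), and it handles all blocks to the right of the edit by matching them \emph{directly} against $\phi(x)$ --- legitimate because the edits $e_1,\ldots,e_{t+1}$ are ordered left to right, so the suffix letters are untouched --- choosing, inside periodic runs where all blocks are identical, the matching that sends each untouched block to the block of $\phi(x)$ containing its first letter's origin. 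That single expenditure of the $2c$ tolerance is what the definition is designed to absorb; your approach needs to spend it twice.
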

\begin{proof}
  We will consider only the case where $x^{t + 1}$ can be derived from
  $x^t$ by an insertion. The case where $x^{t + 1}$ is obtained from
  $x^t$ by a deletion follows by the exact same reasoning (except with
  some indices adjusted by two to account for the fact that they take
  place after a deletion rather than an insertion).

  We will closely follow the proof of Lemma
  \ref{lemphiaftersingleinsertion} with a few small modifications. Let
  $E$ be an minimum nearly position preserving sequence of insertions
  and deletions transforming $x^t$ to $x$. In particular, $|E| =
  d(x^t)$. Our goal is to show that with probability $1 -
  \frac{1}{2}^{\Omega(m)}$ we can construct a nearly position
  preserving sequence of $|E| + O(m)$ edits from $x^{t + 1}$ to $x$.

  Let $i$ be the index such that $x^{t + 1}$ is generated from $x^t$
  by adding a character at index $i$. That is, $x^t_1 \cdots x^t_{i-1}
  = x^{t + 1}_1 \cdots x^{t +1}_{i-1}$ and $x^t_i x^t_{i+1} \cdots =
  x^{t + 1}_{i+1} x^{t + 1}_{i+2} \cdots$.

  Call a block in $\phi(x^t)$ (resp. $\phi(x^{t + 1})$) a
  \emph{beginning block} if it begins before or at $x^t_i$
  (resp. $x^{t + 1}_{i + 1}$) and an \emph{ending block} if it begins
  after $x^t_i$ (resp. $x^{t + 1}_{i + 1}$). Moreover, we call a block
  in $\phi(x)$ a \emph{beginning block} if it begins before or at
  $x_{p(x^t_i)}$, and an \emph{ending block} if it begins after
    $x_{p(x^t_i)}$.

  Because $E$ is nearly position preserving, any untouched ending
  block $u$ must be mapped by $E_1$ to begin in a position at least
  $p(x^t_i) - 2c$. Therefore, there is a subset $E'$ of the edits of
  $E$ that maps the beginning blocks in $\phi(x^t)$ to a prefix of $x$
  containing at least $x_1 \cdots x_{p(x^t_i) - 2c - 1}$. Moreover,
  because $E$ is nearly position preserving, any untouched beginning
  block $u$ in $\phi(x^t)$ must be mapped by $E$ to begin in a
  position no larger than $p_t(x^t_i) + 2c$. Therefore, there is a
  subset $E''$ of $E'$ mapping the beginning blocks in $\phi(x^t)$ to
  a prefix of $x$ containing at least $x_1 \cdots x_{p(x^t_i) - 2c}$
  but containing no blocks in $x$ which begin after $x_{p(x^t_i) +
    2c}$.\footnote{In particular, Since $E'$ does not map any unedited
    blocks to begin in a position greater than $x_{p(x^t_i) - 2c}$,
    the only way it can result in such blocks is insert them. $E''$
    can therefore be obtained from $E'$ by ignoring such insertions.}

  From the logic in the proof of Lemma
  \ref{lemphiaftersingleinsertion}, the blocks that contain only
  characters before $x^{t}_{i-16c-c/2 - 1}$ in $\phi(x^t)$ are identical to
  those which appear before $x^{t + 1}_{i-16c-c/2 - 1}$ in $\phi(x^{t +
    1})$. By Corollary \ref{corboundlength}, it follows that $O(1)$
  deletions and insertions can be used to transform the beginning
  blocks of $\phi(x^{t + 1})$ to the beginning blocks of
  $\phi(x^t)$. Then using the edits from $E''$, we can transform these
  blocks into a prefix of $x$ containing at least $x_1 \cdots
  x_{p(x^t_i) - 2c}$ but containing no blocks in $x$ which begin after
  $x_{p(x^t_i) + 2c}$. Finally, (because of Corollary
  \ref{corboundlength}) with $O(1)$ additional edits, we can obtain the
  beginning blocks of $x$.

  So far we have shown how to perform a series of $|E| + O(1)$
  insertions and deletions to transform the beginning blocks of
  $x^{t + 1}$ to the beginning blocks of $x^t$ in a way such that any
  untouched blocks have their first letter $l$ mapped to a position
  within $2c$ of $p_{t + 1}(l)$.

  Next we will focus on the ending blocks of $x^{t + 1}$. We remark
  that we will construct a sequence of edits between these blocks and
  the ending blocks of $x$ without any further use of $E$. Just as in
  the proof of Lemma \ref{lemphiaftersingleinsertion}, there are two
  cases to consider.

  \textbf{Case 1: }$x^t_i$ does not belong to a maximally periodic
  substring that extends until at least $x^t_{i+8c-1}$. Following the
  same logic as in Lemma \ref{lemphiaftersingleinsertion}, with
  probability at least $\left(\frac{1}{2}\right)^{m - 2}$, there will
  be some $j$ between $i$ and $i + (m + 18 + 1/2)c$ such that the
  blocks in $\phi(x^{t + 1})$ containing letters only from $x^{t +
    1}_{j + 1}x^{t + 1}_{j + 2} \cdots$ are identical to the blocks of
  $\phi(x^t)$ containing letters only from $x^{t}_jx^t_{j + 1}
  \cdots$. Moreover, because each of the edits from $x^0$ to $x^{t +
    1}$ take place before or at $x^{t + 1}_i$, the same logic can be
  used to conclude that the blocks in $\phi(x^{t + 1})$ containing
  letters only from $x^{t + 1}_{j + 1}x^{t + 1}_{j + 2} \cdots$ are
  identical to the blocks of $\phi(x)$ containing letters only from
  $x_{p(x^{t + 1}_{j + 1})}x_{p(x^{t + 1}_{j + 1}) + 1} \cdots$.

  Notice that $p(x^{t + 1}_{j + 1}) = p(x^{t}_{i}) + (j - i)$ because
  all of the edits from $x^0$ to $x^{t + 1}$ occur prior to $x^{t +
    1}_{i + 1}$, which is the same letter as $x^t_i$. It follows that
  $p_{t + 1}(x^{t + 1}_j) \le p_t(x^t_i) + O(mc)$. Therefore, by Lemma
  \ref{lemphiaftersingleinsertion}, only $O(m)$ edits are needed to
  $\phi(x^{t + 1})$ to remove the ending blocks which contain letters
  prior to $x^{t + 1}_{j + 1}$ and replace them with the ending blocks
  in $x$ which contain letters prior to $x_{p(x^{t + 1}_{j +
      1})}$. Since the remaining ending blocks are the same between
  $x^{t + 1}$ and $x$, this transforms the ending blocks of $\phi(x^{t
    + 1})$ into those of $x$. Combining this with the edits which
  transform the beginning blocks of $x^{t + 1}$ to those $x$, we have
  a nearly position preserving series of at most $|E| + O(m)$ edits
  transforming $x^{t + 1}$ to $x$.

  \textbf{Case 2: }$x^t_i$ belongs to a maximally periodic substring
  that extends until at least $x^{t}_{i+8c-1}$. Denote by $x^t_{i^*}$
  the last character of $x^t$ to be in a maximally periodic substring
  which contains $x^t_i$. Then by the logic from the proof of Lemma
  \ref{lemphiaftersingleinsertion}, the blocks beginning at $x^t_{i^*
    + 1}$ will be identical to those beginning at $x^{t + 1}_{i^* +
    2}$. Moreover, because all of the edits from $x^0$ to $x^{t}$
  occur prior to $x^t_i$, the same logic can be used to conclude that
  the blocks beginning at $x_{p(x^{t + 1}_{i^* + 2})}$ are identical
  to those beginning at $x^{t + 1}_{i^* + 2}$.

  So far we have shown that the beginning blocks of $\phi(x^{t + 1})$
  can be transformed to the beginning blocks of $\phi(x)$ through a
  nearly position preserving sequence of at most $|E| + O(1)$ edits,
  and that the ending blocks of $\phi(x^{t + 1})$ beginning with $x^{t
    + 1}_{i^* + 2}$ are identical to those of $\phi(x)$ beginning with
  $x_{p(x^{t + 1}_{i^* + 2})}$. Therefore, it suffices to show that
  the ending blocks of $\phi(x^{t + 1})$ preceding $x^{t + 1}_{i^* +
    2}$ can be transformed to those in $\phi(x)$ preceding $x_{p(x^{t
      + 1}_{i^* + 2})}$ through a nearly position preserving sequence
  of $O(1)$ edits. Denote by $j_1$ the index of the first marker
  placed in $x_{p(x^{t + 1}_{i + 1}) + 2c} \cdots x_{p(x^{t +
      1}_{i^*}) - 2c}$, and by $j_1'$ the index of the first marker
  placed in $x^{t + 1}_{i+2c+1} \cdots x^{t + 1}_{i^*- 2c +
    1}$. Additionally, denote by $j_2$ the index of the last marker
  placed before or at $x_{p(x^{t + 1}_{i^*}) - 2c}$ and by $j_2'$ the
  index of the last marker placed before or at $x^{t + 1}_{i^* - 2c
    +1}$. By the same logic as from Lemma
  \ref{lemphiaftersingleinsertion}, each of the blocks in $\phi(x)$
  between $x_{j_1}$ and $x_{j_2}$ and each of the blocks in $\phi(x^{t
    + 1})$ between $x^{t + 1}_{j_1'}$ and $x^{t + 1}_{j_2'}$ are
  identical. Moreover, notice because all of the edits $e_1, \ldots,
  e_{t}$ take place before $x^{t + 1}_{i + 1}$, we have that $p(x^{t +
    1}_{i + 1 + r}) = p(x^{t + 1}_{i + 1}) + r$ for all $r \ge 0$. It
  follows that for all but the first and last $O(c)$ letters of $x^{t
    + 1}_{j_1'} \cdots x^{t + 1}_{j_2'}$, each letter $l$ has the
  property that $x_{p(l)}$ appears in $x_{j_1} \cdots x_{j_2}$. Hence,
  by Corollary \ref{corboundlength}, we can perform $O(1)$ edits to
  transform the blocks between $x^{t + 1}_{j_1'}$ and $x^{t +
    1}_{j_2'}$ to the blocks between $x_{j_1}$ and $x_{j_2}$ in a way
  so that any untouched block beginning with some letter $l$ is mapped
  to the block in $x$ containing $x_{p_{t + 1}(l)}$; it follows these
  edits are nearly position preserving.

  By Corollary \ref{corboundlength} all but $O(1)$ of the ending
  blocks in $x^{t + 1}$ appear either between $x^{t + 1}_{j_1'}$ and
  $x^{t + 1}_{j_2'}$ or after $x^{t + 1}_{i^* + 2}$, and all but
  $O(1)$ of the ending blocks in $x$ appear either between $x_{j_1}$
  and $x_{j_2}$ or after $x_{p(x^{t + 1}_{i^* + 2})}$. Therefore, with $O(1)$ additional
  edits, we can eliminate such blocks in $x^{t + 1}$ and replace them
  with the appropriate blocks in $x$. This completes the nearly
  position preserving progression of at most $|E| + O(1)$ edits from
  $x^{t + 1}$ to $x$.
\end{proof}

\begin{lem}
  Suppose $c \le n^{0.4}$. Let $x$ and $y$ be strings. Then for $m \in
  \mathbb{N}$,
  $$\Pr[d(x^s) > m\ed(x, y)] \leq \left(\frac{1}{2}\right)^{\Omega(m)}
  + \frac{1}{\poly(n)},$$ where $\poly(n)$ is an arbitrarily large
  polynomial in $n$ of our choice controlled by the constant $b$.
  \label{lemalmostlemapprox}
\end{lem}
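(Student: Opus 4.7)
The plan is to mimic the structure of the proof of Lemma~\ref{lemboundexpansion} (which itself mirrors Lemma~\ref{lemboundexpansionperm}), using the single-edit bound in Lemma~\ref{lemapproxalgphiaftersingleinsertion} as the building block and combining across edits with the geometric-tail lemmas from Section~\ref{secdimredforperms}. First I would observe that $\phi(x^0)=\phi(x)$, so $d(x^0)=0$, and write the telescoping identity
\[
d(x^s) \;=\; \sum_{t=0}^{s-1}\bigl(d(x^{t+1})-d(x^t)\bigr) \;=:\; \sum_{t=0}^{s-1} D_t,
\]
where $s\le 2\ed(x,y)$ by the choice of the sequence $e_1,\ldots,e_s$.

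Next I would handle the hash-function conditioning needed by Lemma~\ref{lemapproxalgphiaftersingleinsertion}. The total collection of $8c$-letter substrings appearing across all of $x^0,\ldots,x^s$ has size $O(n\cdot s)\in\poly(n)$, and each window of $(m+10)c$ letters in each $x^t$ involves only $\poly(n)$ candidate $8c$-letter substrings. Invoking Lemma~\ref{lemnocollisions} for the family $\mathcal{H}$ with a sufficiently large constant $b$, I get that, with probability $1-\tfrac{1}{\poly(n)}$ for a polynomial of our choice, every $h_k$ is injective on every relevant set of $8c$-letter substrings in every $x^t$ simultaneously. Condition on this event $\mathcal{E}$.

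Under $\mathcal{E}$, Lemma~\ref{lemapproxalgphiaftersingleinsertion} gives, for each $t$,
\[
\Pr[D_t > \lambda\mid\mathcal{E}] \;\le\; \left(\tfrac12\right)^{\Omega(\lambda)} \qquad\text{for all }\lambda\in\mathbb{N}.
\]
Unconditionally this becomes $\Pr[D_t>\lambda]\le (1/2)^{\Omega(\lambda)} + 1/\poly(n)$, which matches the hypothesis of Lemma~\ref{lemgeometricvars2}. Applying that lemma to $D_0,\ldots,D_{s-1}$ with $k=s\le 2\ed(x,y)$ and a sufficiently large $\poly(n)$ controlled by $b$, I obtain
\[
\Pr\!\left[\sum_{t=0}^{s-1} D_t > \lambda \cdot 2\ed(x,y)\right] \;\le\; \left(\tfrac12\right)^{\Omega(\lambda)} + \frac{4\ed(x,y)}{\poly(n)} \;\le\; \left(\tfrac12\right)^{\Omega(\lambda)} + \frac{1}{\poly(n)}.
\]
Setting $\lambda = m/2$ (and absorbing the constant $2$ into the $\Omega(\cdot)$) yields the claimed tail bound on $d(x^s)$.

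The only mildly subtle point is the uniform conditioning on $\mathcal{E}$, since Lemma~\ref{lemapproxalgphiaftersingleinsertion} is stated conditionally on an injectivity event that differs from one $x^t$ to the next. I would handle this once and for all by choosing $b$ large enough (using the hypothesis $c\le n^{0.4}$ to keep the total number of distinct $8c$-letter substrings polynomial in $n$) so that a single application of the no-collisions assumption suffices for every level and every window simultaneously; after that the rest of the argument is a routine reduction to Lemma~\ref{lemgeometricvars2}, exactly as in the proof of Lemma~\ref{lemboundexpansion}.
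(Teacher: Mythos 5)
Your proposal is correct and is essentially the paper's own argument: the paper proves this lemma by telescoping $d(x^s)=\sum_t\bigl(d(x^{t+1})-d(x^t)\bigr)$ and rerunning the proof of Lemma~\ref{lemboundexpansion} (itself Lemma~\ref{lemboundexpansionperm}) with Lemma~\ref{lemapproxalgphiaftersingleinsertion} in place of Lemma~\ref{lemphiaftersingleinsertion}, i.e., conditioning on injectivity via Lemma~\ref{lemnocollisions} and summing the increments with Lemma~\ref{lemgeometricvars2}, exactly as you spell out. The only quibble is your appeal to $c\le n^{0.4}$ to keep the number of $8c$-letter substrings polynomial (it is at most $n$ per string regardless), but this does not affect the correctness of the argument.
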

\begin{proof}
  It suffices to show that
    $$\Pr\left[\sum_{i = 0}^{s - 1} \left(d(x^{i + 1}) - d(x^{i})\right) \le m\ed(x, y)\right]
  \leq \left(\frac{1}{2}\right)^{\Omega(m)} + \frac{1}{\poly(n)}.$$

  Lemma \ref{lemapproxalgphiaftersingleinsertion} tells us that for
  each $i \le s$, if we condition on the fact that each $(m + 9) \cdot
  c$ consecutive letters in $x^i$ has some $h_k$ which is injective on
  its $8c$-letter substrings, then $\Pr[d(x^{i + 1}) - d(x^{i}) > m]
  \leq \left(\frac{1}{2}\right)^{\Omega(m)}$. Notice that, in
  comparison, Lemma \ref{lemphiaftersingleinsertion} tells us that for
  under the same conditions, $\Pr[\ed(x^i, x^{i + 1}) > m] \leq
  \left(\frac{1}{2}\right)^{\Omega(m)}$. Thus we can use the same
  proof as was used for Lemma \ref{lemboundexpansion}, except with
  Lemma \ref{lemapproxalgphiaftersingleinsertion} replacing Lemma
  \ref{lemphiaftersingleinsertion}. In particular, whereas the proof
  previously could be used to obtain the bound
  $$\Pr \left[\sum_{i = 0}^{s - 1}\ed(x^i, x^{i + 1}) \ge m \cdot
    \ed(x, y) \right] \le \frac{1}{2^{\Omega(m)}} + \frac{1}{\poly(n)},$$
  the same proof can now be used to provide the bound
    $$\Pr\left[\sum_{i = 0}^{s - 1} \left(d(x^{i + 1}) - d(x^{i})\right)
    \ge m\ed(x, y)\right] \le\left(\frac{1}{2}\right)^{\Omega(m)} +
  \frac{1}{\poly(n)},$$ as desired.
\end{proof}

We are now in a position to prove Lemma \ref{lemapprox}.
\begin{lem}[Restatement of Lemma \ref{lemapprox}]
 Consider $x, y \in \Sigma^{\le n}$. Pick $m \in \mathbb{N}$. Then
 with probability $1- \left(\frac{1}{2}\right)^{\Omega(m)} -
 \frac{1}{\poly(n)}$ (for a polynomial of our choice), there is a
 sequence of at most $m \ed(x, y)$ edits from $\phi(x)$ (with
 parameter $c$) to $\phi(y)$ with the following property. If a block
 $u$ in $\phi(x)$ is not modified by the edits and is mapped to a
 block $v$ in $\phi(y)$, then the start position of block $u$ in $x$
 is within $O(\ed(x, y) + c)$ of the start position of block $v$ in
 $y$.
\end{lem}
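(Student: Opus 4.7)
The plan is to invoke Lemma \ref{lemalmostlemapprox} and then reverse the edit sequence it supplies. With probability at least $1 - (1/2)^{\Omega(m)} - 1/\poly(n)$, Lemma \ref{lemalmostlemapprox} yields a nearly position preserving sequence $E$ of at most $m \ed(x, y)$ insertions and deletions transforming $\phi(x^s) = \phi(y)$ into $\phi(x)$. I take the reverse $E^{-1}$ (each insertion swapped for a deletion and vice versa) as the required sequence of edits from $\phi(x)$ to $\phi(y)$; it has the same length, and the untouched blocks of $\phi(x)$ under $E^{-1}$ are in natural bijection with the untouched blocks of $\phi(y)$ under $E$.

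For the position bound, suppose $E^{-1}$ leaves some block $u$ of $\phi(x)$ untouched and maps it to $v$ in $\phi(y)$; equivalently, $E$ leaves $v$ untouched and maps it to $u$. Let $l$ denote the first letter of $v$, and let $p(l)$ and $q(l)$ be the positions of $l$ in $x$ and in $y = x^s$ respectively. Because $v$ begins with $l$ and is untouched by $E$, the letter $l$ is itself untouched by $e_1, \ldots, e_s$, so $l$ appears in both strings. By the definition of $p_E$, the start position of $u$ in $x$ is $p_E(v)$, while by construction the start position of $v$ in $y$ is $q(l)$. Nearly position preserving gives $|p(l) - p_E(v)| \le 2c$, and untouchedness under $e_1, \ldots, e_s$ gives $|p(l) - q(l)| \le s \le 2 \ed(x, y)$ since each of the $s$ insertion/deletion operations can shift $l$'s position by at most one. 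The triangle inequality then yields $|p_E(v) - q(l)| \le 2c + 2 \ed(x, y) = O(\ed(x, y) + c)$, which is exactly the claimed bound on the discrepancy between the start positions of $u$ in $x$ and $v$ in $y$.

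The substantive work has already been packaged into the very definition of ``nearly position preserving'' edits and the proof of Lemma \ref{lemalmostlemapprox}; the only remaining task is the bookkeeping above. I therefore expect no serious obstacle beyond taking care to distinguish among the three relevant address spaces: positions within $x$, positions within $y$, and indices within the block sequences of $\phi(x)$ and $\phi(y)$.
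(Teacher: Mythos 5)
Your proposal is correct and follows essentially the same route as the paper's proof: invoke Lemma \ref{lemalmostlemapprox} to get a nearly position preserving sequence of at most $m\,\ed(x,y)$ edits between $\phi(y)$ and $\phi(x)$, then combine the $O(c)$ guarantee from the nearly-position-preserving property with the $O(\ed(x,y))$ bound on how far a letter untouched by $e_1,\ldots,e_s$ can drift. Your version is merely a bit more explicit about reversing the edit sequence and about the triangle-inequality bookkeeping, which the paper leaves implicit.
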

\begin{proof}[Proof of Lemma \ref{lemapprox}]
  Lemma \ref{lemalmostlemapprox} establishes that with probability $1-
  \left(\frac{1}{2}\right)^{\Omega(m)} - \frac{1}{\poly(n)}$ there is
  a nearly position preserving sequence of edits from $\phi(y)$ to
  $\phi(x)$. Such a sequence of edits must map each untouched block
  $u$ in $\phi(y)$ starting with some letter $y_i$ to a position in
  $x$ within $O(c)$ of $p(y_i)$. Since $p(y_i)$ can differ from $i$ by
  at most $\ed(x, y)$, it follows that if a block $u$ in $\phi(x)$ is
  not modified by the edits and is mapped to a block $v$ in $\phi(y)$,
  then the start position of block $u$ in $x$ is within $O(\ed(x, y) +
  c)$ of the start position of block $v$ in $y$.
\end{proof}

\section{Omitted Proofs}\label{proofsappendix}

\printproofs

\bibliographystyle{plain}
\bibliography{edit_distance_ref}

\end{document}